\tikzstyle{replaced edge} = [line width=1.5pt,blue!80]
\tikzstyle{vertex}= [draw,circle,inner sep=0.1cm]
\tikzstyle{BigLetterVertex}= [draw,circle,inner sep=0.075cm]
\theoremstyle{plain}
\newtheorem{thm}{Theorem}[section]
\newtheorem{cor}[thm]{Corollary}
\newtheorem{fact}[thm]{Fact}
\newtheorem{lem}[thm]{Lemma}
\newtheorem{Def}[thm]{Definition}
\newtheorem{obs}[thm]{Observation}
\newcommand{\eps}{\varepsilon}
\newcommand{\calS}{\mathcal{S}}
\newcommand{\calB}{\mathcal{B}}
\newcommand{\calP}{\mathcal{P}}
\newcommand{\calT}{\mathcal{T}}
\newcommand{\calM}{\mathcal{M}}
\newcommand{\calD}{\mathcal{D}}
\newcommand{\tO}{\tilde{O}}
\newcommand{\tOmega}{\tilde{\Omega}}
\newcommand{\mc}[1]{\mathcal{#1}}
\newcommand{\reals}{\mathbb{R}}
\newcommand{\integers}{\mathbb{Z}}
\newcommand{\cev}[1]{\reflectbox{\ensuremath{\vec{\reflectbox{\ensuremath{#1}}}}}}
\newcommand{\nth}{\textsuperscript{th}\xspace}
\newcommand{\defeq}{\triangleq}
\newcommand{\primal}{\textsc{ConcurrentFlow}$_\mc M$}
\newcommand{\dual}{\textsc{Cut}$_\mc M$}
\newcommand{\pathto}{{\rightsquigarrow}}
\newcommand{\tl}{{\tilde{\ell}}}
\newcommand{\tI}{{\tilde{I}}}
\newcommand{\at}[1]{^{(#1)}} 
\newcommand{\diam}{{\mathrm{diam}}}
\newcommand{\inst}{{I}}
\newcommand{\agap}{{\alpha_{\mathrm{gap}}}}
\title{Network Coding Gaps for Completion Times of Multiple Unicasts}
\author{Bernhard Haeupler}
\author{David Wajc}
\author{Goran Zuzic}
\affil{Carnegie Mellon University\\
	\{haeupler, dwajc, gzuzic\}@cs.cmu.edu}
\date{\vspace{-1cm}}
\begin{document}
\maketitle

\pagenumbering{gobble}
	\footnotetext{Supported in part by NSF grants CCF-1618280, CCF-1814603, CCF-1527110, NSF CAREER award CCF-1750808, and a Sloan Research Fellowship, as well as a DFINITY scholarship.
	}

\begin{abstract}
We study network coding gaps for the problem of makespan minimization of multiple unicasts. In this problem distinct packets at different nodes in a network need to be delivered to a destination specific to each packet, as fast as possible. The \emph{network coding gap} specifies how much coding packets together in a network can help compared to the more natural approach of routing.

\medskip

While makespan minimization using routing has been intensely studied for the multiple unicasts problem, no bounds on network coding gaps for this problem are known. We develop new techniques which allow us to upper bound the network coding gap for the makespan of $k$ unicasts, proving this gap is at most polylogarithmic in $k$. Complementing this result, we show there exist instances of $k$ unicasts for which this coding gap is polylogarithmic in $k$. Our results also hold for average completion time, and more generally any $\ell_p$ norm of completion times.
\end{abstract}


%
%
%
%
%
%

\newpage
\pagenumbering{arabic}

\section{Introduction}\label{sec:intro}

In this paper we study the natural mathematical abstraction of what is arguably the most common network communication problem: \emph{multiple unicasts}. 
In this problem, distinct packets of different size are at different nodes in a network, and each packet needs to be delivered to a specific destination as fast as possible. That is, minimizing the \emph{makespan}, or the time until all packets are delivered.

\smallskip

All known multiple-unicast solutions employ (fractional) \emph{routing} (also known as store-and-forward protocols), i.e., network nodes potentially subdivide packets and route (sub-)packets to their destination via store and forward operations, while limited by edge capacities.
The problem of makespan minimization of routing has been widely studied over the years. A long line of work \cite{rothvoss2013simpler,peis2011universal,srinivasan2001constant,bertsimas1999asymptotically,koch2009real,busch2004direct,peis2009packet,rabani1996distributed,ostrovsky1997universal,auf1999shortest,leighton1999fast,leighton1994packet,scheideler2006universal}, starting with the seminal work of Leighton, Maggs, and Rao \cite{leighton1994packet}, studies makespan minimization for routing along fixed paths. The study of makespan minimization for routing (with the freedom to pick paths along which to route) resulted in approximately-optimal routing, first for asymptotically-large packet sizes \cite{bertsimas1999asymptotically}, and then for all packet sizes \cite{srinivasan2001constant}. 
%
%

\smallskip

It seems obvious at first that routing packets, as though they were physical commodities, is the only way to solve network communication problems, such as multiple unicasts. Surprisingly, however, results discovered in the 2000s~\cite{ahlswede2000network} suggest that information need not flow through a network like a physical commodity. For example, nodes might not just forward information, but instead send out XORs of received packets. Multiple such XORs or linear combinations can then be recombined at destinations to reconstruct any desired packets. An instructive example is to look at the XOR $C \oplus M$ of two $s$-bit packets, $C$ and $M$. While it is also $s$ bits long, one can use it to reconstruct either all $s$ bits of $C$ or all $s$ bits of $M$, as long as the other packet is given. Such network coding operations are tremendously useful for network communication problems, but they do not have a physical equivalent. Indeed, the $C \oplus M$ packet would correspond to some $s$ ounces of a magic ``caf\'e latte'' liquid  with the property that one can extract either $s$ ounces of milk or $s$ ounces of coffee from it, as long as one has enough of the other liquid already. 
Over the last two decades, many results demonstrating gaps between the power of network coding and routing have been published~(e.g., \cite{ahlswede2000network,wu2005minimum,harvey2006capacity,li2004network,xiahou2014geometric,langberg2009multiple,al2008k,deb2006algebraic,goel2008network,fragouli2008efficient,chekuri2015delay,wang2016sending,haeupler2015simple,haeupler2016analyzing,katti2006xors}). Attempts to build a comprehensive theory explaining what is or is not achievable by going beyond routing have given rise to an entire research area called network information theory.

\smallskip

The question asked in this paper is: 
\begin{center}
	\begin{minipage}{0.93\linewidth}
		\begin{mdframed}[hidealllines=true, backgroundcolor=gray!20]
			``How much faster than routing can network coding be for any multiple-unicast instance?'' 
		\end{mdframed}
	\end{minipage}
\end{center}
In other words, what is the (multiplicative) \emph{network coding gap} for makespan of multiple unicasts. Surprisingly, no general makespan coding gap bounds were known prior to this work. This is in spite of the vast amount of effort invested in understanding routing strategies for this problem, and ample evidence of the benefits of network coding. 

\smallskip 
This question was studied in depth for the special case of \emph{asymptotically-large} packet sizes, otherwise known as \emph{throughput} maximization (e.g., \cite{li2004network,harvey2004comparing,kramer2006edge,adler2006capacity,jain2006capacity,harvey2006capacity,yin2018reduction,li2004network,xiahou2014geometric,langberg2009multiple,al2008k}). Here, the maximum \emph{throughput} of a multiple-unicast instance can be defined as $\sup_{w \rightarrow \infty} w / C(w)$, where $C(w)$ is the makespan of the fastest protocol for the instance after increasing all packet sizes by a factor of $w$ (see \Cref{sec:completion-vs-throughput}). In the throughput setting, no instances are known where coding offers \emph{any} advantage over routing, and this is famously conjectured to be the case for all instances \cite{li2004network,harvey2004comparing}. 
This conjecture, if true, has been proven to have surprising connections to various lower bounds \cite{afshani2019lower,adler2006capacity,farhadi2019lower}. Moreover, by the work \citet{afshani2019lower}, a throughput coding gap of $o(\log k)$ for all multiple-unicast instances with $k$ unicast pairs (\emph{$k$-unicast instances}, for short) would imply explicit \emph{super-linear circuit lower bounds}---a major breakthrough in complexity theory. Such a result is currently out of reach, as the best known upper bound on throughput coding gaps is $O(\log k)$, which follows easily from the same bound on multicommodity flow/sparsest cut gaps \cite{aumann1998log,linial1995geometry}.

\smallskip

In this work we prove makespan coding gaps for the general problem of \emph{arbitrary} packet sizes. In particular, we show that this gap is at most $O(\log^2 k)$ for any $k$-unicast instance (for the most interesting case of similar-sized packet sizes). We note that any coding gap upper bound for this more general setting immediately implies the same bound in the throughput setting (\Cref{sec:completion-vs-throughput}), making our general bound only quadratically larger than the best known bound for the special case of throughput. Complementing our results, we prove that there exist $k$-unicast instances where the network coding gap is $\Omega(\log^c k)$ for some constant $c > 0$. 

\smallskip

To achieve our results we develop novel techniques that might be of independent interest. The need for such new tools is due to makespan minimization for general packet sizes needing to take both source-sink distances as well as congestion issues into account. This is in contrast with the throughput setting, where bounds must only account for congestion, since asymptotically-large packet sizes make distance considerations inconsequential. 
For our more general problem, we must therefore develop approaches that are both congestion- and distance-aware.
One such approach is given by a new combinatorial object we introduce, dubbed the \emph{moving cut}, which allows us to provide a \emph{universally optimal} characterization of the coding makespan. That is, it allows us to obtain tight bounds (up to polylog terms) on the makespan of \emph{any} given multiple-unicast instance. We note that moving cuts can be seen as generalization of prior approaches that were (implicitly) used to prove unconditional lower bounds in distributed computing on specially crafted networks~\citep{dassarma2012distributed,peleg2000near}; the fact they provide a \emph{characterization} on all networks and instances is novel. This underlies our main result---a polylogarithmic upper bound on the makespan coding gap for any multiple-unicast instance.

\subsection{Preliminaries}\label{sec:prelims}
In this section we define the completion-time communication model. We defer the, slightly more general, information-theoretic formalization to \Cref{sec:network-coding-model}.

A \emph{multiple-unicast instance} $\calM=(G,\calS)$ is defined over a communication network, represented by an connected undirected graph $G = (V,E)$ with capacity $c_e\in \integers_{\ge 1}$ for each edge $e$. The $k\defeq|\calS|$ \emph{sessions} of $\calM$ are denoted by $\calS = \{ (s_i, t_i, q_i) \}_{i=1}^k$. Each session consists of source node $s_i$, which wants to transmit a packet to the sink $t_i$, consisting of $q_i \in \integers_{\ge 1}$ sub-packets. Without loss of generality we assume that a uniform sub-packetization is used; i.e., all sub-packets have the same size (think of sub-packets as the underlying data type, e.g., field elements or bits). For brevity, we refer to an instance with $k$ sessions as a \emph{$k$-unicast} instance.

A \emph{protocol} for a multiple-unicast instance is conducted over finitely-many \emph{synchronous time steps}. 
Initially, each source $s_i$ knows its packet, consisting of $d_i$ sub-packets.
At any time step, the protocol instructs each node $v$ to send a different packet along each of its edges $e$. The packet contents are computed with some predetermined function of packets received in prior rounds by $v$ or originating at $v$.
\emph{Network coding protocols} are unrestricted protocols, allowing each node to send out any function of the packets it has received so far.
On the other hand, \emph{routing protocols} are a restricted, only allowing a node to forward sub-packets which it has received so far or that originate at this node. 

We say a protocol for multiple-unicast instance has \emph{completion times} $(T_1,T_2,\dots,T_k)$ if for each $i\in [k]$, after $T_i$ time steps of the protocol the sink $t_i$ can determine the $d_i$-sized packet of its source $s_i$. The complexity
of a protocol is determined by functions $\mc{C} : \mathbb{R}_{\ge 0}^k \to \mathbb{R}_{\ge 0}$ of its completion times. For example, a protocol with completion times $(T_1,T_2,\dots,T_k)$ has \emph{makespan} $\max_{i\in [k]} T_i$ and \emph{average completion time} $(\sum_{i\in [k]}T_i)/k$. Minimizing these measures is a special case of minimizing weighted $\ell_p$ norms of completion time, namely minimizing $(\sum_{i\in [k]} w_i\cdot T_i^p)^{1/p}$ for some $\vec{w}\in \mathbb{R}^k$ and $p\in \mathbb{R}_{\geq 0}$.

%
%

Since coding protocols subsume routing ones, for any function $\mc{C}$ of completion times, and for any multiple-unicast instance, the fastest routing protocol is no faster than the fastest coding protocol. Completion-time coding gaps characterize how much faster the latter is.

\begin{Def}\label{def:coding-gap}(Completion-time coding gaps) For any function $\mc{C}:\mathbb{R}_{\ge 0}^k \rightarrow \mathbb{R}_{\ge 0}$ of completion times, the \emph{network coding gap for $\mc{C}$} for a $k$-unicast instance $\calM = (G,\calS)$ is the ratio of the smallest $\mc{C}$-value of any routing protocol for $\calM$ and the smallest $\mc{C}$-value of any network coding protocol for $\calM$.
\end{Def}

	We note that the multiple-unicast instance problem can be further generalized, so that each edge has both capacity and \emph{delay}, corresponding to the amount of time needed to traverse the edge. This more general problem can be captured by replacing each edge $e$ with a path with unit delays of total length proportional to $e$'s delay. 
	As we show, despite path length being crucially important in characterizing completion times for multiple-unicast instances, this transformation does not affect the worst-case coding gaps, which  are independent of the network size (including after this transformation). We therefore consider only unit-time delays in this paper, without loss of generality.

\subsection{Our Contributions}\label{sec:results}

In this work we show that completion-time coding gaps of multiple unicasts are vastly different from their throughput counterparts, which are conjectured to be trivial (i.e., equal to one). For example, while  
the throughput coding gap is always one for instances with $k=2$ sessions \cite{hu1963multi}, for makespan it is easy to derive instances with $k=2$ sessions and coding gap of $\nicefrac{4}{3}$ (based on the butterfly network).
Having observed that makespan coding gaps can in fact be nontrivial, we proceed to study the potential asymptotic growth of such coding gaps as the network parameters grow.
We show that the makespan coding gap of multiple unicasts with $k$ sessions and packet sizes $\{d_i\}_{i\in [k]}$ is polylogarithmic in the problem parameters, $k$ and $\sum_i d_i / \min_i d_i$, but independent of the network size, $n$. 
The positive part of this result is given by the following theorem.

\begin{center}
	\begin{minipage}{\linewidth}
		\begin{mdframed}[hidealllines=true, backgroundcolor=gray!20]
		\begin{restatable}{thm}{latencyub}\label{wc-gap-ub}
			The network coding gap for makespan of any $k$-unicast instance is at most 
			$$O\left(\log(k) \cdot \log \left(\sum_i d_i / \min_i d_i\right)\right).$$
		\end{restatable}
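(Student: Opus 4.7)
The plan is to bound both the optimal coding makespan $\tau^*_{\mathrm{code}}$ and the optimal routing makespan $\tau^*_{\mathrm{route}}$ for a fixed instance $\calM$ by a common combinatorial quantity---the \emph{moving cut value} $\mu^*(\calM)$ promised in the introduction---and thereby reduce the coding-gap question to a ``cut matches routing'' question. Concretely, I would define a moving cut to be a nonnegative assignment $\ell(e,t)$ of lengths to each edge $e$ at each time $t\in\mathbb{Z}_{\ge 0}$, declared \emph{feasible} when every session's source--sink pair is suitably separated in the time-expanded network in a way that accounts both for the edge capacity $c_e$ and for the demand $d_i$ that must cross. The \emph{moving-cut value} $\mu^*$ is then the smallest $T$ admitting a feasible moving cut supported on the first $T$ time steps. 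The main theorem would follow from two inequalities: (a)~$\tau^*_{\mathrm{code}} \ge \Omega(\mu^*)$ and (b)~$\tau^*_{\mathrm{route}} \le O\!\left(\log k \cdot \log D\right)\cdot \mu^*$, where $D = \sum_i d_i / \min_i d_i$.

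For (a) I would pass to the time-expanded directed graph whose $t$-th layer has capacity-$c_e$ copies of every edge $e$ between layers $t-1$ and $t$, and argue information-theoretically: in any coding protocol of makespan $\tau$, the decoder at $t_i$ must recover the $d_i$ sub-packets from all information crossing any $s_i$--$t_i$ cut in this time-expanded graph, so standard Shannon / cut-set bounds force $\tau \ge \Omega(\mu^*)$. For (b) I would construct a routing protocol using three ingredients: (i) an LP whose primal is a fractional multi-commodity-style routing in the time-expanded graph and whose dual is the moving-cut LP, for which I would prove an $O(\log k)$ flow--cut gap in the spirit of \cite{linial1995geometry,aumann1998log} via a metric embedding that respects both capacity and distance; (ii) a path decomposition of the resulting fractional flow into integral $s_i$--$t_i$ paths whose congestion and dilation are both $O(\log k)\cdot\mu^*$; and (iii) a Leighton--Maggs--Rao style scheduling step converting the paths into an integral routing schedule with makespan $O(\mathrm{congestion}+\mathrm{dilation})$. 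The additional $\log D$ factor comes from bucketing the sessions into $O(\log D)$ geometric classes by $d_i$ and running the construction once per bucket, concatenating the resulting schedules.

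The main obstacle is step (b)(i): unlike in classical multi-commodity flow, the cut is time-varying and must be \emph{simultaneously} congestion- and distance-aware, so the standard $\ell_1$-embedding proof does not transfer verbatim. A natural attack is to first reduce to a static multi-commodity problem on the time-expanded graph and then invoke randomized tree or $\ell_1$ embeddings, while arguing that the time-expansion blow-up only costs an additive polylog in the competitive ratio---crucially not in the network size $n$, since the theorem's bound is $n$-independent. A secondary subtlety is ensuring that after the bucketing, the concatenated routing schedule still respects the integer edge capacities $c_e$, which may require a capacity-scaling preprocessing step before invoking Leighton--Maggs--Rao. Finally, care is needed so that the lower bound in (a) holds against \emph{arbitrary} coding---not just linear or commodity-preserving protocols---which is precisely what makes the moving cut a universally optimal characterization.
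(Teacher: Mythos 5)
Your high-level philosophy---certify makespan on both sides by a cut-like object---matches the paper, and your intuition for the coding lower bound (part~(a)) is morally the same as the paper's Alice/Bob simulation argument for \Cref{cut-implies-lower-bound}. But there are several places where your plan for part~(b) would fail or silently introduces a dependence the theorem cannot have.

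First, your moving cut is time-varying, $\ell(e,t)$, and your LP lives on the time-expanded graph. The paper's moving cut is a \emph{static} integer length function $\ell:E\to\mathbb{Z}_{\ge 1}$ on the original graph, and the relevant LP is the hop-bounded concurrent flow LP \primal$(T)$, also on the original graph. This matters: the time-expanded graph is a \emph{directed} graph whose size grows with $T$ (and hence with $n$), so a na\"ive flow--cut gap on it would give a bound depending on $\log(nT)$---but the theorem's bound must be independent of $n$. Working with the hop-bounded LP on $G$ itself is precisely how the paper avoids paying for the time expansion.

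Second, and more fundamentally, you have not identified the step that actually costs the $\log k$ factor. The dual of \primal$(T)$ only certifies that each \emph{matched} pair $(s_i,t_i)$ is $\ell$-far apart (in the average sense $\sum_i d_i h_i\ge 1$). But \Cref{def:moving-cut} demands $d_\ell(s_i,t_j)\ge T$ for \emph{all} $i,j$---otherwise the coding lower bound of \Cref{cut-implies-lower-bound} simply does not hold (the butterfly example of \Cref{fig:gap-53} shows why matched-pair separation alone is useless against coding). Passing from matched-pair to all-pairs separation requires the metric decomposition in \Cref{lemMetricAllToAll}, proved via padded decompositions (a general-metrics analogue of Arora--Rao--Vazirani). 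Your outline has no such step; ``an $O(\log k)$ flow--cut gap in the spirit of \cite{linial1995geometry,aumann1998log}'' is not a substitute, because those results give $\ell_1$-embeddability bounds for \emph{demand-weighted average} stretch across matched pairs, not uniform all-pairs separation among a constant fraction of terminals.

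Third, your account of the $\log D$ factor (where $D=\sum_i d_i/\min_i d_i$) is different from, and weaker than, the paper's. You bucket sessions geometrically by $d_i$ and concatenate $O(\log D)$ schedules, which would indeed lose a $\log D$ factor in the routing makespan---but it does not by itself produce the needed \emph{coding lower bound}. In the paper, $\log D$ arises inside \Cref{lem:bucketing-lemma}, a continuous bucketing over the \emph{dual} variables $h_i$: it selects a sub-instance $I\subseteq[k]$ with $\min_{i\in I}h_i\ge 1/(\agap\sum_{i\in I}d_i)$, where $\agap=O(\log D)$, which after scaling is exactly what makes the rounded lengths a moving cut of low capacity and large distance. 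The bucketing is part of the lower-bound certificate, not a scheduling preprocessing pass.

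In short, the two missing ideas are: (i) do not time-expand; use a $T$-hop-bounded LP on $G$ so that only $\log k$, not $\log(nT)$, appears, and (ii) convert the dual's matched-pair separation into all-pairs separation via a padded-decomposition argument before you can invoke the moving-cut lower bound. Without (ii), the coding lower bound never engages.
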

		\end{mdframed}
	\end{minipage}
\end{center}
For similarly-sized packets, this bound simplifies to $O(\log^2k)$. For different-sized packets, our proofs and ideas in \cite{plotkin1995improved} imply a coding gap of $O(\log k \cdot \log(nk))$. 
Moreover, our proofs are constructive, yielding for any $k$-unicast instance $\calM$ a routing protocol which is at most $O(\log k \cdot \log (\sum_i d_i / \min_i d_i))$ and $O(\log k\cdot \log (nk))$ times slower than the fastest protocol (of any kind) for $\calM$. We note that our upper bounds imply the same upper bounds for throughput (see \Cref{sec:completion-vs-throughput}). Our bounds thus also nearly match the best coding gap of $O(\log k)$ known for this special case of makespan minimization.

\medskip

On the other hand, we prove that a polylogarithmic gap as in \Cref{wc-gap-ub} is inherent, by providing an infinite family of multiple-unicast instances with unit-sized packets ($d_i=1$ for all $i\in [k]$) exhibiting a polylogarithmic makespan coding gap. 
\begin{center}
	\begin{minipage}{\linewidth}
		\begin{mdframed}[hidealllines=true, backgroundcolor=gray!20]
			\begin{restatable}{thm}{latencylb}\label{wc-gap-lb}
				There exists an absolute constant $c>0$ and 
				an infinite family of $k$-unicast instances whose makespan coding gap is 
				at least $\Omega(\log^{c} k).$
			\end{restatable}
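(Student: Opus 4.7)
The plan is to construct an infinite family of $k$-unicast instances on which coding beats routing by a polylogarithmic factor in makespan. First, I would take as a base gadget a small network with a constant makespan coding gap, such as (a variant of) the undirected butterfly, which yields the $\nicefrac{4}{3}$ gap with two sessions alluded to immediately before the theorem statement. This serves as the building block whose gap we will amplify.

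Second, I would amplify this constant gap to polylogarithmic by hierarchical composition. A natural construction is a balanced binary tree of depth $h = \Theta(\log k)$ whose internal ``edges'' (or nodes) are replaced by copies of the butterfly gadget, with $k = \Theta(2^h)$ source--sink pairs routed so that each session crosses $\Theta(h)$ gadgets along its unique topological route. Intuitively, at each gadget network coding lets two crossing flows share the bottleneck via an XOR, while any routing scheme must either serialize the two flows across that bottleneck or take a long detour; each such gadget therefore forces a constant-factor overhead on routing, and these overheads accumulate to $\Omega(\log^c k)$ for some constant $c > 0$.

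Third, I would prove matching makespan bounds on this construction. For the coding upper bound I would exhibit an explicit linear network coding scheme that performs XOR-style coding inside each gadget and pipelines simultaneously across the tree, so that the overall makespan scales with the depth of the recursion rather than with the number of sessions. For the routing lower bound, the plan is to argue via a telescoping cut-based argument that at each gadget a constant fraction of the sessions traversing it incur an unavoidable constant-additive delay that cannot be amortized across other gadgets, and to sum these delays along a worst-case session's route.

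The hardest step is the routing lower bound: one must rule out \emph{every} possible routing strategy, including those using arbitrary paths, fractional subdivision when it is available, and adaptive schedules. The cleanest way to do this would be to phrase the obstruction in a form dual to the moving cut used for the upper-bound \Cref{wc-gap-ub}, producing a single combinatorial certificate that simultaneously lower-bounds routing makespan and witnesses that coding makespan remains small. Alternatively, one could reduce to a known integer-multicommodity-flow integrality-gap instance on an expander and ``lift'' the congestion gap to a makespan coding-gap by combining it with a distance gadget that ensures coding can track the fractional flow while routing is forced to use integer demands.
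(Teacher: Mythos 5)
Your plan hinges on a tree-shaped stacking of constant-gap gadgets in series, and that step has a fatal flaw: stacking gadgets \emph{additively} does not amplify the makespan ratio. If each of the $h=\Theta(\log k)$ gadgets along a session's route costs $c_{\mathrm{code}}$ time steps under coding and $c_{\mathrm{route}}>c_{\mathrm{code}}$ under routing, then the coding makespan is $\Theta(c_{\mathrm{code}}\cdot h)$ and the routing makespan is $\Theta(c_{\mathrm{route}}\cdot h)$ --- both linear in $h$, so the ratio stays at the constant $c_{\mathrm{route}}/c_{\mathrm{code}}$. Your ``constant-additive delay per gadget, summed along a route'' argument cannot produce a gap that grows with $k$; the same summation applies to coding and routing alike.

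The paper avoids this by taking a genuinely \emph{multiplicative} graph product (adapted from \citet{braverman2017coding}), not a series composition. Given an outer gap instance $\inst_1$ and an inner gap instance $\inst_2$, the product replaces each edge of $\inst_1$ by an entire source--sink session of a fresh copy of $\inst_2$ (interconnected along a high-girth colored bipartite graph). A single edge-traversal step in $\inst_1$ now costs the full makespan of $\inst_2$, so the two makespans compose multiplicatively: coding goes from $a_1,a_2$ to $a_1 a_2$ and routing from $b_1,b_2$ to $b_1 b_2$, squaring the gap. Iterating gives coding makespan $3^{2^i}$, routing makespan $5^{2^i}$, hence gap $(5/3)^{2^i}$, while the instance size grows as $2^{2^{O(2^i)}}$, yielding a $\log^c k$ gap. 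The base gadget is also not the $4/3$-butterfly but the $5/3$ family of \Cref{fig:gap-53}, which comes parameterized by $k$; this is essential because the construction needs an infinite supply of base cases with a tunable ratio of sessions to cut edges, not a single fixed gadget.

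Your instinct to phrase the routing lower bound via a cut certificate is on the right track --- the paper indeed uses \Cref{obs:cut-implies-routing-lb} (delete a small edge set $F$ so that all source--sink distances become large). But you omit the two things that make this work through the recursion: (i) the bipartite interconnection graph must have girth $\ge 2b_1 b_2$ to prevent unexpectedly short detours from forming in the product, and (ii) the designated cut edges of the outer instance must be \emph{replaced by paths}, not by inner sessions, so that the set $F_+$ of cut edges remains small relative to the session count. Neither appears in your plan, and without them the telescoping-cut argument you sketch cannot be carried out; indeed, the paper explicitly notes that naive tensoring of hop-bounded LP duals (a natural dual/moving-cut route, which you also allude to) provably breaks, because a long path in the inner instance can carry arbitrary capacity that the $T$-hop-bounded LP never sees.
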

		\end{mdframed}
	\end{minipage}
\end{center}

Building on our results for makespan we obtain similar results to Theorems \ref{wc-gap-ub} and \ref{wc-gap-lb} for average completion time and more generally for any weighted $\ell_p$ norm of completion times. 

\subsection{Techniques}\label{sec:techniques}

Here we outline the challenges faced and key ideas needed to obtain our results, focusing on makespan.

\subsubsection{Upper Bounding the Coding Gap}
As we wish to bound the ratio between the best makespan of any routing protocol and any coding protocol, we need both upper and lower bounds for these best makespans. As it turns out, upper bounding the best makespan is somewhat easier.
The major technical challenge, and our main contribution, is in deriving lower bounds on the optimal makespan of any given multiple-unicast instance. Most notably, we formalize a technique we refer to as the \emph{moving cut}. Essentially the same technique was used to prove that distributed verification is hard on one \emph{particular} graph that was designed specifically with this technique in mind~\cite{elkin2006unconditional,dassarma2012distributed,peleg2000near}. Strikingly, we show that the moving cut technique gives an almost-tight characterization (up to polylog factors) of the coding makespan for \emph{every} multiple-unicast instance (i.e., it gives \emph{universally} optimal bounds).

We start by considering several prospective techniques to prove that no protocol can solve an instance in fewer than $T$ rounds, and build our way up to the moving cut. For any multiple-unicast instance, $\max_{i \in [k]} \textrm{dist}(s_i, t_i)$, the maximum distance between any source-sink pair, clearly lower bounds the coding makespan. However, this lower bound can be arbitrarily bad since it does not take edge congestion into account; for example, if all source-sink paths pass through one common edge. Similarly, any approach that looks at sparsest cuts in a graph is also bound to fail since it does not take the source-sink distances into account.

Attempting to interpolate between both bounds, one can try to extend this idea by noting that a graph that is ``close'' (in the sense of few deleted edges) to another graph with large source-sink distances must have large makespan for routing protocols. For simplicity, we focus on instances where all capacities and demands are one, i.e., $c_e = 1$ for every edge $e$ and $d_i = 1$ for all $i$, 
which we refer to as \emph{simple} instances. The following simple lemma illustrates such an approach.
\begin{lem}\label{obs:cut-implies-routing-lb}
	Let $\calM = (G,\calS)$ be a simple $k$-unicast instance. Suppose that after deleting some edges $F \subseteq E$, any sink is at distance at least $T$ from its source; i.e., $\forall i\in [k]$ we have $\textrm{dist}_{G\setminus F}(s_i, t_i) \ge T$. Then any routing protocol for $\calM$ has makespan at least $\min\left\{T, k/|F|\right\}$.
\end{lem}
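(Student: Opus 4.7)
The plan is to associate, to each session $i \in [k]$, a walk $W_i$ in $G$ from $s_i$ to $t_i$ whose length (counting edges with multiplicity) is at most the completion time $T_i$ of session $i$, and which records the physical journey of session $i$'s unique sub-packet in the given routing protocol. I would construct $W_i$ by tracing backwards: since routing forbids any mixing of data, the time step at which $t_i$ first learns the sub-packet must correspond to a single incoming transmission across one edge from one neighbor, which itself obtained the sub-packet via an analogous earlier transmission, and so on until we reach $s_i$ at time $0$. Reading this chain forward gives a walk, and because consecutive edges of the walk are used at strictly increasing time steps, $|W_i| \le T_i$.

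I would then split on whether any $W_i$ avoids $F$. If some $W_i$ uses only edges of $E \setminus F$, it is a walk in $G \setminus F$ from $s_i$ to $t_i$, so $T_i \ge |W_i| \ge \mathrm{dist}_{G \setminus F}(s_i,t_i) \ge T$, giving makespan $\ge T$. Otherwise every $W_i$ contains at least one edge $f_i \in F$; pigeonholing the multiset $\{f_i\}_{i \in [k]} \subseteq F$, some $f \in F$ lies on at least $k/|F|$ of the walks. Those $k/|F|$ walks record $k/|F|$ \emph{distinct} sub-packets each traversing $f$ in at least one transmission event, and since $c_f = 1$ each time step can support at most one such event on $f$; hence the makespan is at least $k/|F|$. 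Combining the two cases yields makespan $\ge \min\{T, k/|F|\}$.

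The main obstacle I anticipate is formalizing $W_i$ inside the communication model: one has to make precise the \emph{provenance} of each appearance of a sub-packet at a node, and verify that in a routing protocol this provenance is always a single parent transmission on a single edge at a single earlier time step. This is where the routing restriction is essential; under network coding the provenance would instead be a DAG of incoming transmissions (because outgoing packets can be arbitrary functions of all prior received data), and the pigeonhole step that produces the $k/|F|$ bound would break down. Beyond this bookkeeping, the argument is elementary combinatorics.
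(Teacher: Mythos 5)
Your proof is correct and follows essentially the same two-case argument as the paper: either some source-to-sink path avoids $F$ (forcing length $\ge T$), or every path hits $F$ and pigeonhole on the $|F|$ edges yields congestion $\ge k/|F|$; the paper states this in two sentences while you supply the backward-tracing construction of the witness walks $W_i$, which is exactly the bookkeeping the paper's sketch leaves implicit.
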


\begin{proof}
	For any sets of flow paths between all sinks and source, either (1) all source-sink flow paths contain at least one edge from $F$, incurring a congestion of $k/|F|$ on at least one of these $|F|$ edges, or (2) there is a path not containing any edge from $F$, hence having a hop-length of at least $T$. 
	Either way, any routing protocol must take at least $\min\{T,k/|F|\}$ to route along these paths.
\end{proof}

Perhaps surprisingly, the above bound does not apply to general (i.e., coding) protocols. Consider the instance in \Cref{fig:gap-53}. There, removing the single edge $\{S,T\}$ increases the distance between any source-sink pair to $5$, implying any routing protocol's makespan is at least $5$ on this instance. However, there exists a network coding protocol with makespan $3$: Each source $s_i$ sends its input to its neighbor $S$ and all sinks $t_j$ for $i \neq j$ along the direct 3-hop path $s_i- - -t_j$. Node $S$ computes the XOR of all inputs, passes this XOR to $T$ who, in turn, passes this XOR to all sinks $t_j$, allowing each sink $t_j$ to recover its source $s_j$'s packets by canceling all other terms in the XOR.

\def\k{5}%
\begin{figure}[h]
	\begin{center}
		\begin{tikzpicture}[scale=0.9,
		vtx/.style={draw, circle},
		decoration={markings,mark=between positions .33 and 0.97 step .33 with {\node[draw, circle, scale=0.3] {};}}
		]
		\node[vtx] at (2*\k+4,2) (S) {$S$};
		\node[vtx] at (2*\k+4,0) (T) {$T$};
		\draw (S) -- (T);
		\foreach \x in {1,...,\k} {
			\node[vtx] at (2*\x,2) (s\x) {$s_\x$};
			\path (s\x) edge[-,bend left=17,blue] (S);
			\node[vtx] at (2*\x,0) (t\x) {$t_\x$};
			\path (T) edge[-,bend left=17,blue] (t\x);	
		}
		\foreach \x in {1,...,\k} {
			\foreach \y in {1,...,\k} {
				\ifthenelse{\NOT \x = \y}{\draw[line width=1.25pt] (s\x) -- (t\y);}{}
			}
		}
		\end{tikzpicture}
		\centering
		\vspace{-0.2cm}
		\caption{A family of instances with $k=\k$ pairs of terminals and makespan coding gap of $\nicefrac{5}{3}$. Thick edges represent paths of 3 hops, while thin (black and blue) edges represent single edges. In other words, each of the $k$ sources $s_i$ has a path of $3$ hops (in black and bold) connecting it to every sink $t_j$ for all $j\neq i$. Moreover, all sources $s_i$ neighbor a node $S$, which also neighbors node $T$, which neighbors all sinks $t_j$.}
		\label{fig:gap-53}
		\vspace{-0.4cm}
	\end{center}	
\end{figure}

One can still recover a valid general (i.e., coding) lower bound by an appropriate strengthening of \Cref{obs:cut-implies-routing-lb}: one has to require that \emph{all} sources be far from \emph{all} sinks in the edge-deleted graph. This contrast serves as a good mental model for the differences between coding and routing protocols.

\begin{lem}\label{obs:cut-implies-coding-lb}
	Let $\calM = (G,\calS)$ be a simple $k$-unicast instance. Suppose that after deleting some edges $F \subseteq E$, any sink is at distance at least $T$ from \textbf{any} source; i.e., $\forall \pmb{i},\pmb{j}\in [k]$ we have $\textrm{dist}_{G\setminus F}(\pmb{s_i}, \pmb{t_j}) \ge T$. Then any (network coding) protocol for $\calM$ has makespan at least $\min\left\{T, k/|F|\right\}$.
\end{lem}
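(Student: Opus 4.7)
The plan is to prove the contrapositive via an information-theoretic cut argument: assume some coding protocol achieves makespan $\tau < \min\{T, k/|F|\}$ and derive a contradiction by bounding the information crossing $F$. First I would fix notation: treat each source packet $X_i$ as a uniformly random independent bit string of length $b$ (the sub-packet size), independent of any randomness $R$ used by the protocol, and let $Z_F$ denote the tuple of all sub-packets sent across edges of $F$ in rounds $1,\ldots,\tau$; note $|Z_F| \leq |F|\cdot\tau$ sub-packets.

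The key step is a locality claim, proved by induction on $\tau' \leq \tau$: the state of every node $v$ after round $\tau'$ is a deterministic function of (i) $Z_F$ restricted to the first $\tau'$ rounds, (ii) the protocol's shared randomness $R$, and (iii) the initial packets at nodes within distance $\tau'$ of $v$ in $G \setminus F$. The inductive step is immediate, since any sub-packet $v$ receives along an edge $e \notin F$ is a function of the sender's state one round earlier (whose local region shrinks by one hop), while sub-packets received via $e \in F$ are already recorded in $Z_F$.

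Applying this claim at time $\tau < T$ to each sink $t_j$, the hypothesis $\mathrm{dist}_{G\setminus F}(s_i, t_j) \geq T$ for \emph{all} $i$ ensures that no source packet appears among the initial packets in region (iii). Hence $t_j$'s state is a function of $Z_F$ and $R$ alone, and since $t_j$ must recover $X_j$, we get $H(X_j \mid Z_F, R) = 0$ for every $j$. Using mutual independence of the $X_i$'s and their independence from $R$,
\[
kb \;=\; H(X_1,\ldots,X_k) \;=\; H(X_1,\ldots,X_k \mid R) \;\leq\; H(Z_F \mid R) \;\leq\; |F|\cdot\tau\cdot b,
\]
which forces $\tau \geq k/|F|$, contradicting $\tau < k/|F|$.

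The main obstacle will be the locality claim: the subtle point is that it rules out any ``back-channel'' of source information reaching a sink without crossing $F$ within $\tau < T$ rounds. This is exactly where the strengthened any-source-to-any-sink hypothesis is essential—the weaker ``each source far from its own sink'' hypothesis of \Cref{obs:cut-implies-routing-lb} would fail, as illustrated by the butterfly-like network of \Cref{fig:gap-53}, where a source $s_i$ close to $t_j$ can XOR-combine with messages flowing across $F$ to help $t_j$ decode $X_j$. Once the locality step is formalized, the remaining entropy bookkeeping is standard.
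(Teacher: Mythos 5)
Your proof is correct and follows the same information-theoretic cut argument as the paper: any source information reaching a sink in fewer than $T$ rounds must have crossed an edge of $F$, and the total entropy that can cross $F$ in $\tau$ rounds is $|F|\cdot\tau\cdot b$, which must exceed $k\cdot b$. The paper's proof is terser (it phrases the locality argument informally as "all information shared between the sources and the sinks has to pass through some edge in $F$"), whereas your inductive locality claim makes that step rigorous, but the two are the same argument at different levels of detail.
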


\begin{proof}
	We can assume all sources can share information among themselves for free (e.g., via a common controlling entity) since this makes the multiple-unicast instance strictly easier to solve; similarly, suppose that the sinks can also share information. Suppose that some coding protocol has makespan $T' < T$. Then all information shared between the sources and the sinks has to pass through some edge in $F$ at some point during the protocol. However, these edges can pass a total of $|F| \cdot T'$ packets of information, which has to be sufficient for the total of $k$ source packets. Therefore, $|F| \cdot T' \ge k$, which can be rewritten as $T' \ge k/|F|$. The makespan is therefore at least $T'\geq \min\{T,k/|F|\}$.
\end{proof}

Unfortunately, \Cref{obs:cut-implies-coding-lb} is not always tight and it is instructive to understand when this happens. One key example is the previously-mentioned instance studied in the influential distributed computing papers \cite{peleg2000near,elkin2006unconditional,dassarma2012distributed} (described in \Cref{fig:dist-graph}), where congestion and dilation both play key roles. Informally, this network was constructed precisely to give an $\tilde{\Omega}(\sqrt{n})$ makespan lower bound (leading to the pervasive $\tilde{\Omega}(\sqrt{n}+D)$ lower bound for many global problems in distributed computing~\cite{dassarma2012distributed}). The intuitive way to explain the $\tilde{\Omega}(\sqrt{n})$ lower bound is to say that one either has to communicate along a path of length $\sqrt{n}$ or \emph{all} information needs to shortcut significant distance over the tree, which forces all information to pass through near the top of the tree, implying congestion of $\tOmega(\sqrt n)$. \Cref{obs:cut-implies-coding-lb}, however, can at best certify a lower bound of $\tilde{\Omega}(n^{1/4})$ for this instance. That is, this lemma's (coding) makespan lower bound can be \emph{polynomially} far from the optimal coding protocol's makespan. 

\vspace{-0.3cm}
\begin{figure}[h]
	\begin{center}
		\includegraphics[scale=0.5]{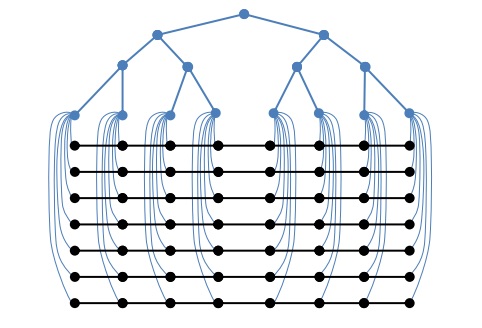}
		\centering
		\vspace{-0.3cm}
		\caption{The hard instance for distributed graph problems \cite{elkin2006unconditional,dassarma2012distributed,peleg2000near}, as appears in \cite{ghaffari2016distributed}. The multiple-unicast instance has $\Theta(n)$ nodes and is composed of $\sqrt{n}$ disjoint paths of length $\sqrt{n}$ and a perfectly balanced binary tree with $\sqrt{n}$ leaves. The $i^{th}$ node on every path is connected to the $i^{th}$ leaf in the tree. There are $\sqrt{n}$ sessions with $s_i, t_i$ being the first and last node on the $i^{th}$ path. All capacities and demands are one. The graph's diameter is $\Theta(\log n)$, but its coding makespan is $\tilde{\Omega}(\sqrt{n})$.}
		\label{fig:dist-graph}
		\vspace{-0.6cm}
	\end{center}	
\end{figure}

A more sophisticated argument is needed to certify the $\tilde{\Omega}(\sqrt n)$ lower bound for this specific instance. The aforementioned papers~\cite{elkin2006unconditional,dassarma2012distributed,peleg2000near} prove their results by implicitly using the technique we formalize as our moving cut in the following definition and lemma (proven in \Cref{sec:moving-cut}).

\begin{restatable}{Def}{movingcut}(Moving cut)
   \label{def:moving-cut}
   Let $G = (V, E)$ be a communication network with capacities $c:E\rightarrow \mathbb{Z}_{\geq 1}$ and let $\{(s_i,t_i)\mid i\in [k]\}$ be source-sink pairs. A \textbf{moving cut} is an assignment $\ell : E \to \mathbb{Z}_{\ge 1}$ of positive integer lengths to edges of $G$. We say the moving cut has \textbf{capacity} $C$, if $\sum_{e \in E} c_e (\ell_e - 1) = C$, and \textbf{distance} $T$, if all sinks and sources are at distance at least $T$ with respect to $\ell$; i.e., $\forall i, j \in [k]$ we have $d_\ell(s_i, t_j) \geq T$.
\end{restatable}


\begin{restatable}{lem}{cutImpliesLowerBound}\label{cut-implies-lower-bound}
  Let $\calM = (G, \calS)$ be a unicast instance which admits a moving cut $\ell$ of capacity strictly less than $\sum_{i\in [k]} d_i$ and distance $T$. Then \emph{any} (coding) protocol for $\calM$ has makespan at least $T$.
\end{restatable}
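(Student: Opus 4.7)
The plan is to argue by contradiction. Assume some coding protocol for $\calM$ achieves makespan $T' < T$, and track its execution on the time-expanded DAG $D$ derived from $G$: the vertex set is $V \times \{0, 1, \ldots, T'\}$, each edge $e = \{u,v\} \in E$ contributes directed transmission arcs $(u,t) \to (v,t+1)$ and $(v,t) \to (u,t+1)$ of capacity $c_e$ for every $t \in \{0, \ldots, T'-1\}$, and each vertex contributes infinite-capacity memory arcs $(v,t) \to (v,t+1)$. The protocol induces an assignment of packets to the arcs of $D$ that delivers each source's $d_i$ sub-packets from $(s_i, 0)$ to $(t_i, T')$.

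The technical heart of the argument is to lift the moving cut $\ell$ to a vertex partition of $D$ whose cut capacity is at most $C$. Define the potential $f(v) := \min_{i \in [k]} d_\ell(s_i, v)$ and the set $A := \{(v,t) \in V(D) : f(v) \leq t\}$. Since $f(s_i) = 0$, every initial-source vertex $(s_i, 0)$ lies in $A$, and since $f(t_j) \geq T > T'$, every terminal-sink vertex $(t_j, T')$ lies outside $A$, so $A$ separates the sources at time $0$ from the sinks at time $T'$ in the DAG $D$. Memory arcs never cross $A$ because their endpoints share the same $f$-value. For the transmission pair induced by an edge $e = \{u,v\}$, the $\ell$-triangle inequality gives $|f(u) - f(v)| \leq \ell_e$; assuming without loss of generality $f(u) \leq f(v)$, only arcs in the direction $u \to v$ can cross, and only at the time steps $t$ with $f(u) \leq t \leq f(v) - 2$, contributing at most $c_e \cdot (\ell_e - 1)$ arcs of capacity $c_e$ to the cut. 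Summing over all $e \in E$ bounds the total cut capacity by $\sum_{e \in E} c_e (\ell_e - 1) = C$.

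The contradiction follows from a standard information-theoretic argument. Treat the source packets $X_1, \ldots, X_k$ as independent uniform random variables with joint entropy $\sum_i d_i$ sub-packets; if the protocol uses private randomness, condition on it so the execution is a deterministic function of the $X_i$. The joint state of the sink-side vertices of $A$ at time $T'$ is then determined by the fixed initial states of non-source vertices together with the packets that cross $A$, whose total volume is at most $C$ sub-packets. Since every sink decodes its source packet exactly, this sink-side state determines $(X_1, \ldots, X_k)$ and hence has entropy at least $\sum_i d_i$, forcing $\sum_i d_i \leq C$ and contradicting the hypothesis $C < \sum_i d_i$. The main obstacle is the capacity bookkeeping of the second step: it is tempting to extend \Cref{obs:cut-implies-coding-lb} by viewing the moving cut as an induced edge deletion, but moving cuts generally admit no such interpretation, and the correct conversion of ``one extra unit of $\ell_e$'' into ``$c_e$ extra units of cut capacity in the time-expanded DAG'' goes through the level-set counting of the potential $f$ above.
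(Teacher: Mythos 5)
Your proof is correct and is essentially the paper's argument recast in different language. The paper phrases this as a two-player "Alice spectates $A_r := \{v : \min_i d_\ell(s_i, v) \le r\}$ at time $r$, Bob spectates the rest" simulation, whereas you phrase it as a vertex cut $A := \{(v,t) : f(v) \le t\}$ in the time-expanded DAG with potential $f(v) := \min_i d_\ell(s_i, v)$ — these are the same set, the same per-edge counting (at most $\ell_e - 1$ crossing time slots, each of capacity $c_e$), and the same final entropy contradiction, so the two formulations are isomorphic. One small phrasing slip: you write that an edge contributes "at most $c_e(\ell_e - 1)$ arcs of capacity $c_e$" when you mean at most $\ell_e - 1$ arcs of capacity $c_e$ each, for a total contribution of $c_e(\ell_e - 1)$ to the cut; the intent is clear and the bound you use downstream is the right one.
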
	

\Cref{cut-implies-lower-bound} can be seen as a natural generalization of \Cref{obs:cut-implies-coding-lb}, which can be equivalently restated in the following way: \textit{``Suppose that after increasing each edge $e$'s length from one to $\ell_e \in \{\pmb{1}, \pmb{T+1}\}$, we have that (1) $\sum_{e \in E} c_e (\ell_e - 1) < \sum_{i=1}^k d_i$, and (2) $\textrm{dist}_{\ell}(s_i, t_j) \ge T$. Then any (coding) protocol $\calM$ has makespan at least $T$''}. Dropping the restriction on $\ell_e$ recovers \Cref{cut-implies-lower-bound}.

Strikingly, the moving cut technique allows us not only to prove tight bounds (up to polylog factors) for the instance of \Cref{fig:dist-graph}---it allows us to get such tight bounds \emph{for every multiple-unicast instance}. 
In order to upper bound the makespan coding gap, we therefore relate such a moving cut with the optimal routing makespan, as follows.

To characterize the optimal routing makespan, we study hop-bounded multicommodity flow, which is an LP relaxation of routing protocols of makespan $T$. First, we show that a fractional LP solution of high value to this LP implies a routing with makespan $O(T)$.
Conversely, if the optimal value of this LP is low, then by strong LP duality this LP's dual has a low-valued solution, which we us to derive a moving cut and lower bound the coding makespan.
Unfortunately, the dual LP only gives us bounds on (average) distance between source-sink pairs $(s_i,t_i)$, and not between all sources $s_i$ and sinks $t_j$ (including $j\neq i$), as needed for moving cuts. For this conversion to work, we prove a generalization of the main theorem of Arora, Rao and Vazirani \cite{arora2009expander} to general metrics, of possible independent interest. (See \Cref{sec:all-to-all-pairs}.) 
This allows us to show that a low-valued dual solution implies a moving cut certifying that no coding protocol has makespan less than $T/O(\log k\cdot \log (\sum_i d_i /\min_i d_i))$. 
As the above rules out low-valued optimal solutions to the LP for $T=T^*\cdot O(\log k\cdot \log (\sum_i d_i /\min_i d_i))$ with $T^*$ the optimal coding makespan, the LP must have high optimal value, implying a routing protocol with makespan $O(T)$, and thus our claimed upper bound on the makespan coding gap.

\subsubsection{Lower Bounding the Coding Gap}
To complement our polylogarithmic upper bound on the makespan gap, we construct a family of multiple-unicast instances $\calM$ that exhibit a polylogarithmic makespan coding gap. We achieve this by amplifying the gap via graph products, a powerful technique that was also used in prior work to construct extremal throughput network coding examples~\cite{blasiak2011lexicographic,lovett2014linear,braverman2017coding}. 
Here we outline this approach, as well as the additional challenges faced when trying to use this approach for makespan. 

We use a graph product introduced by \citet{braverman2017coding} (with some crucial modifications).
\citet{braverman2017coding} use their graph product to prove a conditional \emph{throughput} coding gap similar to the one of \Cref{wc-gap-lb}, conditioned on the (unknown) existence of a multiple-unicast instance $\inst$ with non-trivial throughput coding gap. 
The graph product of \cite{braverman2017coding} takes instances $\inst_1, \inst_2$ and intuitively replaces each edge of $\inst_1$ with a source-sink pair of a different copy of $\inst_2$. More precisely, multiple copies of $\inst_1$ and $\inst_2$ are created and interconnected. Edges of a copy of $\inst_1$ are replaced by the same session of different copies of $\inst_2$; similarly, sessions of a copy of $\inst_2$ replace the same edge in different copies of $\inst_1$. This product allows for coding protocols in $\inst_1$ and $\inst_2$ to compose in a straightforward way to form a fast coding protocol in the product instance. The challenge is in proving impossibility results for routing protocols, which requires more care in the definition of the product graph. 

To address this challenge, copies of instances are interconnected along a high-girth bipartite graph to prevent unexpectedly short paths from forming after the interconnection. 
For example, to prove a throughput routing impossibility result, \citet{braverman2017coding} compute a dual of the multicommodity flow LP (analogous to our LP, but \emph{without any hop restriction}) to certify a limit on the routing performance. In the throughput setting, a direct tensoring of dual LP solutions of $\inst_1$ and $\inst_2$ gives a satisfactory dual solution of the product instance. In more detail, a dual LP solution in $\inst$ assigns a positive length $\ell_{\inst}(e)$ to each edge in $\inst$; each edge of the product instance corresponds to two edges $e_1 \in \inst_1$ and $e_2 \in \inst_2$, and the direct tensoring $\ell_+( (e_1, e_2) ) = \ell_{\inst_1}(e_1) \cdot \ell_{\inst_2}(e_2)$ provides a feasible dual solution with an adequate objective value. To avoid creating edges in the product distance of zero $\ell_+$-length, they contract edges assigned length zero in the dual LP of either instance. Unfortunately for us, such contraction is out of the question when studying makespan gaps, as such contractions would shorten the hop length of paths, possibly creating short paths with no analogues in the original instance.

Worse yet, any approach that uses the dual of our $T$-hop-bounded LP is bound to fail in the makespan setting. To see why, suppose we are given two instances $\inst_1, \inst_2$, both of which have routing makespan at least $T$ and expect that the product instance $\inst_+$ to have routing makespan at least $T^2$ by some construction of a feasible dual LP solution. Such a claim cannot be directly argued since a source-sink path in the product instance that traverses, say, $T-1$ different copies of $\inst_2$ along a path of hop-length $T+1$ in $\inst_2$ could carry an arbitrary large capacity! This is since the hop-bounded LP solution on $\inst_2$ only takes \emph{short} paths, of hop-length at most $T$, into account. Since there is no direct way to compose the dual LP solutions, we are forced to use a different style of analysis from the one of \cite{braverman2017coding}, which in turn forces our construction to become considerably more complicated.

To bound the routing \emph{makespan} in the product instance we rely on \Cref{obs:cut-implies-routing-lb}: We keep a list of edges $F$ along with each instance and ensure that (i) all source-sink distances in the $F$-deleted instance are large and that (ii) the ratio of the number of sessions $k$ to $|F|$ is large. We achieve property (i) by interconnecting along a high-girth graph and treating the replacements of edges in $F$ in a special way (hence deviating from the construction of \cite{braverman2017coding}). Property (ii) is ensured by making the inner instance $\inst_2$ significantly larger than the outer instance $\inst_1$, thus requiring many copies of $\inst_1$ and resulting in a large number of sessions in the product graph. To allow for this asymmetric graph product, we need an infinite number of base cases with non-trivial makespan coding gap for our recursive constructions (rather than a single base instance, as in the work of \citet{braverman2017coding}). This infinite family is fortunately obtained by appropriately generalizing the instance of \Cref{fig:gap-53}.

The main challenge in our approach becomes controlling the size of the product instance. To achieve this, we affix to each instance a relatively complicated set of parameters (e.g., coding makespan, number of edges, number of sessions, etc.) and study how these parameters change upon applying the graph product. Choosing the right set of parameters is key---they allow us to properly quantify the size escalation. In particular, we show that the coding gap grows doubly-exponentially and the size of the instances grow triply-exponentially, yielding the desired polylogarithmic coding gap.

\subsection{Related Work}\label{sec:related}
This work ties in to many widely-studied questions. We outline some of the most relevant here.

\paragraph{Routing multiple unicasts.} Minimizing the makespan of multiple unicasts using routing has been widely studied. When packets must be routed along fixed paths, two immediate lower bounds on the makespan emerge: \emph{dilation}, the maximum length of a path, and \emph{congestion}, the maximum number of paths crossing any single edge. A seminal result of \citet{leighton1994packet} proves one can route along such fixed paths in $O(\mathrm{congestion} + \mathrm{dilation} )$ rounds, making the result optimal up to constants. Follow ups include works 
improving the constants in the above bound \cite{peis2011universal,scheideler2006universal}, computing such protocols~\cite{leighton1999fast}, 
simplifying the original proof~\cite{rothvoss2013simpler}, routing in distributed models~\cite{rabani1996distributed, ostrovsky1997universal}, and so on. When one has the freedom to choose paths, \citet{bertsimas1999asymptotically} gave near-optimal routing solutions for asymptotically-large packet sizes, later extended to all packet sizes by \citet{srinivasan2001constant}. The power of routing for multiple unicasts is therefore by now well understood.

\paragraph{Network coding gains.} The utility of network coding became apparent after \citet{ahlswede2000network} proved it can increase the (single multicast) throughput of a communication network. Following their seminal work, there emerged a vast literature displaying the advantages of network coding over routing for various measures of efficiency in numerous communication models, including for example energy usage in wireless networks \cite{wu2005minimum,goel2008network,fragouli2008efficient}, delay minimization in repeated single unicast\cite{chekuri2015delay, wang2016sending}, and makespan in gossip protocols \cite{deb2006algebraic,haeupler2016analyzing,haeupler2015simple}.
The throughput of a single multicast (i.e., one single node sending to some set of nodes), arguably the simplest non-trivial communication task, was also studied in great detail (e.g., \cite{ahlswede2000network, li2004network2, agarwal2004advantage, ho2006random, li2009constant, huang2013space}). In particular, \citet{agarwal2004advantage} showed that the throughput coding gap for a single multicast equals the integrality gap of natural min-weight Steiner tree relaxations, for which non-trivial bounds were known (see, e.g., \cite{zosin2002directed, halperin2007integrality}). 
While the throughput coding gap for a single multicast is now fairly well understood, the case of multiple senders seems to be beyond the reach of current approaches.

\paragraph{Throughput gaps for multiple unicasts.} 
The routing throughput for multiple unicasts is captured by multicommodity max-flow, while the coding throughput is clearly upper bounded by the sparsest cut. Known multicommodity flow-cut gap bounds therefore imply the throughput coding gap for $k$ unicasts is at most $O(\log k)$ \cite{linial1995geometry,aumann1998log}, and less for special families of instances~\cite{rao1999small,lee2010genus,chekuri2013flow,krauthgamer2019flow,chekuri2006embedding,klein1993excluded,chekuri2005multicommodity}.
In 2004 \citet{li2004network} and \citet{harvey2004comparing} independently put forward the \emph{multiple-unicast conjecture}, which asserts that the throughput coding gap is trivial (i.e., it is one). 
This conjecture was proven true for numerous classes of instances \cite{hu1963multi,okamura1981multicommodity,jain2006capacity,kramer2006edge,adler2006capacity}.
More interestingly, a positive resolution of this conjecture has been shown to imply unconditional lower bounds in external memory algorithm complexity~\cite{adler2006capacity, farhadi2019lower}, computation in the cell-probe model~\cite{adler2006capacity}, and (recently) an $\Omega(n \log n)$ circuit size lower bound for multiplication of $n$-bit integers~\cite{afshani2019lower} (matching an even more recent breakthrough algorithmic result for this fundamental problem \cite{harvey2019polynomial}).
Given this last implication, it is perhaps not surprising that despite attempts by many prominent researchers~\cite{jain2006capacity,harvey2006capacity,yin2018reduction,li2004network,xiahou2014geometric,langberg2009multiple,al2008k}, the conjecture remains open and has established itself as a notoriously hard open problem. Indeed, even improving the $O(\log k)$ upper bound on throughput coding gaps seems challenging, and would imply unconditional \emph{super-linear circuit size lower bounds}, by the work of \citet{afshani2019lower}.
Improving our upper bound on makespan coding gaps to $o(\log k)$ would directly imply a similar improvement for throughput coding gaps, together with these far-reaching implications.



\section{Upper Bounding the Coding Gap}\label{sec:upper-bounds}

In this section we prove \Cref{wc-gap-ub}, upper bounding the makespan network coding gap. Given a multiple-unicast instance $\calM$ we thus want to upper bound its routing makespan and lower bound its coding makespan. To characterize these quantities we start with a natural hop-bounded multicommodity flow LP, \primal$(T)$, which serves as a ``relaxation'' of routing protocols of makespan at most $T$. The LP, given in \Cref{fig:primaldual-flows}, requires sending a flow of magnitude $z\cdot d_i$ between each source-sink pair $(s_i,t_i)$, with the additional constraints that (1) the combined congestion of any edge $e$ is at most $T \cdot c_e$ where $c_e$ is the capacity of the edge (as only $c_e$ packets can use this edge during any of the $T$ time steps of a routing protocol), and (2) the flow is composed of only \emph{short} paths, of at most $T$ hops. Specifically, for each $i\in [k]$, we only route flow from $s_i$ to $t_i$ along paths in $\calP_i(T)\defeq \{p:s_i\pathto t_i \mid |p|\leq T,\, p \text{ is simple}\}$, the set of simple paths of hop-length at most $T$ connecting $s_i$ and $t_i$ in $G$.

\begin{figure}[h]
	\begin{center}
		\begin{tabular}{rl|rl}
			\multicolumn{2}{c|}{Primal: \primal$(T)$} & \multicolumn{2}{c}{Dual:
				\dual$(T)$} \\ \hline  
			maximize & $z$ & minimize & $T \cdot \sum_{e \in E} c_e \ell_e$ \\
			subject to: & & subject to: & \\
			
			$\forall i\in [k]$: &  $\sum_{p\in \calP_i(T)} f_i(p) \geq z \cdot d_i$ & $\forall i\in [k], p\in \calP_i(T)$: & $\sum_{e\in p} \ell_e \geq h_i$ \\
			
			$\forall e \in E$: & $\sum_{p\ni e} f_i(p) \leq T \cdot c_e$ & & $\sum_{i \in [k]} d_i h_i \geq 1$ \\
			
			$\forall i\in [k],p$: & $f_i(p) \geq 0$ & $\forall e\in E$: & $\ell_e \geq0$ \\
			
			& & $\forall i\in [k]$: & $h_i\geq 0$				
		\end{tabular}
	\end{center}
	\vspace{-0.5cm}
	\caption{The concurrent flow LP relaxation and its dual.}
	\label{fig:primaldual-flows}
\end{figure}

A routing protocol solving $\calM$ in $T$ rounds yields a solution to \primal($T$) of value $z = 1$, almost by definition.\footnote{Such a protocol must send $d_i$ packets along paths of length at most $T$ between each sour-sink pair $(s_i,t_i)$, and it can send at most $c_e$ packets through each edge $e$ during any of the $T$ rounds, or at most $c_e\cdot T$ packets overall.}
A partial converse is also true; a feasible solution to \primal($T$) of value at least $\Omega(1)$ implies a routing protocol for $\calM$ in time $O(T)$. 
This can be proven using standard LP rounding \cite{srinivasan2001constant} and $O(\textrm{congestion + dilation})$ path routing \cite{leighton1994packet}. (See \Cref{sec:upper-bounds-appendix}.)

\begin{restatable}{prop}{primalroutingprotocol}\label{lem:primal-to-routing-protocol}
	Let $z, \{f_i(p) \mid i\in [k], p\in \calP_i(T)\}$ be a feasible solution for \primal$(T)$. Then there exists an integral routing protocol with makespan $O(T/z)$. 
\end{restatable}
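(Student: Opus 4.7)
The plan is to reduce this to the combination of two well-known ingredients: randomized LP rounding for bounded-length multicommodity flow (\`a la Srinivasan) followed by Leighton--Maggs--Rao packet scheduling. First I would rescale the given LP solution by $1/z$, obtaining a fractional multicommodity flow $\{f'_i(p)\}$ with $\sum_{p\in\calP_i(T)} f'_i(p) \geq d_i$ for every session $i$, with edge congestion $\sum_{i,p\ni e} f'_i(p) \leq T c_e / z$ per edge $e$, and with every flow-carrying path of hop length at most $T$. Because $d_i$ is integral, after renormalizing I may assume $\sum_p f'_i(p) = d_i$ exactly, so $\{f'_i(p)/d_i\}_p$ is a probability distribution on $\calP_i(T)$.

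Next I would round this fractional flow into integer paths. For each sub-packet of session $i$ I would independently sample a routing path from the distribution $\{f'_i(p)/d_i\}_p$, producing $d_i$ paths per session, each of hop length at most $T$. The expected number of sampled paths traversing any edge $e$ is at most $T c_e / z$, so a Chernoff bound together with a union bound over the $O(|E|)$ edges shows, with high probability, that every edge is used by $O(T c_e / z)$ sampled paths simultaneously. Applying Srinivasan's refined, negatively-correlated rounding \cite{srinivasan2001constant} (or an iterative/pipage-style derandomization) in place of independent sampling turns this into a deterministic guarantee.

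Finally, I would invoke the Leighton--Maggs--Rao theorem \cite{leighton1994packet}. Treating $c_e$ as $c_e$ parallel unit-capacity copies of $e$, the integer paths produced above have congestion-per-capacity $C = O(T/z)$ and dilation $D \leq T$, so they admit a packet schedule of length $O(C+D)$. Since $z \leq 1$ may be assumed (otherwise a single round of the flow already saturates the demand), this gives makespan $O(T/z + T) = O(T/z)$, as claimed.

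The main obstacle is the standard logarithmic slack: vanilla Chernoff rounding produces integer congestion $O(T c_e / z + \log n)$, which would introduce an unwanted additive $O(\log n)$ into the schedule length. Avoiding this is precisely what Srinivasan's rounding is designed to do---its negatively-correlated rounding, combined with the $O(C+D)$ scheduling, absorbs the logarithmic term and yields integer congestion $O(T c_e / z)$ directly. Citing that black-box result (as the paper does in deferring to the appendix) is the cleanest way to land at the clean $O(T/z)$ bound without accumulating extraneous polylog factors.
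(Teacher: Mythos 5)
Your proposal is correct and matches the paper's own proof essentially step for step: sample $d_i$ paths per session from the normalized fractional flow, invoke Srinivasan's Theorem~2.4 to get integral paths with congestion and dilation $O(T/z)$ (rather than a lossier independent Chernoff rounding), and then schedule them via the Leighton--Maggs--Rao $O(\text{congestion}+\text{dilation})$ theorem. The one cosmetic difference is that you first describe the independent-sampling/Chernoff variant and then swap in Srinivasan to kill the additive $\log n$; the paper just cites Srinivasan directly.
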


Complementing the above, we show that a low optimal LP value for \primal$(T)$ implies that no \emph{coding} protocol can solve the instance in much less than $T$ time.

\begin{restatable}{lem}{dualtohard}\label{lem:dual-to-coding-hardness}
	If the optimal value of \primal$(T)$ is at most $z^*\leq 1/10$, then the coding makespan for $\calM$ is at least $T/(C\cdot \log(k)\cdot \log\left(\sum_i d_i / \min_i d_i\right) )$ for some constant $C>0$.
\end{restatable}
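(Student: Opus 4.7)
The plan is to invoke LP duality, bucket sessions so that the dual weights $h_i$ are roughly uniform, apply the ARV-generalization from \Cref{sec:all-to-all-pairs} to upgrade paired distances to all-pairs distances, and round the resulting edge weights into an integer-valued moving cut to which \Cref{cut-implies-lower-bound} applies. By strong LP duality, since $z^* \leq 1/10$ there is a feasible solution $(\ell, h)$ to \dual$(T)$ with $T\sum_e c_e \ell_e \leq 1/10$, $\sum_i d_i h_i \geq 1$, and $\sum_{e \in p} \ell_e \geq h_i$ for every $p \in \calP_i(T)$; equivalently, the hop-bounded $\ell$-distance between $s_i$ and $t_i$ is at least $h_i$. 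After discarding sessions whose $h_i d_i$ is a negligible fraction of the total (so that the remaining sessions still contribute $\Omega(1)$ to $\sum_i d_i h_i$), the surviving $h_i$ span a range of multiplicative width $O(\sum_j d_j / \min_j d_j)$. Bucket them geometrically into $O(\log(\sum_j d_j / \min_j d_j))$ classes and, by averaging, select one bucket $I$ with $h_i \in [h, 2h]$ for a common $h$ and $\sum_{i \in I} d_i h_i = \Omega(1/\log(\sum_j d_j/\min_j d_j))$, so that $\sum_{i \in I} d_i = \Omega(1/(h \log(\sum_j d_j/\min_j d_j)))$.

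So far the dual certifies distance $\geq h$ between \emph{matched} pairs $(s_i, t_i)$ for $i \in I$, which is not enough to certify a moving cut. To upgrade to an all-pairs guarantee---every $s_i$ far from every $t_j$ with $i, j$ ranging independently---I would apply the generalization of Arora--Rao--Vazirani to arbitrary finite metrics proved in \Cref{sec:all-to-all-pairs}, applied to the hop-bounded $\ell$-distances restricted to the terminals of $I$. This yields a sub-collection $I' \subseteq I$ with $\sum_{i \in I'} d_i = \Omega(\sum_{i\in I} d_i)$ such that for every $i, j \in I'$, the hop-bounded $\ell$-distance from $s_i$ to $t_j$ is at least $\Omega(h/\log k)$.

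Finally, round into an integer moving cut. Let $T' := T/(C \log k \cdot \log(\sum_j d_j/\min_j d_j))$ for a sufficiently large constant $C$, choose $\alpha$ so that $\alpha \cdot h/\log k = \Theta(T')$, and define $\ell'_e := \max\{1, \lceil \alpha \ell_e \rceil\}$. Any source-sink path between terminals of $I'$ of hop-length $\geq T'$ has $\ell'$-length $\geq T'$ automatically (since $\ell'_e \geq 1$); any hop-shorter such path is still among the $\leq T$-hop paths controlled by the dual, hence has $\ell'$-length $\geq \alpha \cdot \Omega(h/\log k) \geq T'$. Thus $\ell'$ is a moving cut of distance $\geq T'$ between every source and sink indexed by $I'$. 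Its capacity is $\sum_e c_e (\ell'_e - 1) \leq \alpha \sum_e c_e \ell_e \leq \alpha/(10T)$, which for $C$ large enough is strictly less than $\sum_{i \in I'} d_i$ by the bounds above. Applying \Cref{cut-implies-lower-bound} to the sub-instance on $I'$---whose coding makespan lower-bounds that of $\calM$---yields the desired lower bound of $T'$.

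The main obstacle is the second step: the dual only certifies pairwise matched distances along hop-bounded paths, while a moving cut needs \emph{all} sources to be simultaneously far from \emph{all} sinks (\Cref{fig:gap-53} illustrates how matched-pair distances alone can be grossly misleading, since coding can exploit the short cross-pair connections). The ARV-generalization to general metrics is the technical heart of the argument and is what absorbs this gap into the $O(\log k)$ factor. A secondary subtlety---that the dual controls only hop-short paths while a moving cut separates arbitrary paths---is handled cleanly by the choice $\ell'_e \geq 1$, which makes hop-long paths automatically $\ell'$-long without any further work.
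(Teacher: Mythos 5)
Your overall architecture---LP duality, bucketing to isolate sessions with near-uniform $h_i$, an all-pairs-distance upgrade via the ARV-generalization, and integer rounding into a moving cut---is the same as the paper's. However, there are two genuine gaps.

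First, and most importantly, you apply \Cref{lemMetricAllToAll} to the ``hop-bounded $\ell$-distances restricted to the terminals of $I$.'' This is not a metric: the function $d^T_\ell(u,v) \defeq \min\{\ell(p) : p \text{ is a } u\pathto v \text{ path with } |p|\le T\}$ fails the triangle inequality, since concatenating two $T$-hop paths gives a $2T$-hop path that the minimum need not account for. \Cref{lemMetricAllToAll} is stated for metric spaces and its proof uses padded decompositions, which require the triangle inequality. Your final remark (``the choice $\ell'_e \ge 1$ handles the hop-bounded-vs-arbitrary-path subtlety'') shows you have the right intuition, but it is placed too late: the paper performs the integer rounding \emph{first}, obtaining lengths $\tl_e = 1 + \lfloor \ell_e \cdot T\cdot\sum_{i\in I} d_i\rfloor \ge 1$, which forces $\tl(p)\ge |p|$ and hence makes $d_{\tl}$ a genuine graph metric in which $d_{\tl}(s_i,t_i)\ge T/\agap$; only then does it invoke \Cref{lemMetricAllToAll}. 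You must reorder in the same way, or the ARV step has no valid input.

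Second, \Cref{lemMetricAllToAll} promises only $|I'|\ge |I|/9$, a fraction of the \emph{cardinality}, whereas your capacity comparison needs $\sum_{i\in I'} d_i=\Omega(\sum_{i\in I} d_i)$, a fraction of the \emph{weight}. When the $d_i$ are heterogeneous the two are unrelated. The paper handles this by repeating each pair $(s_i,t_i)$ with multiplicity $d_i$ before applying the lemma (a multiset of $\sum_{i\in I} d_i$ pairs), so the guaranteed cardinality fraction translates into a weight fraction. Your geometric bucketing by $h_i$ is a valid alternative to the paper's sorted-prefix bucketing (Claim~\ref{lem:bucketing-lemma}), but note that it only gives $O(\log(\sum_j d_j/\min_j d_j))$ buckets after normalizing the dual so that $\sum_i d_i h_i = 1$, which you should state explicitly.
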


Before outlining our approach for proving \Cref{lem:dual-to-coding-hardness}, we show why this lemma together with \Cref{lem:primal-to-routing-protocol} implies our claimed upper bound for the makespan network coding gap.
\latencyub*
\begin{proof}
	Fix a multiple-unicast instance $\calM$. Let $T^*$ be the minimum makespan for any coding protocol for $\calM$. Let $T=(C+1)\cdot T^*\cdot (\log(k)\cdot \log(\sum_i d_i /\min_i d_i))$ for $C$ as in \Cref{lem:dual-to-coding-hardness}. Then, the LP \primal$(T)$ must have optimal value at least $z^*\geq 1/10$, else by \Cref{lem:dual-to-coding-hardness} and our choice of $T$, any coding protocol has makespan at least $T/O(\log (k)\cdot \log(\sum_i d_i/\min_i d_i)) > T^*$, contradicting the definition of $T^*$. But then, by \Cref{lem:primal-to-routing-protocol}, there exists a routing protocol with makespan $O(T/z^*) = O(T^* \cdot \log(k)\cdot \log(\sum_i d_i/\min_i d_i)$. The theorem follows.
\end{proof}

The remainder of the section is dedicated to proving \Cref{lem:dual-to-coding-hardness}. That is, proving that a low optimal value for the LP \primal$(T)$ implies a lower bound on the makespan of any coding protocol for $\calM$.
To this end, we take a low-valued LP solution to the dual LP \dual$(T)$ (implied by strong LP duality) and use it to obtain an information-theoretic certificate of impossibility, which we refer to as a \emph{moving cut}. \Cref{sec:moving-cut} introduces a framework to prove such certificates of impossibility, which we show \emph{completely characterizes} any instance's makespan (up to polylog terms). We then explain how to transform a low-value dual LP solution to such a moving cut in \Cref{sec:dual-to-cut}. For this transformation, we prove a lemma reminiscent of \citet[Theorem 1]{arora2009expander} for general metrics, in \Cref{sec:all-to-all-pairs}.

\subsection{Moving Cuts: Characterizing Makespan}\label{sec:moving-cut}
In this section we prove that moving cuts characterize the makespan of a multiple unicast instance. For ease of reference, we re-state the definition of moving cuts.

\movingcut*

We start by proving \Cref{cut-implies-lower-bound}, whereby moving cuts with small capacity and large distance imply makespan lower bounds. 

%

\cutImpliesLowerBound*
\begin{proof}
	We will show via simulation that a protocol solving $\calM$ in at most $T-1$ rounds would be able to compress $\sum_{i=1}^k d_i$ random bits to a strictly smaller number of bits, thereby leading to a contradiction. Our simulation proceeds as follows.
	We have two players, Alice and Bob, who control different subsets of nodes. 
	In particular, if we denote by $A_r \defeq \{v\in V \mid \min_i \textrm{dist}_\ell(s_i,v)\leq r\}$ the set of nodes at distance at most $r$ from any source, then during any round $r \in \{0, 1, \ldots, T-1\}$ all nodes in $A_r$ are ``spectated'' by Alice. 
	By spectated we mean that Alice gets to see all of these nodes' private inputs and received transmissions during the first $r$ rounds. 
	Similarly, Bob, at time $r$, spectates $B_r \defeq V \setminus A_r$. 
	Consequently, if at round $r$ a node $u \in V$ spectated by Alice sends a packet to a node $v \in V$, then 
	Bob will see the contents of that packet if and only if Bob spectates the node $v$ at round $r+1$. 
	That is, this happens only if $u\in A_r$ and $v\in B_{r+1} = V\setminus A_{r+1}$. Put otherwise, Bob can receive a packet from Alice along edge $e$ during times $r\in [\min_i \textrm{dist}_\ell(s_i,u),\min_i \textrm{dist}_\ell(s_i,v)-1]$. Therefore, the number of rounds transfer can happen along edge $e$ is at most $\min_i \textrm{dist}_\ell(s_i, v) - \min_i \textrm{dist}_\ell(s_i, u) - 1 \leq  \ell_e - 1$. Hence, the maximum number of bits transferred from Alice to Bob via $e$ is $c_e(\ell_e - 1)$. Summing up over all edges, we see that the maximum number of bits Bob can ever receive during the simulation is at most $\sum_{e \in E} c_e (\ell_e - 1) < \sum_{i=1}^k d_i$. 
	Now, suppose Alice has some $\sum_{i=1}^k d_i$ random bits.
	By simulating this protocol with each source $s_i$ having (a different) $d_i$ of these bits, we find that if all sinks receive their packet in $T$ rounds, then Bob (who spectates all $t_j$ at time $T-1$, as $\min_i \textrm{dist}_\ell(s_i,t_j) \geq T$ for all $j$) learns all $\sum_{i=1}^k d_i$ random bits while receiving less than $\sum_{i=1}^k d_i$ bits from Alice---a contradiction.
\end{proof}

\Cref{cut-implies-lower-bound} suggests the following recipe for proving makespan lower bounds:
Prove a lower bound on the makespan of some sub-instance ${\calM}' = (G, {\calS}')$ with $\calS' \subseteq \calS$ induced by indices $I\subseteq [k]$ using \Cref{cut-implies-lower-bound}. As any protocol solving $\calM$ solves $\calM'$, a lower bound on the makespan of $\calM'$ implies a lower bound on the makespan of $\calM$. 
So, to prove makespan lower bounds for $\calM$, identify a moving cut for some sub-instance of $\calM$. If this moving cut has capacity less than the sum of demands of the sub-instance and distance at least $T$, then the entire instance, $\calM$, has makespan at least $T$.

By the above discussion, the worst distance of a (low capacity) moving cut over any sub-instance serves as a lower bound on the makespan of any instance.
The following lemma, whose proof is deferred to the end of \Cref{sec:dual-to-cut}, asserts that in fact, the highest distance of a moving cut over any sub-instance is equal (up to polylog terms) to the best routing makespan of $\calM$. Consequently, by \Cref{wc-gap-ub}, the strongest lower bound obtained using moving cuts is equal up to polylog terms to the optimal (coding) makespan for $\calM$. 

\begin{restatable}{lem}{onlyObstruction}\label{lem:only-obstruction}
	If a $k$-unicast instance $\mathcal{M}$ has no routing protocol with makespan $T$, then
	there exists a set of sessions $I \subseteq [k]$ with a moving cut of capacity strictly less than $\sum_{i\in I}d_i$ and distance at least $T / O(\log k\cdot (\sum_{i} d_i/\min_i d_i))$ with respect to the unicast sub-instance induced by $I$.
\end{restatable}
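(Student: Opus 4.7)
The plan is to prove the contrapositive via LP duality, then convert the resulting fractional witness into an integer moving cut through a bucketing step and a generalized Arora--Rao--Vazirani extraction. Suppose $\calM$ admits no routing protocol of makespan at most $T$. By the contrapositive of Proposition~\ref{lem:primal-to-routing-protocol}, there is some $T_0 = \Theta(T)$ for which the LP \primal$(T_0)$ has optimum strictly less than $1/10$. Strong LP duality then yields a feasible dual $(\ell, h)$ for \dual$(T_0)$ with $T_0 \sum_e c_e \ell_e < 1/10$, $\sum_i d_i h_i \ge 1$, and $d_\ell^{T_0}(s_i, t_i) \ge h_i$ for every session $i$, where $d_\ell^{T_0}(u,v)$ denotes the shortest-path distance under edge weights $\ell$ over paths from $u$ to $v$ of at most $T_0$ hops.

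Next I would bucket sessions by demand: partition $[k]$ into $O(\log(\sum_i d_i/\min_i d_i))$ dyadic intervals according to the value of $d_i$. Averaging against $\sum_i d_i h_i \ge 1$ produces a bucket $B$ with $\sum_{i \in B} d_i h_i \ge 1/O(\log(\sum_i d_i/\min_i d_i))$ in which all $d_i$ lie within a factor of two of a common value $d^*$. Consequently the average hop-bounded source-sink distance in $B$ is at least $1/(\sum_{i \in B} d_i \cdot O(\log(\sum_i d_i/\min_i d_i)))$. I would then apply the general-metric ARV-type extraction of \Cref{sec:all-to-all-pairs} to this sub-instance to obtain a subset $I \subseteq B$ carrying a constant fraction of the demand mass, $\sum_{i \in I} d_i = \Omega(\sum_{i \in B} d_i)$, along with a threshold
\[ D' \; \ge \; \frac{1}{\sum_{i \in I} d_i \cdot O(\log k \cdot \log(\sum_i d_i/\min_i d_i))} \]
such that $d_\ell^{T_0}(s_i, t_j) \ge D'$ for \emph{all} $i, j \in I$---in particular for the cross pairs $i \ne j$ that are not directly controlled by the dual constraints.

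Finally I would round $\ell$ to positive integer lengths by $\tilde \ell_e := \max(1, \lceil \alpha \ell_e \rceil)$ with the scaling $\alpha := 5 T_0 \sum_{i \in I} d_i$. The pointwise bound $\tilde \ell_e - 1 \le \alpha \ell_e$ gives moving-cut capacity $\sum_e c_e(\tilde \ell_e - 1) \le \alpha \sum_e c_e \ell_e < \alpha/(10 T_0) = \tfrac12 \sum_{i \in I} d_i < \sum_{i \in I} d_i$, meeting the capacity condition of a moving cut on $I$. For distance, every $s_i$-$t_j$ path $p$ with $i, j \in I$ satisfies $\sum_{e \in p} \tilde \ell_e \ge \max(|p|, \alpha \sum_{e \in p} \ell_e)$: when $|p| > T_0$ the hop count alone certifies a $\tilde \ell$-length exceeding $T_0$, and when $|p| \le T_0$ we have $\alpha \sum_{e \in p} \ell_e \ge \alpha D' \ge T_0/O(\log k \cdot \log(\sum_i d_i/\min_i d_i))$ by the choice of $\alpha$ and the ARV bound on $D'$. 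Either way, $d_{\tilde \ell}(s_i, t_j) \ge T/O(\log k \cdot \log(\sum_i d_i/\min_i d_i))$, which is the required moving cut distance on the sub-instance induced by $I$.

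The hardest step is the ARV-type extraction: the hop-bounded quantity $d_\ell^{T_0}$ is not a true metric, since concatenating two hop-bounded shortest paths can push the combined hop count above $T_0$ and break the triangle inequality, so the classical SDP-based ARV analysis does not directly apply. Overcoming this obstacle is exactly the content of the general-metric generalization in \Cref{sec:all-to-all-pairs}, which I would invoke as a black box. It is also important that this step retains a constant fraction of the demand mass in $I$ (rather than a $1/\log k$ fraction)---otherwise the loss would compound with the rescaling $\alpha \propto \sum_{i \in I} d_i$ in Step~4 and inject an extra $\log k$ factor into the final bound, yielding $O(\log^2 k \cdot \log(\sum_i d_i/\min_i d_i))$ rather than the claimed $O(\log k \cdot \log(\sum_i d_i/\min_i d_i))$.
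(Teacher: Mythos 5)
Your high-level template---contrapositive of Proposition~\ref{lem:primal-to-routing-protocol}, strong duality, a bucketing step, an ARV-type extraction, and rounding to an integer moving cut---is the paper's structure, but you carry out two steps in the wrong order, and the more serious of the two resulting gaps is one you notice and then misdiagnose.

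The central problem is where \Cref{lemMetricAllToAll} is applied. You apply it directly to $d_\ell^{T_0}$, and in your final paragraph you correctly observe that this hop-bounded quantity violates the triangle inequality, but you conclude that the ``general-metric generalization'' of \Cref{sec:all-to-all-pairs} is exactly what handles this. It is not: \Cref{lemMetricAllToAll} is stated and proved for genuine metric spaces (``Let $(X,d)$ be a metric space''), and its proof uses padded decompositions, which need the triangle inequality. The paper's generalization relative to Arora--Rao--Vazirani is from $\ell_2^2$ metrics to arbitrary metrics, not from metrics to non-metric distance functions. The actual fix in the paper is to invert your order: first round $\ell$ to the integer lengths $\tilde\ell_e := 1+\lfloor \ell_e\, T \sum_{i\in I}d_i\rfloor$, observe that the shortest-path distance $d_{\tilde\ell}$ \emph{is} a metric and already satisfies $d_{\tilde\ell}(s_i,t_i) > T/\agap$ for every $i\in I$ (because $\tilde\ell_e\ge 1$ forces any path with more than $T$ hops to be $\tilde\ell$-long, so the $|p|>T$ vs.\ $|p|\le T$ case split is absorbed into $\tilde\ell$ once and for all), and only then apply \Cref{lemMetricAllToAll} to the metric $d_{\tilde\ell}$, with each pair repeated $d_i$ times. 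Applied in your order, the hypothesis of \Cref{lemMetricAllToAll} is simply not met, and no amount of ``general metrics'' repairs that.

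A secondary gap is in the bucketing. Your dyadic-on-$d_i$ bucket $B$ only comes with an \emph{average} bound $\sum_{i\in B}d_i h_i \ge 1/O(\log(\sum_i d_i/\min_i d_i))$, whereas \Cref{lemMetricAllToAll} requires a \emph{uniform} pairwise bound $d(s_i,t_i)\ge T$ for every pair it is handed. As written you pass straight from the average to the extraction, so you still owe a Markov-style pruning inside $B$ that retains a constant fraction of the demand mass while enforcing a pointwise lower bound on $h_i$. The paper's \Cref{lem:bucketing-lemma} builds this in directly by sorting on $h_i$ and taking a prefix, producing a set $I$ with $\min_{i\in I}h_i \ge \frac{1}{\agap\sum_{i\in I}d_i}$ in one shot. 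Your dyadic route can likely be patched to deliver the same guarantee, but the patch is currently missing.
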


We now turn to leveraging moving cuts to prove makespan lower bounds. Specifically, \Cref{lem:dual-to-coding-hardness}.

\subsection{From Dual Solution to Moving Cut}\label{sec:dual-to-cut}

\Cref{lem:primal-to-routing-protocol} shows that high objective value for the primal LP, \primal($T$) implies an upper bound on the routing time for the given instance.
In this section we prove the ``converse'', \Cref{lem:dual-to-coding-hardness}, whereby low objective value of the primal LP implies a lower bound on the \emph{coding} time for the given instance. 

Our approach will be to prove that a low objective value of the primal LP
---implying a feasible dual LP solution of low value---yields a moving cut for some sub-instance. This, by \Cref{cut-implies-lower-bound}, implies a lower bound on protocols for this sub-instance, and thus for the entire instance, from which we obtain \Cref{lem:dual-to-coding-hardness}. We turn to converting a low-valued dual LP solution to such a desired moving cut.

By definition, a low-value feasible solution to the dual LP, \dual$(T)$, assigns non-negative lengths $\ell : E \to \mathbb{R}_{\ge 0}$ such that (1) the $c$-weighted sum of $\ell$-lengths is small, i.e., $\sum_{e \in E} c_e \ell_e = \tO(1 / T)$, as well as (2) if $h_i$ is the $\ell$-length of the $T$-hop-bounded $\ell$-shortest path between $s_i$ and $t_i$, then $\sum_{i \in [k]} d_i\cdot h_i \ge 1$. Property (1) implies that appropriately scaling the lengths $\ell$ yields a moving cut given by lengths $\tilde{\ell}$ of bounded capacity, $\sum_e c_e \tilde{\ell}_e$. 
For this moving cut to be effective to lower bound the makespan of some sub-instance using \Cref{cut-implies-lower-bound}, the cut must have high distance w.r.t.~this sub-instance. As a first step to this end, we use Property (2) to identify a subset of source-sink pairs $I\subseteq [k]$ with pairwise $\tilde{\ell}$-distance at least $\tilde{\Omega}(T)$. 
\Cref{lem:bucketing-lemma}, proven in \Cref{sec:upper-bounds-appendix} using a ``continuous'' bucketing argument, does just this. The claim introduces a loss factor $\agap$ that we use throughout this section.
\begin{restatable}{cla}{bucketing}\label{lem:bucketing-lemma}
	Given sequences $h_1, \ldots, h_k, d_1, \ldots, d_k \in \reals_{\ge 0}$ with $\sum_{i\in [k]} d_i \cdot h_i \ge 1$ there exists a non-empty subset $I \subseteq [k]$ with $\min_{i \in I} h_i \ge \frac{1}{\agap \cdot \sum_{i\in I}d_i}$ for $\agap \in \left[1,O\left( \log \frac{\sum_{i\in [k]} d_i}{\min_{i\in [k]} d_i} \right )\right]$.
\end{restatable}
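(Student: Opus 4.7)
The plan is to sort the indices in decreasing order of $h_i$ and restrict attention to subsets $I$ that are prefixes of this ordering. After (implicitly) discarding indices with $d_i = 0$, which contribute nothing to $\sum_i d_i h_i$, sort so that $h_1 \ge h_2 \ge \cdots \ge h_k$ and let $D_j \defeq \sum_{i \le j} d_i$. For the prefix $I_j \defeq \{1,\ldots,j\}$ one has $\min_{i \in I_j} h_i = h_j$ and $\sum_{i \in I_j} d_i = D_j$, so the desired inequality collapses to finding some $j$ with $h_j \cdot D_j \ge 1/\agap$.

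I would proceed by contradiction. Suppose $h_j D_j < 1/\agap$ for every $j \in [k]$, equivalently $h_j < 1/(\agap D_j)$. Applying this bound termwise,
\[
\sum_{j=1}^{k} d_j h_j \;<\; \frac{1}{\agap}\sum_{j=1}^{k} \frac{d_j}{D_j}.
\]
For the inner sum, the $j=1$ term contributes $d_1/D_1 = 1$; for $j \ge 2$, the elementary inequality $1-x \le -\ln x$ (applied at $x = D_{j-1}/D_j$) gives $d_j/D_j = 1 - D_{j-1}/D_j \le \ln(D_j/D_{j-1})$, and these telescope. Hence
\[
\sum_{j=1}^{k}\frac{d_j}{D_j} \;\le\; 1 + \ln\!\frac{D_k}{D_1} \;\le\; 1 + \ln\!\frac{\sum_{i\in[k]} d_i}{\min_{i\in[k]} d_i},
\]
using $D_1 = d_1 \ge \min_i d_i$. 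Combined with the hypothesis $\sum_j d_j h_j \ge 1$, this forces $\agap < 1 + \ln(\sum_i d_i/\min_i d_i)$; choosing the implicit constant in $\agap = O\bigl(\log(\sum_i d_i/\min_i d_i)\bigr)$ large enough contradicts this bound and produces the desired prefix $I$.

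I do not expect any substantive obstacle: the argument is essentially a discretization of the integral comparison $\sum_j d_j h_j \lesssim \int_{\min_i d_i}^{D_k} \frac{dx}{\agap\, x}$ under the hypothesized pointwise upper bound on $h$. The only minor care needed is (i) dropping indices with $d_i = 0$ so that $\min_{i\in[k]} d_i$ is positive and the logarithm is finite (this is harmless since they affect neither $\sum_i d_i h_i$ nor the relevant $\min$ and $\sum$ over $I$ if we avoid ending a prefix at such an index), and (ii) separately treating the $j=1$ term of the telescoping sum, where one cannot invoke $\ln(D_j/D_{j-1})$ with $D_0 = 0$. The range $\agap \in [1,\,O(\log(\sum_i d_i/\min_i d_i))]$ matches the degenerate case (all mass on one coordinate), where $D_k = D_1$ and $\agap = 1$ already suffices.
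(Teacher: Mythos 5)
Your proposal is correct and matches the paper's proof in all essentials: sort indices by decreasing $h_i$, restrict to prefixes, and derive a contradiction by bounding $\sum_j d_j/D_j$ by $1+\ln(D_k/D_1)$. The only cosmetic difference is that the paper establishes the harmonic-sum bound via a continuous step function compared against $\int 1/x\,dx$, whereas you use the discrete inequality $1-x\le-\ln x$ and telescope; both yield the identical bound.
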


Scaling up the $\ell$ lengths yields a sub-instance induced by pairs $I\subseteq [k]$ and moving cut with bounded capacity and with $\tilde{\ell}$-distance between every source and its sink of at least $\tilde{\Omega}(T)$, i.e., $d_{\tilde{\ell}}(s_i,t_i)\geq \tilde{\Omega}(T)$ for all $i\in I$. However, \Cref{cut-implies-lower-bound} requires $\tilde{\ell}$-distance $\tilde{\Omega}(T)$ between \emph{any} source and sink, i.e., $d_{\tilde{\ell}}(s_i,t_j) \geq \tilde{\Omega}(T)$ for all $i,j\in I$.
To find a subset of source-sink pairs with such distance guarantees, we rely on the following metric decomposition lemma, whose proof is deferred to \Cref{sec:all-to-all-pairs}.

\begin{restatable}{lem}{lemMetricAllToAll}\label{lemMetricAllToAll}
  Let $(X, d)$ be a metric space. Given $n$ pairs $\{ (s_i, t_i) \}_{i \in [n]}$ of points in $X$ with at most $k$ distinct points in $\bigcup_i\{s_i,t_i\}$ and pairwise distances at least $d(s_i, t_i) \geq T$, there exists a subset of indices $I \subseteq [n]$ of size $|I|\geq \frac{n}{9}$ such that $d(s_i, t_j) \geq \frac{T}{O(\log k)}$ for all $i,j\in I$. Moreover, such a set can be computed in polynomial time.
\end{restatable}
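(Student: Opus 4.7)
The plan is to combine a probabilistic padded partition of the (finite) submetric with a randomized two-coloring of clusters, a recipe that naturally yields the $O(\log k)$ loss for $k$-point metrics via classical Calinescu--Karloff--Rabani / Bartal-style decompositions. I would first restrict attention to the submetric on the $\leq k$ distinct endpoints in $\bigcup_i \{s_i,t_i\}$, set $\Delta := T/2$ and $\rho := T/(C\log k)$ for a sufficiently large absolute constant $C$, and sample a padded random partition $\mathcal{P}$ of this $k$-point metric with two guarantees: (a) every cluster has diameter at most $\Delta$, and (b) for every point $x$, $\Pr[B(x,\rho)\subseteq \mathcal{P}(x)] \geq 9/10$. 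Standard constructions give such a partition with padding radius $\rho = \Omega(\Delta/\log k)$; the constant $C$ is chosen to make (b) hold.

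On top of the partition, independently label each cluster uniformly with either $\textsf{source}$ or $\textsf{sink}$. Define $I$ to be the set of indices $i\in[n]$ satisfying: (i) $s_i$ is padded, i.e., $B(s_i,\rho)\subseteq \mathcal{P}(s_i)$; (ii) $t_i$ is padded; (iii) $\mathcal{P}(s_i)$ has the $\textsf{source}$ label; (iv) $\mathcal{P}(t_i)$ has the $\textsf{sink}$ label. Since $d(s_i,t_i)\geq T > \Delta \geq \mathrm{diam}(\mathcal{P}(s_i))$, the endpoints $s_i,t_i$ always land in distinct clusters, so their labels are independent. Hence $\Pr[i\in I] \geq (9/10)^2 \cdot (1/2)^2 > 1/9$, and by linearity of expectation some realization yields $|I|\geq n/9$.

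To verify the cross-distance property, fix any $i,j\in I$. Since $\mathcal{P}(s_i)$ is $\textsf{source}$-labeled and $\mathcal{P}(t_j)$ is $\textsf{sink}$-labeled, they must be distinct clusters, hence $t_j\notin \mathcal{P}(s_i)$. Combined with the padding $B(s_i,\rho)\subseteq \mathcal{P}(s_i)$, this gives $d(s_i,t_j) > \rho = T/O(\log k)$, as required. Note that the argument is symmetric in $i,j$, so all ordered pairs are handled. Polynomial-time construction follows from the existence of deterministic polynomial-time padded decompositions for finite metrics, together with derandomizing the two-coloring by the method of conditional expectations applied to the indicator sum $\sum_i \mathbf{1}[i\in I]$.

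The main obstacle I foresee is conceptual: one must see that the seemingly ``all-to-all'' requirement $d(s_i,t_j)\geq T/O(\log k)$ can be enforced by combining a \emph{local} padding condition at each $s_i$ with a \emph{global} bipartite coloring that forces every surviving $t_j$ out of every surviving $s_i$'s cluster. Once that structural observation is made, the $O(\log k)$ factor arises naturally from the CKR/Bartal padding parameter; the rest is a direct averaging argument, so no further non-trivial machinery is needed beyond standard padded decompositions.
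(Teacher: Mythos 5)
Your proof is correct and matches the paper's approach: a padded decomposition of the $k$-point submetric plus a uniformly random two-coloring of the clusters, with the padding radius supplying the $O(\log k)$ loss. One nit: the padding events for $s_i$ and $t_i$ are driven by the \emph{same} random partition and hence are not independent, so $(9/10)^2$ is unjustified as written; a union bound gives $\Pr[\text{both padded}]\ge 8/10$, hence $\Pr[i\in I]\ge \frac{8}{10}\cdot\frac14 = \frac15 > \frac19$, so the conclusion is unaffected -- and in fact the paper only requires the source $s_i$ to be padded, which sidesteps the issue entirely.
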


We are now ready to construct the moving cut.

\begin{lem}\label{lem:small-dual-to-moving-cut}
 If the optimal value of \primal$(T)$ is at most $z^*\leq 1/10$, then there exists $\tilde{I} \subseteq [k]$ and a moving cut $\tl$ of capacity strictly less than $\sum_{i\in \tI}d_i$ and distance at least $T / O(\agap \log k)$ with respect to the unicast sub-instance induced by $\tilde{I}$ (i.e., $\tilde{\calM} = (G, \tilde{\calS})$ where $\tilde{\calS} = \{(s_i, t_i, d_i)\}_{i \in \tI}$).
\end{lem}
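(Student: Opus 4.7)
The plan is to convert a low-value dual LP solution into an integer-valued moving cut through scaling-and-rounding, and then use \Cref{lemMetricAllToAll} to upgrade from same-pair distance bounds to the all-pairs distance bound required by a moving cut.

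\emph{Setup via LP duality.} Since the primal optimum is $z^\ast \leq 1/10$, strong LP duality produces a feasible dual solution $(\ell, h)$ with $T \sum_e c_e \ell_e \leq 1/10$ and $\sum_{i \in [k]} d_i h_i \geq 1$; we may WLOG take $h_i$ to equal the minimum $\ell$-length of a path in $\calP_i(T)$. Applying \Cref{lem:bucketing-lemma} to the sequences $(h_i), (d_i)$ yields a non-empty $I \subseteq [k]$ satisfying $\min_{i \in I} h_i \geq 1 / (\agap \cdot \sum_{i \in I} d_i)$.

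\emph{Scaling and rounding.} Fix a large constant $C_1$, set $\gamma := T \sum_{i \in I} d_i / C_1$, and define $\tl_e := \max\{1, \lceil \gamma \ell_e \rceil\}$. This is a positive-integer length assignment, so $\tl$ is a candidate moving cut. Since $\tl_e - 1 \leq \gamma \ell_e$, the capacity satisfies $\sum_e c_e (\tl_e - 1) \leq \gamma \sum_e c_e \ell_e \leq \gamma / (10 T) = \sum_{i \in I} d_i / (10 C_1)$. On the other hand $\tl_e \geq \max\{1, \gamma \ell_e\}$, so any $s_i$--$t_i$ path $p$ has $\tl$-length at least $\min\{|p|, \gamma \sum_{e \in p} \ell_e\}$; taking the minimum over paths gives $d_{\tl}(s_i, t_i) \geq \min\{T, \gamma h_i\} \geq T / (C_1 \agap)$ for every $i \in I$, using that $h_i \geq 1/(\agap \sum_I d_i)$.

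\emph{All-to-all distances.} Apply \Cref{lemMetricAllToAll} in the shortest-path metric $d_{\tl}$ to the pairs $\{(s_i, t_i)\}_{i \in I}$. This yields $\tilde I \subseteq I$ with $|\tilde I| \geq |I|/9$ and $d_{\tl}(s_i, t_j) \geq T / O(\agap \log k)$ for all $i, j \in \tilde I$, providing the distance requirement of a moving cut for the sub-instance induced by $\tilde I$.

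\emph{Closing the capacity condition and the main obstacle.} Combining the Step-2 bound, it remains to verify that $\sum_{i \in I} d_i / (10 C_1) < \sum_{i \in \tilde I} d_i$, i.e., that $\sum_{i \in \tilde I} d_i = \Omega(\sum_{i \in I} d_i)$. This is the main obstacle, because \Cref{lemMetricAllToAll} as stated only promises a cardinality bound $|\tilde I| \geq |I|/9$, so arbitrarily skewed demands could push $\sum_{\tilde I} d_i$ far below what we need. I expect the resolution to come either from \Cref{lemMetricAllToAll} being established with a weighted guarantee $\sum_{\tilde I} d_i \geq \Omega(\sum_{I} d_i)$ (a natural strengthening of the ball-growing argument underlying its ARV-style proof), or by inserting an extra dyadic-bucketing step in the Setup that forces all demands in $I$ to be within a constant factor of each other; the $O(\log(\sum_i d_i/\min_i d_i))$ overhead of the latter fix is already absorbed into the $\agap$ factor present in the statement.
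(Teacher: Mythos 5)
Your proof follows the paper's route up through the all-to-all step, and you have correctly put your finger on the one genuine gap: \Cref{lemMetricAllToAll} as stated gives only a \emph{cardinality} guarantee $|\tilde I| \geq |I|/9$, while the capacity bound $\sum_e c_e(\tl_e - 1) < \tfrac{1}{9}\sum_{i\in I}d_i$ requires a \emph{demand-weighted} guarantee $\sum_{i\in\tilde I}d_i = \Omega(\sum_{i\in I}d_i)$. Your proposal stops here and speculates about two fixes rather than completing one.

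The paper closes this gap with the first of your two suggested routes, realized by a multiset trick rather than by reproving a weighted variant of the lemma. Specifically, it invokes \Cref{lemMetricAllToAll} with the pair $(s_i, t_i)$ listed $d_i$ times, so $n = \sum_{i\in I} d_i$. This is legitimate without any new loss because the $O(\log k)$ distance degradation in \Cref{lemMetricAllToAll} is governed by the number of \emph{distinct points}, not the number of listed pairs, and repeating pairs does not increase the point set. The lemma then returns a multiset $\tilde I$ of at least $n/9 = \tfrac{1}{9}\sum_{i\in I}d_i$ indices; since session $i$ appears at most $d_i$ times in the multiset, the underlying set of sessions $\tilde I'$ satisfies $\sum_{i\in\tilde I'}d_i \geq |\tilde I| \geq \tfrac{1}{9}\sum_{i\in I}d_i > \sum_e c_e(\tl_e-1)$, which is exactly the weighted bound you needed. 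Your alternative dyadic-bucketing suggestion is more delicate than you indicate: after restricting $I$ to one demand bucket, the quantity $\sum_{i\in I}d_i$ shrinks, which weakens the lower bound $h_i \geq 1/(\agap\sum_{i\in I}d_i)$ coming from \Cref{lem:bucketing-lemma} relative to the rescaled $\gamma$, and it is not obvious the losses stay within the $\agap$ factor as you assert. The multiset route is the clean one; adding it as the final step would complete your argument. (Your rounding $\tl_e = \max\{1,\lceil\gamma\ell_e\rceil\}$ versus the paper's $1+\lfloor\gamma\ell_e\rfloor$ is an inessential cosmetic difference.)
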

\begin{proof}
	  By strong duality, the dual LP \dual$(T)$  has a feasible solution $\{h_i, \ell_e \mid i\in [k], e\in E\}$ to \dual$(T)$ with objective value $T \sum_{e \in E} c_e \ell_e = z^*$. Fix such a solution.
	  Let $I\subseteq [k]$ be a subset of indices as guaranteed by  \Cref{lem:bucketing-lemma}. Define $\tl_e \defeq 1 + \lfloor \ell_e \cdot T \cdot \sum_{i\in I} d_i \rfloor$ for all $e \in E$ and note that $\tl_e \in \integers_{\ge 1}$. 
  By definition of $\tl$ and $T\sum_e c_e\ell_e = z$, we get a bound on the capacity of $\tl$:
  \begin{align*}
    \sum_{e \in E} c_e (\tl_e - 1) & \le \sum_e c_e\ell_e \cdot T \cdot \sum_{i\in I}d_i = z^* \cdot \sum_{i\in I} d_i \leq \frac{1}{10} \sum_{i \in I} d_i < \frac{1}{9} \sum_{i \in I} d_i.
  \end{align*}
 
  We now show that $d_{\tl}(s_i, t_i) > T/\agap$ for all $i \in I$. Consider any simple path $p$ between $s_i \pathto t_i$. Denote by $\tl(p)$ and $\ell(p)$ the length with respect to $\tl$ and $\ell$, respectively. It is sufficient to show that $\tl(p) > T/\agap$. If $p \not \in \calP_i(T)$, i.e., the hop-length of $p$ (denoted by $|p|$) is more than $T$. Then $\tl(p) \ge |p| > T \geq T/\agap$, since $\tl_e \ge 1\ \forall e \in E$. Conversely, if $p \in \calP_i(T)$, then by our choice of $I$ as in \Cref{lem:bucketing-lemma} and the definition of $h_i$, we have that $\ell(p) \ge h_i \ge \frac{1}{\agap \sum_{i\in I} d_i}$, hence 
   \begin{align*}
     \tl(p) & \geq \ell(p) \cdot T\cdot  \sum_{i\in I} d_i = \frac{1}{\agap \sum_{i\in I} d_i} \cdot T \cdot \sum_{i\in I} d_i = T/\agap.
   \end{align*}

   Finally, we choose a subset $\tI \subseteq I$ s.t. $d_{\tl}(s_i, t_j) > T / O(\agap \log k)$ for all $i, j\in \tI$. By \Cref{lemMetricAllToAll} applied to the graphic metric defined by $\tl$ and each pair $(s_i,t_i)$ repeated $d_i$ times, there exists a multiset of indices $\tI\subseteq I \subseteq [k]$ such that $d_{\tl}(s_i,t_j)\geq T / O(\agap \log k)$ for all $i,j\in \tI$ and such that $|\tI| \geq \sum_i d_i/9$. Therefore, taking each pair $(s_i,t_i)$ indexed by $\tI$ at least once, we find a subset of sessions $\tI \subseteq [k]$ such that $\sum_{i\in \tI} d_i \geq \frac{1}{9} \sum_{i\in [k]} d_i > \sum_e c_e\cdot (\tl_e-1)$ and $d_{\tl}(s_i,t_j)\geq T/O(\agap \log k)$ for all $i,j\in \tI$. In other words, $\tl$ is a moving cut of capacity strictly less than $\sum_{i\in \tI}d_i$ and distance $T / O(\agap \log k)$ with respect to the sub-instance induced by $\tI$.
\end{proof}

Combining \Cref{lem:small-dual-to-moving-cut} with \Cref{cut-implies-lower-bound}, we obtain this section's main result, \Cref{lem:dual-to-coding-hardness}.

\dualtohard*

\paragraph{Remark 1.} We note that the $\log k$ term in \Cref{lem:dual-to-coding-hardness}'s bound is due to the $\log k$ term in the bound of \Cref{lemMetricAllToAll}. For many topologies, including genus-bounded and minor-free networks, this $\log k$ term can be replaced by a constant (see \Cref{sec:all-to-all-pairs}), implying smaller makespan gaps for such networks.

\paragraph{Remark 2.} \Cref{lem:only-obstruction}, which states that a lower bound on routing makespan implies the existence of a moving cut of high distance with respect to some sub-instance, follows by \Cref{lem:small-dual-to-moving-cut} and \Cref{lem:primal-to-routing-protocol}, as follows. By \Cref{lem:primal-to-routing-protocol}, $\mathcal{M}$ having no routing protocol with makespan $T$ implies that for some constant $c>0$, the LP \primal($c\cdot T$) has objective value at most $1/10$. \Cref{lem:small-dual-to-moving-cut} then implies the existence of the moving cut claimed by \Cref{lem:only-obstruction}.

\subsection{From Pairwise to All-Pairs Distances}\label{sec:all-to-all-pairs}

This section is dedicated to a discussion and proof of the following Lemma that seems potentially useful beyond the scope of this paper.

\lemMetricAllToAll* 

We note that the above lemma is similar to the main Theorem of \citet{arora2009expander}. Our result holds for general metrics with a factor of $O(\log k)$ in the distance loss, while their holds for $\ell_2^2$ metrics with a factor of $O(\sqrt{\log k})$. The results are incomparable and both are tight. (The tightness of 
\Cref{lemMetricAllToAll} can be shown to be tight for graph metrics, for example in graph metrics of constant-degree expanders.)

To prove \Cref{lemMetricAllToAll} we rely on \emph{padded decompositions}~\cite{gupta2003bounded}. 
To define these, we introduce some section-specific notation. Let $(X, \textrm{dist})$ be a metric space. Let the (weak) diameter of a set of points $U \subseteq X$ be denoted by $\diam(U) \defeq \max_{x, y \in U} \textrm{dist}(x, y)$. 
We say a partition $P=\{X_1,X_2,\dots,X_t\}$ of $X$ is \emph{$\Delta$-bounded} if $\diam(X_i)\leq \Delta$ for all $i$.
Next, for $U \subseteq X$ and a partition $P$ as above, we denote by $U \subseteq P$ the event that there exists a part $X_i \in P$ containing $U$ in its entirety; i.e., $U \subseteq X_i$. Let $B(x, \rho) \defeq  \{ y \in X \mid \textrm{dist}(x, y) \le \rho \}$ denote the ball of radius $\rho \ge 0$ around $x \in X$.

\begin{Def}\label{def:padded}
	Let $(X, \textrm{dist})$ be a metric space. We say that a distribution $\calP$ over $\Delta$-bounded  partitions of $X$ is  $(\beta,\Delta)$-\emph{padded} if, for some universal constant $\delta$, it holds that for every $x \in X$ and $0\leq \gamma \leq \delta$,
	\begin{align*}
		\Pr_{P\sim \calP}[B(x, \gamma\Delta) \not \subseteq P] \le \beta \gamma.
	\end{align*}
\end{Def}
In words, each part of the partition has diameter at most $\Delta$ and the probability of any point $x$ in the metric being at distance less than $\gamma\Delta$ from a different part than its own part is at most $\beta\gamma$. 
Such decompositions were presented, for example, by \Citet{gupta2003bounded}.

\begin{lem}[\cite{gupta2003bounded}]
	Any metric $(X, \textrm{dist})$ on $k$ points admits a $(\beta,\Delta)$-padded decomposition, for any $\Delta>0$ and some $\beta=O(\log k)$. Such a decomposition can be computed in polynomial time.
\end{lem}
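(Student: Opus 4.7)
The plan is to use the classical Calinescu--Karloff--Rabani / Bartal random partition scheme, whose padding probability admits a clean two-source-of-randomness union bound producing the $O(\log k)$ factor. Concretely, I would first sample a radius $R$ uniformly from $[\Delta/4,\,\Delta/2]$, then sample a uniformly random permutation $\sigma$ of the $k$ points of $X$. The partition $P$ is defined greedily in $\sigma$-order: each point $x$ ``claims'' all as-yet-unclaimed points $y$ with $\textrm{dist}(x,y)\le R$, and the claimed points form one cluster of $P$. Every cluster then consists of points within distance $R\le \Delta/2$ of a common center, so $\diam(X_i)\le 2R\le \Delta$, giving $\Delta$-boundedness. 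The construction is manifestly polynomial time in $k$.

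The heart of the argument is the padding bound $\Pr[B(y,\gamma\Delta)\not\subseteq P] \le O(\log k)\cdot\gamma$ for each $y\in X$ and sufficiently small $\gamma$. For a fixed $y$, I would enumerate the points of $X$ as $x_1,x_2,\dots,x_k$ in nondecreasing order of $\textrm{dist}(x_i,y)$, and define $E_i$ to be the event that $x_i$ is the \emph{cutter} of the ball, i.e., the first point in $\sigma$-order that claims at least one but not all points of $B(y,\gamma\Delta)$. Any sample in which the ball is cut must belong to exactly one $E_i$, so it suffices to upper bound each $\Pr[E_i]$ and then take a union bound.

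For $E_i$ to hold, two necessary conditions must hold simultaneously: (a) the radius must straddle $x_i$'s claim boundary, i.e., $R\in[\textrm{dist}(x_i,y)-\gamma\Delta,\ \textrm{dist}(x_i,y)+\gamma\Delta)$, so that $x_i$ can claim some but not all of the ball; and (b) $x_i$ must precede $x_1,\dots,x_{i-1}$ in $\sigma$, because each such $x_j$ is at least as close to $y$ as $x_i$ and would itself claim a point of the ball whenever $x_i$ does (so any earlier $x_j$ in $\sigma$ would preempt $x_i$ as cutter). Event (a) restricts $R$ to a window of length at most $2\gamma\Delta$ within the sample interval of length $\Delta/4$, so it has probability at most $8\gamma$ provided $\gamma\le 1/8$. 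Event (b) is determined by the uniform ordering of $\{x_1,\dots,x_i\}$ in $\sigma$ and has probability $1/i$ by symmetry. Since $R$ and $\sigma$ are independent, $\Pr[E_i]\le 8\gamma/i$, and a union bound yields $\Pr[B(y,\gamma\Delta)\not\subseteq P]\le \sum_{i=1}^k 8\gamma/i = O(\gamma\log k)$, giving $\beta=O(\log k)$ with $\delta=1/8$.

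The main obstacle I anticipate is carefully justifying that conditions (a) and (b) together capture all ways $E_i$ can occur---in particular, verifying that sorting by $\textrm{dist}(\cdot,y)$ is the right enumeration order, and that the ``first cutter'' analysis does not double-count. Once one commits to this enumeration and exploits independence of $R$ and $\sigma$, everything reduces to the product $8\gamma\cdot 1/i$ and the harmonic sum $\sum_i 1/i = O(\log k)$, with no further technical subtleties.
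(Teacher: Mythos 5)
Your overall plan---the Calinescu--Karloff--Rabani random partition with $R\sim\mathrm{Unif}[\Delta/4,\Delta/2]$ and a uniformly random permutation $\sigma$, followed by a decomposition over candidate ``cutters'' that multiplies an $R$-window probability $O(\gamma)$ by a first-in-$\sigma$ probability $1/i$---is indeed the standard route to a $(\beta,\Delta)$-padded decomposition with $\beta=O(\log k)$, and your diameter bound and the harmonic-sum finish are fine. However, the justification of your condition (b) has a genuine gap. You claim that if $x_i$ is the first point in $\sigma$-order to claim some but not all of $B(y,\gamma\Delta)$, then every $x_j$ with $a_j:=\textrm{dist}(x_j,y)\le a_i$ must come after $x_i$ in $\sigma$, on the grounds that such an $x_j$ ``would itself claim a point of the ball whenever $x_i$ does.'' That implication is false. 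The fact that $x_i$ claims some $z\in B(y,\gamma\Delta)$ only gives $a_i\le R+\gamma\Delta$, not $a_i\le R$, and hence not $a_j\le R$. If $R<a_j\le a_i\le R+\gamma\Delta$, nothing in the triangle inequality forces $B(x_j,R)$ to meet $B(y,\gamma\Delta)$ at all, even though $B(x_i,R)$ can reach a boundary point of it; this configuration is realizable already in the Euclidean plane. In that case $x_j$, though earlier in $\sigma$ and closer to $y$, claims nothing of the ball and does not preempt $x_i$, so $E_i$ does not imply (b) and the bound $\Pr[E_i]\le 8\gamma/i$ is unjustified.

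The fix is to enumerate over a different index. With $a_1\le\cdots\le a_k$ as you set up, let $W_R=\{j:a_j\le R+\gamma\Delta\}$ (nonempty, since $y\in X$ gives $a_1=0$) and let $i^*$ be the $\sigma$-earliest index in $W_R$. If $a_{i^*}\le R-\gamma\Delta$ then every $x_j$ preceding $x_{i^*}$ in $\sigma$ has $a_j>R+\gamma\Delta$, hence $\textrm{dist}(x_j,z)>R$ for all $z\in B(y,\gamma\Delta)$ and claims nothing of the ball, while $B(x_{i^*},R)\supseteq B(y,\gamma\Delta)$; so $x_{i^*}$ claims the whole ball and it is uncut. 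Thus a cut forces $R\in[a_{i^*}-\gamma\Delta,a_{i^*}+\gamma\Delta]$. Moreover $\{1,\dots,i^*\}\subseteq W_R$ because the $a_j$ are sorted and $a_{i^*}\le R+\gamma\Delta$, so $x_{i^*}$ is $\sigma$-first among $x_1,\dots,x_{i^*}$. These two conditions on $i^*$ are now provably necessary for a cut, the first depending only on $R$ and the second only on $\sigma$; plugging them into your independence-plus-union-bound calculation recovers $\beta=O(\log k)$ with $\delta=1/8$.
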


We are now ready to prove \Cref{lemMetricAllToAll}.

\begin{proof}[Proof of \Cref{lemMetricAllToAll}]
	First note that we can focus on the metric space induced by the $k$ distinct points. Let $\calP$ be a $\Delta$-bounded $\beta$-padded decomposition with $\Delta = T-1$ and $\beta=O(\log k)$. We first note that for all $i \in [k]$, $s_i$ and $t_i$ are contained in different parts since the diameter of each part $X_i$ is at most $\Delta = T-1$ and $\textrm{dist}(s_i, t_i) \geq T$. Furthermore, letting $\gamma=\frac{1}{2\beta}$, we have that 
	$\Pr[B(s_i, \gamma\Delta) \subseteq P] \ge \frac{1}{2}$. Let $I' \subseteq [n]$ be the subset of indices $i$ with $B(s_i, \gamma\Delta) \subseteq P$. Then we have $\Pr[i \in I'] \ge \frac{1}{2}$ for all $i \in [n]$.
	
	Flip a fair and independent coin for each part in $P$. Let $U \subseteq X$ be the set of points in parts whose coin came out heads, and $V \subseteq X$ be the analogous set for tails. Then for each $i \in I'$ we have that $\Pr[s_i \in U \text{ and } t_i \in V] = \frac{1}{4}$. Let $I \subseteq I'$ be the subset of indices $i$ with $s_i \in U \text{ and } t_i \in V$, giving $\Pr[i \in I] = \Pr[i \in I'] \cdot \Pr[i \in I \mid i \in I'] = \frac{1}{2} \cdot \frac{1}{4} = \frac{1}{8}\ \forall i \in [n]$. 
	We also have that $\textrm{dist}(s_i, t_j) > \rho = \frac{T-1}{2\beta}$ for all $i,j\in I$, since $B(s_i, \rho) \subseteq U$ for all $i \in I \subseteq I'$ and $\{ t_j \}_{j \in I} \cap U = \emptyset$. Therefore, this random process yields a subset of indices $I\subseteq[n]$ such that $\textrm{dist}(s_i,t_j) > \frac{T-1}{2\beta}$ for all $i,j\in I$, of expected size at least $\mathbb{E}[|I|] \geq \sum_{i\in [n]} \Pr[i\in I] \geq \frac{n}{8}$. As $n-\mathbb{E}[|I|]$ is a non-negative random variable, Markov's inequality implies that with constant probability $n-|I|\leq \frac{64}{63}\cdot(n-\mathbb{E}[|I|]) \leq \frac{8n}{9}$. 
	The lemma follows.
\end{proof}

\noindent\textbf{Remark:} The $O(\log k)$ term in \Cref{lemMetricAllToAll}'s bound is precisely the smallest possible $\beta$ for which $(\beta,\Delta)$-padded decompositions of the metric exist. For many graphic metrics, such as those of minor-excluded, bounded-genus, and bounded-doubling-dimension networks, padded decompositions with smaller $\beta$ exist \cite{lee2010genus,gupta2003bounded,abraham2019cops}. This improves the bounds of \Cref{lemMetricAllToAll} and thus \Cref{lem:dual-to-coding-hardness} by $(\log k)/\beta$, implying the same improvement for our makespan coding gaps for such networks.


\section{Polylogarithmic Coding Gap Instances}\label{sec:lower-bounds}

In this section we construct a family of multiple-unicast instances with polylogarithmic makespan coding gap. More precisely, we construct instances where the coding gap is at least $(5/3)^{2^i}$ and the size (both the number of edges and sessions) is bounded by $2^{2^{O(2^i)}}$. Here we give a bird's eye view of the construction and leave the details to subsequent subsections. We clarify that all big-O bounds like $f = O(g)$ mean there exists a universal constant $c > 0$ s.t. $f \le c \cdot g$ for all admissible values (in particular, there is no assumption on $f$ or $g$ being large enough).

We use the graph product of \cite{braverman2017coding} as our main tool. Given two multiple-unicast instances $\inst_1, \inst_2$ (called the \emph{outer} and \emph{inner instance}, resp.) we create a new instance $\inst_+$ where the coding gap is the product of the coding gaps of $\inst_1$ and $\inst_2$. The product is guided by a \emph{colored bipartite graph} $B = (V_1, V_2, E)$ where each edge is labeled by $(\chi_1, \chi_2) =$ (edge in $\inst_1$, session in $\inst_2$). Precisely, we create $|V_1|$ copies of $\inst_1$, $|V_2|$ copies of $\inst_2$ and for each edge $(a, b) \in E(B)$ with label $(\chi_1, \chi_2)$ we replace the edge $\chi_1$ in the $a^{th}$ copy of $\inst_1$ with session $\chi_2$ in the $b^{th}$ copy of $\inst_2$.

To prove a lower bound on the coding gap, one needs to upper bound the coding makespan and lower bound the coding makespan. The former is easy: the coding protocols nicely compose. The latter, however, is more involved. Our main tool is \Cref{obs:cut-implies-routing-lb}, which necessitates (i) keeping track of \emph{cut edges} $F$ along each instance $\inst$ such that all source-sink pairs of $\inst$ are well-separated after edges in $F$ are deleted, and (ii) keeping the ratio $r \defeq \frac{k}{|F|}$, number of sessions to cut edges, high. We must ensure that the properties are conserved in the product instance $\inst_+$. For (i), i.e., to disallow any short paths from forming as an unexpected consequence of the graph product, we choose $B$ to have \emph{high girth}. Also, we replace edges $F$ in the outer instances with paths rather than connecting them to a session in the inner instance. Issue (ii) is somewhat more algebraically involved but boils down to ensuring that the ratio of sessions to cut edges (i.e., $r$) in the inner instance is comparable to the size (i.e., number of edges) of the outer instance itself. Note that makes the size of the outer instance $\inst_1$ insignificant when compared to the size of the inner instance $\inst_2$.

We recursively define a family of instances by parametrizing them with a ``level'' $i \ge 0$ and a lower bound on the aforementioned ratio $r$, denoting them by $I(i, r)$. We start for $i=0$ with the $\nicefrac{5}{3}$ instance of \Cref{fig:gap-53} where we can control the ratio the aforementioned ratio $r$ by changing the number of sessions $k$ (at the expense of increasing the size). Subsequently, an instance on level $i$ is defined as a product two of level $i-1$ instances with appropriately chosen parameters $r$. One can show that the coding makespan for a level $i$ instance is at most $5^{2^i}$ and routing makespan is at least $3^{2^i}$, hence giving a coding gap of $(5/3)^{2^i}$. Furthermore, we show that the size of $I(i,r)$ is upper bounded by $r^{2^{O(2^i)}}$, giving us the full result.

Finally, we note an important optimization to our construction and specify in more detail how $I(i, r)$ is defined. Specifically, it is defined as the product of $\inst_1 \defeq I(i-1, 3r)$ being the outer instance and $\inst_2 \defeq I(i-1, m_1 / f_1)$ being the inner instance, where $m_1$ and $f_1$ are the number of total and cut edges of $\inst_1$. This necessitates the introduction and tracking of another parameter $u \defeq m / f$ to guide the construction. We remark that this might be necessary since if one uses a looser construction of $\inst_2 \defeq I(i-1, m_1)$ the end result $I(i, r)$ would be of size $r^{2^{O(i \cdot 2^i)}}$ and give a coding gap of $\exp\left(\frac{\log \log k}{\log \log \log k}\right)$, just shy of a polylogarithm.

\subsection{Gap Instances and Their Parameters}
In this section we formally define the set of instance parameters we will track when combining the instances.

A \textbf{gap instance} $\inst = (G, \calS, F)$ is a multiple-unicast instance $\calM = (G, \calS)$ over a connected graph $G$, along with an associated set of \textbf{cut edges} $F \subseteq E(G)$. We only consider gap instances where the set of terminals is disjoint, i.e., $s_i \neq s_j, s_i \neq t_j, t_i \neq t_j$ for all $i \neq j$. Furthermore, edge capacities and demands are one; i.e., $c_e = 1\ \forall e \in E(G)$, and $d_i = 1\ \forall (s_i, t_i, d_i) \in \calS$.
A gap instance $\inst = (G, \calS, F)$ has \textbf{parameters} $(a, b, f, k, m, r, u)$ when:
\begin{itemize}
	\item $\calM$ admits a network coding protocol with makespan at most $a$.
	\item Let $\textrm{dist}_{G \setminus F}(\cdot, \cdot)$ be the hop-distance in $G$ after removing all the cut edges $F$. Then for all terminals $i \in [k]$ we have that $\textrm{dist}_{G \setminus F}(s_i, t_i) \ge b$.
	\item The number of cut edges is $f = |F|$.
	\item The number of sessions is $k = |\calS|$.
	\item The graph $G$ has at most $m$ edges; i.e., $|E(G)| \le m$.
	\item $r$ is a lower bound on the ratio between number of sessions and cut edges; i.e., $k/f \ge r$.
	\item $u$ is an upper bound on the ratio between number of total edges and cut edges; i.e., $m/f \le u$.    
\end{itemize}

We note that the parameters of a gap instance immediately imply a lower bound on the optimal routing makespan via \Cref{obs:cut-implies-routing-lb}. Indeed, all packets transmitted in the first $b-1$ rounds must pass through $F$, and thus at most $f\cdot (b-1)$ packets can be sent between any source and its sink in the first $b-1$ rounds, implying that under any routing protocol, most sessions have completion time at least $b$.

\begin{obs}\label{lem:parameters-to-routing-lb}
  Let $\inst$ be a gap instance with parameters $(a, b, f, k, m, r, u)$ and $b \le r$. Then the routing makespan for $(G, \calS)$ is at least $b$. Moreover, for any routing protocol of $\inst$, at least $k \cdot (1 - \frac{b-1}{r})$ sessions have completion time at least $b$.
\end{obs}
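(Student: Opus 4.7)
The plan is to use a direct counting argument on how much information can pass from sources to sinks in the first $b-1$ rounds of any routing protocol, exploiting the distance guarantee afforded by the cut edges $F$.

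First I would observe that for any session $i \in [k]$, every simple path in $G$ from $s_i$ to $t_i$ of hop-length at most $b-1$ must use at least one edge of $F$. Indeed, if such a path avoided $F$ entirely, it would be a path in $G \setminus F$ of length at most $b-1$ between $s_i$ and $t_i$, contradicting the parameter assumption $\textrm{dist}_{G\setminus F}(s_i, t_i) \ge b$. Consequently, under a routing protocol, the packet of session $i$ can only reach $t_i$ within $b-1$ rounds if at least one sub-packet corresponding to $i$ is carried across some cut edge in $F$ at some point during the first $b-1$ rounds (since demands are unit, a single sub-packet constitutes the whole packet).

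Next I would bound the total number of packet transmissions across $F$. Each cut edge has unit capacity and the protocol runs for $b-1$ rounds, so the total number of sub-packets traversing edges of $F$ during rounds $1, \dots, b-1$ is at most $f \cdot (b-1)$. Combining with the previous step, the number of sessions that can complete in at most $b-1$ rounds is at most $f \cdot (b-1)$. Therefore the number of sessions with completion time at least $b$ is at least
\[
k - f(b-1) \;\ge\; k - \tfrac{k}{r}(b-1) \;=\; k\!\left(1 - \tfrac{b-1}{r}\right),
\]
using the parameter bound $k/f \ge r$, i.e., $f \le k/r$. This establishes the second claim.

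Finally, the makespan lower bound follows: under the assumption $b \le r$, we have $\frac{b-1}{r} < 1$, so the count $k(1 - (b-1)/r)$ is strictly positive, meaning at least one session has completion time at least $b$, hence the makespan is at least $b$. No real obstacles arise here; the only subtlety is being careful that ``at most $f(b-1)$ packets cross $F$ in $b-1$ rounds'' accounts for the unit-capacity, unit-demand, integral nature of routing, which is exactly the setting of gap instances as defined.
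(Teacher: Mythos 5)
Your proof is correct and follows essentially the same argument the paper sketches just before the observation: any packet delivered within $b-1$ rounds must traverse a path of hop-length at most $b-1$, which must use an edge of $F$ by the distance guarantee, and unit capacities bound the total such deliveries by $f(b-1) \le k(b-1)/r$. The only cosmetic note is that the path a routed packet takes need not be simple, but a short walk avoiding $F$ would still yield a short path in $G\setminus F$, so the argument goes through unchanged.
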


As an application of the above observation, we obtain another proof of the lower bound of the routing makespan for the family of instances of \Cref{fig:gap-53}. More generally, letting the cut edges be the singleton $F = \{(S,T)\}$, we obtain a family of gap instances with the following parameters.

\begin{fact}\label{base-instance}
	The family of gap instances of \Cref{fig:gap-53} have parameters $(3, 5, 1, k, \theta(k^2), k, \theta(k^2))$ for $k \ge 5$.
\end{fact}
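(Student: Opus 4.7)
The plan is to verify each of the seven parameters $(a,b,f,k,m,r,u) = (3,5,1,k,\Theta(k^2),k,\Theta(k^2))$ in turn. Note $f, k, r, u$ are immediate from the construction and a single counting argument, so the only substantive things to check are the coding makespan ($a \le 3$) and the post-deletion distance ($b \ge 5$). Let me denote the 3-hop path in the construction by $s_i - x_{ij} - y_{ij} - t_j$ for each ordered pair $i \neq j$.

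First, for $a \le 3$, I would formalize the XOR protocol sketched in the paper. In round $1$ every source $s_i$ simultaneously transmits its packet $m_i$ along the edge $(s_i,S)$ and along the edge $(s_i, x_{ij})$ for every $j \neq i$; in round $2$ the intermediate nodes forward ($x_{ij} \to y_{ij}$) while $S$ sends $\bigoplus_{i=1}^k m_i$ to $T$; in round $3$ the intermediate nodes forward ($y_{ij} \to t_j$) while $T$ broadcasts $\bigoplus_{i=1}^k m_i$ in parallel across its edges $(T,t_j)$. Each sink $t_j$ has received $\{m_i\}_{i \neq j}$ via the direct paths and $\bigoplus_{i=1}^k m_i$ via $T$, and so recovers $m_j$ by XOR-ing. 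It remains only to note that the used edges $(s_i,S)$, $(S,T)$, $(T,t_j)$, and the $3k(k-1)$ path edges $s_i x_{ij}$, $x_{ij}y_{ij}$, $y_{ij}t_j$ are all distinct, so no edge is used twice in a single round and the protocol is a valid unit-capacity network-coding protocol of makespan $3$.

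Second, for $b \ge 5$, I would observe that in $G \setminus \{(S,T)\}$ the graph splits naturally into a source-star (the $k$ edges $(s_i, S)$), a sink-star (the $k$ edges $(T,t_j)$), and the $k(k-1)$ internally-disjoint $3$-hop paths $s_i - x_{ij} - y_{ij} - t_j$ for $i \neq j$, which are the only "bridges" between $\{s_i\} \cup \{S\}$ and $\{t_j\} \cup \{T\}$. Any walk from $s_i$ to $t_i$ must therefore traverse at least one full such $3$-hop path to cross from the source side to the sink side (since neither endpoint of any such path is $s_i$ or $t_i$ simultaneously, as $i \neq j$). A short case analysis, splitting on the identity of the first bridge-path used, shows any such walk has hop-length at least $1 + 3 + 1 = 5$: either it enters the bridge at some $s_\ell$ reached from $s_i$ via $S$ (costing $\ge 2$ hops before the bridge and then $\ge 3$ on it, with $t_i$ not an endpoint) or it leaves the bridge at some $t_\ell$ with $\ell \neq i$ and still needs $\ge 2$ more hops to reach $t_i$ via $T$. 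Matching length-$5$ walks like $s_i \to S \to s_\ell \to x_{\ell i} \to y_{\ell i} \to t_i$ exist for $\ell \neq i$, so $\textrm{dist}_{G\setminus F}(s_i,t_i) = 5$.

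Third, the counting is direct: $|\calS| = k$, $F = \{(S,T)\}$ gives $f = 1$, and $|E(G)| = 1 + 2k + 3k(k-1) = 3k^2 - k + 1$, so $m = \Theta(k^2)$; then $r = k/f = k$ and $u = m/f = \Theta(k^2)$. The assumption $k \ge 5$ is used only to ensure the protocol/bound is nontrivial (e.g., so $b \le r$, making the lower bound of \Cref{lem:parameters-to-routing-lb} meaningful). The only step requiring care is the distance analysis for $b = 5$, but it is a short finite case check once the structural decomposition of $G \setminus F$ is made explicit.
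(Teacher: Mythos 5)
Your proof is correct and matches the approach the paper takes, which leaves this Fact unproved beyond the earlier in-text sketch of the XOR protocol (for $a\le 3$) and the observation that deleting $\{S,T\}$ raises all $s_i$-$t_i$ distances to $5$ (for $b\ge 5$); you simply spell out those two steps and then do the routine counting. One small cosmetic slip: your ``$1+3+1$'' decomposition should really be ``$2+3$'' or ``$3+2$'' (two extra hops on exactly one side of a full $3$-hop bridge path, not one on each), but the case analysis and the conclusion $\textrm{dist}_{G\setminus F}(s_i,t_i)=5$ are right.
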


The above family of gap instances will serve as our base gap instances in a recursive construction which we describe in the following section.

\subsection{Graph Product of Two Gap Instances}

In this section we present the graph product that combines two instances to obtain one a with higher coding gap.

\begin{Def}
  \textbf{Colored bipartite graphs} are families of bipartite graphs $\calB({n_1,n_2,m,k,g})$. Graphs $B = (V_1,V_2,E) \in \calB({n_1,n_2,m,k,g})$ are bipartite graphs with $|V_1| = n_1$ (resp. $|V_2| = n_2$) nodes on the left (resp., right), each of degree $m$ (resp., $k$), and these graphs have girth at least $g$. In addition, edges of $B$ are colored using two color functions, \textbf{edge color} $\chi_1 : E(B) \rightarrow [m]$ and \textbf{session color} $\chi_2 : E(B) \rightarrow [k]$, which satisfy the following.
  \begin{itemize}
  \item \label{prop:replace-each-edge} $\forall v\in V_1$, the edge colors of incident edges form a complete set $\{\chi_1(e) \mid e\ni v\} = [m]$.
  \item \label{prop:use-each-session} $\forall v\in V_2$, the session colors of incident edges form a complete set $\{\chi_2(e) \mid e\ni v\} = [k]$.
  \item \label{prop:same-session-per-outer} $\forall v\in V_1$, the session colors of incident edges are unique $|\{\chi_2(e) \mid e\ni v\}| = 1$.
  \item \label{prop:same-edge-per-innter} $\forall v\in V_2$, the edge colors of incident edges are unique, i.e, $|\{\chi_1(e) \mid e\ni v\}| = 1$.
  \end{itemize}
\end{Def}

The size of the colored bipartite graphs will determine the size of the derived gap instance obtained by performing the product along a colored bipartite graph. The following gives a concrete bound on the size and, in turn, allows us to control the growth of the gap instances obtained this way.

\begin{lem}[\cite{braverman2017coding}]\label{colored-bip}
  $\forall r,m,g\geq 3$, there exists a colored bipartite graph $B \in \calB(n_1,n_2,m,k,2g)$ with $n_1,n_2\leq (9mk)^{g+3}$.
\end{lem}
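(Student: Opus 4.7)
The plan exploits the rigid product structure forced by the four coloring conditions. First I observe that the requirements on $B \in \calB(n_1, n_2, m, k, 2g)$ imply every left vertex $v$ carries a unique session color $\chi_2(v) \in [k]$ and every right vertex $u$ a unique edge color $\chi_1(u) \in [m]$, and that the edge labels are then determined by these endpoint labels. The rainbow-at-every-vertex conditions further force the bipartite subgraph between the session-color-$c$ class on the left and the edge-color-$c'$ class on the right to be a perfect matching. Hence a colored bipartite graph in $\calB(n_1, n_2, m, k, 2g)$ is equivalent to the data of a common vertex set of size $N = n_1/k = n_2/m$ together with $mk$ perfect matchings $M_{c,c'}$ on $[N]$ whose union has girth at least $2g$.

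Given this reduction, I would pick each $M_{c,c'}$ independently and uniformly from the set of perfect matchings on $[N]$, for a sufficiently large $N$ to be fixed below. A standard first-moment calculation counts at most $(kN)^\ell (mN)^\ell$ labeled $2\ell$-walks in the template; each such walk closes to a cycle only if its $2\ell$ specific edges lie in their prescribed (random) matchings, an event of probability at most $(2/N)^{2\ell}$. This yields an expected number of short cycles of at most $\sum_{\ell=2}^{g-1} O((4mk)^\ell) = O((4mk)^g)$, which is \emph{independent} of $N$. An analogous but more transparent calculation goes through if one instead takes the $M_{c,c'}$ to be Cayley shifts $a \mapsto a + x_{c,c'}$ on $[N] = \mathbb{Z}/N\mathbb{Z}$ for a prime $N$: the closure condition for a simple cycle becomes a nontrivial linear congruence in the $x_{c,c'}$'s, vanishing with probability exactly $1/N$ over the random shifts.

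Since the expected short-cycle count does not vanish with $N$, the last step---and the main obstacle---is to convert this expectation bound into an actual graph of high girth without breaking the product structure. I would take $N = \Theta((9mk)^{g+2})$, so that $N$ is a factor of $\Omega(mk)$ larger than the expected cycle count, and apply an alteration argument: delete one ``column'' $a \in [N]$ (together with its matched neighbors under each of the $mk$ matchings, propagated to a fixed point) per short cycle found. Controlling the propagation is delicate because a naive column deletion destroys the rainbow property at matched neighbors, which is precisely why the generous slack of $g+3$ in the exponent (compared to the Moore-type lower bound of roughly $g/2$) is crucial for absorbing the propagation overhead. Since the lemma is quoted from \cite{braverman2017coding}, a cleaner alternative is to invoke their construction directly, or to use an explicit high-girth biregular bipartite graph (e.g., of LPS- or Margulis-type) and superimpose the product coloring structure in a straightforward manner.
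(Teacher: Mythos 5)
The paper does not prove this lemma; it cites it directly from \cite{braverman2017coding}, so there is no in-paper proof to compare against. What follows is an assessment of your attempt on its own.

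Your opening structural reduction is correct and is the right place to start: conditions (1)--(4) of the definition of $\calB$ do indeed force every $v\in V_1$ to carry a single session color, every $u\in V_2$ to carry a single edge color, and the bipartite subgraph between the session-$c$ class of $V_1$ and the edge-$c'$ class of $V_2$ to be a perfect matching. Hence $n_1 = kN$ and $n_2 = mN$ for a common class size $N$, and the object to build is $mk$ perfect matchings on $[N]$ whose union has girth $\ge 2g$; your first-moment count of templates giving $O((mk)^{g})$ is also reasonable. The target size $N = (9mk)^{g+2}$ then yields $n_1,n_2\le (9mk)^{g+3}$ as desired.

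The gap is in the step you yourself flag as the main obstacle. The alteration you propose (``delete a column and propagate'') does not terminate in any controlled way, because the matchings are arbitrary and a column in $V_1$ is not matched to the same column across the $mk$ matchings; propagating to a fixed point can in principle delete almost everything, and you do not bound this cascade. More importantly, you are discarding the structure that makes the argument go through: if you take the Cayley-shift model $a\mapsto a + x_{c,c'}$ you mention, then for a fixed template the event ``this template closes'' is a \emph{single} linear condition on the shifts $x_{c,c'}$, not a per-start-vertex event. So you should not be computing an expectation and then altering; you should union-bound over templates directly: the probability that \emph{any} template of length $< 2g$ closes is at most $(\text{number of templates})/N$, which is $< 1$ for $N = \Theta((9mk)^{g+2})$, giving a choice of shifts with girth $\ge 2g$ outright and no alteration. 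This is exactly what the slack $g+3$ in the exponent is for. Note, however, that this union bound itself requires verifying a nontrivial fact you assert but do not prove: that for every admissible template (non-backtracking closed walk of length $2\ell < 2g$ in $K_{k,m}$) the closure relation $\sum_i (x_{c_i,c'_i} - x_{c_{i+1},c'_i}) = 0$ is a \emph{nonzero} linear form over $\mathbb{Z}_N$. Without this, a degenerate template with identically-vanishing relation would close for every choice of shifts, and the bound fails. This nontriviality claim is plausible and checkable for small $\ell$, but it is precisely the kind of step a complete proof has to nail down, and your write-up leaves it implicit. In short: the reduction is right, the Cayley construction is the right vehicle, but the final step needs the union bound over templates (not alteration) together with a proof that no template gives an identically-satisfied closure relation.
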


\paragraph{Performing the product along a colored bipartite graph.}
Having defined colored bipartite graphs, we are now ready to define the graph product of $\inst_1$ and $\inst_2$ along $B$.

For $i \in \{1, 2\}$ let $\inst_i = (G_i, \calS_i, F_i)$ be a gap instance with parameters $(a_i, b_i, f_i, k_i, m_i, r_i, u_i)$ and let $B \in \calB(n_1, n_2, 2(m_1 - f_1), k_2, g)$ be a colored bipartite graph with girth $g \defeq 2 b_1 b_2$. We call $\inst_1$ the \textbf{outer instance} and $\inst_2$ the \textbf{inner instance}. Denote the \textbf{product} gap \textbf{instance} $\inst_+ \defeq T(\inst_1, \inst_2, B)$ by the following procedure:
\begin{itemize}
\item Replace each non-cut edge $e = \{u,v\} \in E(G_1) \setminus F_1$ with two anti-parallel arcs $\vec{e} = (u,v), \cev{e} = (v, u)$ and let $\vec{E}(G_1) = \{ \vec{e}_1, \vec{e}_2, \ldots, \vec{e}_{2(m_1 - f_1)} \}$ be the set of all such arcs.
\item Construct $n_1$ copies of $(V(G_1), \vec{E}(G_1))$ and $n_2$ copies of $G_2$. Label the $i^{th}$ copy as $G_1\at{i}$ and $G_2\at{i}$.
\item Every cut edge $e \in F_1$ and every $i \in [n_1]$ replace edge $e$ in $G_1\at{i}$ by a path of length $a_2$ with the same endpoints. Let $f_e\at{i}$ be an arbitrary edge on this replacement path.
\item For every $(i, j) \in E(B)$ where $i \in [n_1], j \in [n_2]$ with edge color $\chi_1$ and session color $\chi_2$ do the following. Let $\vec{e}\at{i}_{\chi_1} = (x, y)$ be the $\chi_1$\nth arc in $G_1\at{i}$ and let $(s, t)$ be the $\chi_2$\nth terminal pair in $G_2\at{j}$. Merge $x$ with $s$ and $y$ with $t$; delete $\vec{e}\at{i}_{\chi_1} = (x, y)$ from $G_1\at{i}$.
\item For each session in the outer instance $(s_i, t_i, d_i = 1) \in \calS_1$ add a new session $(s_i\at{j}, t_i\at{j}, 1)$ in $G_1\at{j}$ for $j \in [n_1]$ to the product instance.
\item The cut edges $F_+$ in the product instance $\inst_+$ consist of the the union of the following: (i) \emph{one} arbitrary (for concreteness, first one) edge from all of the $a_2$-length paths that replaced cut edges in $G_1\at{i}$, i.e., $\{ f_e\at{i} \mid e \in F_1, i \in [n_1] \}$, and (ii) all cut edges in copies of $G_2$, i.e., $\{ e\at{i} \mid e \in F_2, i \in [n_2] \}$.
\end{itemize}

We now give bounds on how the parameters change after combining two instances.
First, we note that by composing network coding protocols for $\inst_1$ and $\inst_2$ in the natural way yields a network coding protocol whose makespan is at most the product of these protocols' respective makespans.

\begin{restatable}{lem}{tensoredcodingtime}\label{lem:tensoring-coding}(Coding makespan)
  The product instance $\inst_+$ admits a network coding protocol with makespan at most $a_1 a_2$.
\end{restatable}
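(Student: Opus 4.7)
The plan is to compose the inner and outer coding protocols in the natural way. I would partition the $a_1 a_2$ time steps of the combined protocol into $a_1$ successive \emph{macro-rounds}, each of length $a_2$. Each macro-round will simulate one round of the outer coding protocol of $\inst_1$, executed in parallel across all $n_1$ copies $G_1\at{1},\dots,G_1\at{n_1}$.

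Within a single macro-round, the outer protocol prescribes, for each directed arc $\vec e\in\vec E(G_1)$ and each copy index $i\in[n_1]$, one packet to be sent from the tail to the head of $\vec e\at{i}$. I would handle the two kinds of arcs separately. For a non-cut arc $\vec e\at{i}$, the construction identifies a unique $B$-edge $(i,j)$ of edge color $\chi_1(\vec e)$, whose session color $\chi_2$ designates a terminal pair $(s,t)$ in $G_2\at{j}$ whose endpoints were merged with the endpoints of $\vec e\at{i}$. To transport the outer packet, I would invoke the hypothesized $\inst_2$ coding protocol on $G_2\at{j}$, feeding this packet as the private input to session $(s,t)$; by assumption this terminates within $a_2$ rounds and leaves the decoded packet at the head of $\vec e\at{i}$. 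Sessions of $G_2\at{j}$ whose corresponding outer arc is idle in the current macro-round can be fed a dummy input. For a cut edge $e\in F_1$, the outer packet is simply pipelined along the length-$a_2$ replacement path, arriving within the same $a_2$ steps.

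Correctness of the simulation hinges on two combinatorial properties of the colored bipartite graph $B$. Since each vertex of $V_1$ sees every edge color exactly once among its incident $B$-edges, each non-cut arc of $G_1\at{i}$ is simulated by exactly one inner copy; and since each vertex of $V_2$ sees every session color exactly once, distinct outer arcs routed through a common $G_2\at{j}$ are assigned distinct inner sessions. Hence all $n_2$ invocations of the inner protocol, together with all cut-edge pipelines, can run in parallel without conflict within a macro-round. After $a_1$ macro-rounds, the outer protocol has terminated in every copy $G_1\at{i}$, so every session of $\inst_+$ is satisfied.

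The one subtlety I anticipate is the dual role played by merged nodes: a node $x\in G_1\at{i}$ identified with the inner source $s\in G_2\at{j}$ must inject, as the private input of session $(s,t)$, exactly the packet that the outer protocol dictates to traverse $\vec e\at{i}$ in the current round, while the node $y\equiv t$ must interpret the decoded inner output as the packet it would have received along the outer arc. This is a purely syntactic re-identification at a single physical node and incurs no extra time; the only causal requirement is that the outer packet be determined by information available at $x$ at the start of the macro-round, which holds by definition of the outer protocol. The total makespan is therefore $a_1 a_2$, as claimed.
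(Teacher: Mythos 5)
Your proof is correct and follows the paper's approach: $a_1$ macro-rounds of length $a_2$, each simulating one outer round via the inner protocol on non-cut arcs and pipelining on the length-$a_2$ cut-edge replacement paths. You spell out some details the paper's terse proof elides (the role of the coloring properties in guaranteeing conflict-free parallel invocations, the explicit treatment of cut edges, and the syntactic re-identification at merged nodes), but the underlying argument is identical.
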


Less obviously, we show that if we choose a large enough girth $g$ for the colored bipartite graph, we have that the $b$ parameter of the obtained product graph is at least the product of the corresponding $b$ parameters of the inner and outer instances.

\begin{lem}[Routing makespan]\label{lem:tensoring-routing}
  Let $\inst_+ = (G_+, \calS_+, F_+)$ be the product instance using a colored bipartite graph $B$ of girth $g \defeq 2 b_1 b_2$ and let $\textrm{dist}_{G_+ \setminus F_+}(\cdot, \cdot)$ be the hop-distance in $G_+$ with all the edges of $F_+$ deleted. We have that $\textrm{dist}_{G_+ \setminus F_+}(s_i, t_i) \ge \min(b_1 b_2, \frac{g}{2}) = b_1 b_2$ for all $(s_i, t_i, d_i) \in \calS_+$.
\end{lem}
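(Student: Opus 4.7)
Fix any path $P$ from $s_i\at{j}$ to $t_i\at{j}$ in $G_+ \setminus F_+$; we lower-bound $|P|$ by two complementary projections. First, decompose $P$ into maximal segments, each contained in a single copy of $G_1$ or $G_2$, with consecutive segments meeting at merged terminals. Track the ``abstract'' $V(G_1)$-vertex along $P$. The key observation is that within a single $G_1\at{i}$-segment this abstract vertex is immobile---non-cut edges of $G_1$ have been deleted from $G_1\at{i}$ (each replaced by a $G_2\at{j'}$), and each replacement path for a cut edge is broken by its designated edge $f_e\at{i} \in F_+$. A $G_2\at{j'}$-segment enters and exits at merged terminals; by the coloring property of $B$ (all edges incident to $j' \in V_2$ share a single edge-color $\chi_1(j')$), every merged terminal of $G_2\at{j'}$ corresponds to an endpoint of the single arc $\chi_1(j')$ of $G_1$, so such a segment either stays at the same endpoint or moves across $\chi_1(j')$. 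Consequently the abstract projection is an $s_i$-to-$t_i$ walk in $G_1 \setminus F_1$ of length $\ge b_1$, by the $b_1$-parameter of $\inst_1$.

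Next, classify each $G_2\at{j'}$-segment as \emph{same-session} (entry and exit are the source and sink of the same session of $\inst_2$, which corresponds to a backtrack $i \to j' \to i$ in the induced $B$-walk $\pi(P)$) or \emph{cross-session} (entry and exit are terminals of different sessions, corresponding to a non-backtrack $i \to j' \to i'$ with $i \neq i'$ in $B$). Same-session segments have length $\ge b_2$ within $G_2\at{j'} \setminus F_2 \subseteq G_+ \setminus F_+$ by the $b_2$-parameter of $\inst_2$, while cross-session segments have length $\ge 1$ by the terminal-disjointness assumption on gap instances. If every arc-step of the abstract walk is realized by a same-session segment, summing yields $|P| \ge b_1 \cdot b_2$ immediately.

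Otherwise, the closed walk $\pi(P)$ in $B$ (from $j$ to $j$) contains non-backtrack edges. Iteratively remove backtracks to obtain a non-backtracking closed walk $\tilde{\pi}(P)$ based at $j$. If $\tilde{\pi}(P)$ is non-trivial, the girth of $B$ forces $|\tilde{\pi}(P)| \ge g$; since each surviving edge of $\pi(P)$ corresponds to traversing a distinct merged-terminal pair inside some $G_2\at{j'}$ (and two distinct merged terminals are separated by at least one edge by terminal-disjointness), we obtain $|P| \ge g/2 = b_1 b_2$ using $g = 2 b_1 b_2$.

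\textbf{Main obstacle.} The delicate sub-case is when $\tilde{\pi}(P)$ reduces to the trivial walk despite cross-session segments---i.e., the excursions to other $G_1$-copies cancel as a null-homotopic loop in $B$. Here neither the direct girth bound nor the pure same-session bound applies. The natural fix is an amortization argument using the triangle inequality in the $G_2 \setminus F_2$ metric: any chain of $G_2\at{j'}$-traversals whose net effect in the abstract walk is to cross arc $\chi_1(j')$ once must have total length at least $\textrm{dist}_{G_2 \setminus F_2}(s_a, t_a) \ge b_2$ for the relevant session $a$, since any intermediate terminal can be eliminated by the triangle inequality. Matching each abstract arc-step with its ``cancelling'' detour and checking that these detours are edge-disjoint is the main technical hurdle; once established, summing over the $\ge b_1$ arc-steps of the abstract walk recovers $|P| \ge b_1 b_2$.
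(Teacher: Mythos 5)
Your decomposition into an abstract $G_1$-walk plus $G_2$-segments, the classification into same-session and cross-session segments, and the non-backtracking-walk/girth argument are a reasonable alternative way to set up the analysis, and your three cases line up with the paper's split between ``the induced $B$-walk spans a cycle'' and ``it spans a tree'' (your all-same-session case is the paper's depth-$1$ tree). But the gap you flag at the end is real, and the amortization you sketch does not close it. The $b_2$ parameter of the inner instance lower-bounds only \emph{within-session} distances $\textrm{dist}_{G_2 \setminus F_2}(s_a, t_a) \geq b_2$; it says nothing about $\textrm{dist}_{G_2 \setminus F_2}(s_a, t_b)$ for $a \neq b$, which can be as small as $1$. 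So the step ``any chain of $G_2\at{j'}$-traversals whose net effect is to cross $\chi_1(j')$ once has total length $\geq b_2$'' is simply false when the entry and exit terminals of the chain belong to different sessions — which is precisely what happens in the null-homotopic case once cross-session segments are present. The triangle inequality inside $G_2\at{j'}$ cannot convert a cross-session pair into a same-session pair; the edges spent between non-contiguous visits to $G_2\at{j'}$ live in other copies and do not participate in that inequality.

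The paper avoids the amortization entirely with a device you did not introduce: it chooses $p$ to be a shortest $s_i \pathto t_i$ path over \emph{all} sessions $(s_i, t_i) \in \calS_+$, not just for one fixed session. In the tree case of depth $> 1$ (your null-homotopic sub-case with a visit to another outer copy $G_1\at{v'}$), the walk's Euler-tour structure forces $p$ to contain a sub-path $p'$ between the two terminals of another $\calS_+$-session, plus a disjoint piece $p''$ with $|p''| \geq 1$; global minimality of $p$ then gives $|p| \geq |p'| + |p''| \geq 1 + \textrm{dist}(s_j, t_j) \geq 1 + |p|$, a contradiction. This eliminates depth $> 1$ outright, so only the depth-$1$ tree survives, where every $G_2$-traversal is automatically same-session and your direct $b_1 \cdot b_2$ count applies. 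If you want to keep your framework, import this minimality-over-all-sessions choice and use it to kill the null-homotopic-with-cross-session case, rather than trying to prove the per-detour $b_2$ bound, which as stated does not hold.
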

\begin{proof}
  Let $p$ be a path in $G_+ \setminus F_+$ between some terminals $s_i \pathto t_i$ that has the smallest hop-length among all $(s_i, t_i, d_i) \in \calS_+$. We want to show that $|p| \ge \min(b_1 b_2, \frac{g}{2})$.

  First, let $q$ be the path in the colored bipartite graph $B$ that corresponds to $p$. There are some technical issues with defining $q$ since merging vertices in the graph product has the consequence that some $v \in V(G_+)$ belong to multiple nodes $V(B)$. To formally specify $q$, we use the following equivalent rephrasing of the graph product that will generate an ``expanded instance'' $G_+'$. Instead of ``merging'' two vertices $u, v$ as in $G_+$, connect then with an edge $e$ of hop-length $h(e) = 0$ and add $e$ to $G_+'$. Edges from $G_+$ have hop-length $h(e) = 1$ and are analogously added to $G_+'$. The path $p$ can be equivalently specified as the path between $s_i \pathto t_i$ in $G_+' \setminus F_+$ that minimizes the distance $\textrm{dist}_h(s_i, t_i)$. Now, each vertex $V(G_+')$ belongs to exactly one vertex $V(B)$, hence the path $q$ in $B$ corresponding to $p$ is well-defined. Note that $p$ is a closed path in $G_+$ and $q$ is a closed path in $B$.

  Suppose that $q$ spans a non-degenerate cycle in $B$. Then $|p| \ge \frac{|q|}{2} \ge \frac{g}{2}$, where the last inequality $|q| \ge g$ is due to the girth of $G$. The first inequality $|p| \ge \frac{|q|}{2}$ is due to the fact that when $q$ enters a node $v \in V_2(B)$, a node representing an inner instance, the corresponding path $p$ had to traverse at least one inner instance edge before its exit since the set of terminals is disjoint and a path can enter/exit inner instances only in terminals.

  Suppose now that $q$ does not span a cycle in $B$, therefore the set of edges in $q$ span a tree $\calT$ in $B$ and $q$ is simply the (rotation of the unique) Eulerian cycle of that tree. Notation-wise, let $v \in V_1(B)$ be the node in the colored bipartite graph $B$ that contains the critical terminals $s_i$ and $t_i$ and suppose that $\calT$ is rooted in $v$. If the depth of $\calT$ is $1$ (i.e., consists only of $v \in V_1(B)$ and direct children $w_1, \ldots, w_t \in V_2(B)$), then $p$ must correspond to a $s_i \pathto t_i$ walk in $v$, where each (non-cut) edge traversal is achieved by a non-cut walk in the inner instance $w_j$ between a set of inner terminals. Note that every $s_i \pathto t_i$ non-cut walk has hop-length at least $b_1$ and each non-cut walk in the inner instance has hop-length at least $b_2$, for a cumulative $b_1 \cdot b_2$.

  Finally, suppose that $\calT$ has depth more than $1$, therefore there exists two $v, w \in V(\calT)$ and $v, w \in V_1(B)$. Since $\calT$ is traversed via an Eulerian cycle, the path $p$ passes through two terminals of $(s_j, t_j, \cdot) \in \calS_+$. Let $p'$ be the natural part of $p$ going from $s_j \pathto t_j$, e.g., obtained by clipping the path corresponding to the subtree of $q$ in $\calS$. Furthermore, let $p''$ be the part of the $p$ connecting $v$ and $w$ and is disjoint from $p''$. From the last paragraph we know that $|p''| \ge 1$ since it passes through at least one $u \in V_2(B)$. Also, by minimality of $s_i \pathto t_i$ we have that $\textrm{dist}_h(s_j, t_j) \ge \textrm{dist}_h(s_i, t_i)$. Now we have a contradiction since $\textrm{dist}_h(s_i, t_i) = |p| \ge |p'| + |p''| \ge 1 + \textrm{dist}_h(s_j, t_j)$.
\end{proof}

Combining Lemmas \ref{lem:tensoring-coding} and \ref{lem:tensoring-routing} together with some simple calculations (deferred to \Cref{sec:lower-bounds-appendix}), we find that the product instance is a gap instance with the following parameters.

\begin{restatable}{lem}{tensorparams}\label{lem:one-tensor-params}
  For $i \in \{1, 2\}$ let $\inst_i = (G_i, \calS_i, F_i)$ be a gap instance with parameters $(a_i, b_i, f_i, k_i, m_i, r_i, u_i)$ with $\frac{m_i}{f_i} \ge 2$ and $a_i \ge 2$; let $B \in \calB(n_1, n_2, 2(m_1 - f_1), k_2, 2 b_1 b_2)$ be a colored bipartite graph. Then $\inst_+ \defeq T(G_1, G_2, B)$ is a gap instance with parameters $a_+ \defeq a_1 a_2$, $b_+ \defeq b_1 b_2$, $f_+ \defeq n_1 f_1 + n_2 f_2$, $k_+ \defeq n_1 k_1$, $m_+ \defeq a_2 n_1 f_1 + n_2 m_2$, $r_+ \defeq r_1 \frac{1}{1 + 2u_1/r_2}$, $u_+ \defeq u_2 \frac{1 + a_2/2}{1 + r_2 / (2 u_1)}$. Moreover, $\frac{m_+}{f_+} \ge 2$ and $a_+ \ge 2$.
\end{restatable}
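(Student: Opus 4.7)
My plan is to verify each of the seven output parameters in turn. Two of them are exactly the content of the preceding lemmas, and the remaining five are either direct counts or short algebraic manipulations of a single identity coming from $B$. Specifically, $a_+ = a_1 a_2$ is exactly \Cref{lem:tensoring-coding}, and $b_+ = b_1 b_2$ is exactly \Cref{lem:tensoring-routing} (using the chosen girth $g = 2 b_1 b_2$). Reading off the construction: each of the $n_1$ outer copies contributes $f_1$ marker edges $f_e\at{i}$ to $F_+$ and each of the $n_2$ inner copies contributes $f_2$ cut edges, giving $f_+ = n_1 f_1 + n_2 f_2$; sessions come only from outer copies, giving $k_+ = n_1 k_1$; and for edges, each outer cut edge is replaced by a length-$a_2$ path while each non-cut outer edge is deleted after being merged into inner-session endpoints, so $m_+ = a_2 n_1 f_1 + n_2 m_2$. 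The closing invariants $a_+ \ge 4 \ge 2$ and $m_+ \ge 2 f_+$ are immediate from $a_2 \ge 2$ and $m_2/f_2 \ge 2$: these give $a_2 n_1 f_1 \ge 2 n_1 f_1$ and $n_2 m_2 \ge 2 n_2 f_2$, which sum to $m_+ \ge 2 f_+$.

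The only step with any content is the computation of $r_+$ and $u_+$. The essential observation is the bipartite degree equation $n_1 \cdot 2(m_1 - f_1) = n_2 k_2$ coming from $B \in \calB(\cdot,\cdot,2(m_1-f_1),k_2,\cdot)$. Introducing the shorthand $\gamma \defeq n_2 f_2 / (n_1 f_1) = 2(m_1-f_1) f_2/(k_2 f_1)$ and dividing the numerators and denominators of the two target ratios by $n_1 f_1$, both collapse to simple functions of $\gamma$:
\[
\frac{k_+}{f_+} = \frac{k_1/f_1}{1+\gamma}, \qquad \frac{m_+}{f_+} \le \frac{a_2 + (m_2/f_2)\,\gamma}{1+\gamma}.
\]
Substituting the input bounds $k_1/f_1 \ge r_1$, $m_1/f_1 \le u_1$, $k_2/f_2 \ge r_2$, and $m_2/f_2 \le u_2$ gives $\gamma \le 2u_1/r_2$, and the first display immediately yields $k_+/f_+ \ge r_1/(1+2u_1/r_2) = r_+$. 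For $u_+$, the right-hand side of the second display is monotone increasing in $\gamma$ when $a_2 \le u_2$, so plugging in the upper bound $\gamma \le 2u_1/r_2$ and simplifying produces the stated $u_+ = u_2(1+a_2/2)/(1+r_2/(2u_1))$.

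The main subtlety I anticipate is this last algebraic collapse: replacing $\gamma$ with its upper bound requires $a_2 \le u_2$ for monotonicity, and turning the resulting fraction into the stated form implicitly uses $r_2 \le u_1 u_2$. Both are easy side invariants of the recursive family being built. The base instances of \Cref{base-instance} have $a = 3$, $r = k$, $u = \Theta(k^2)$, so $a \le u$ and $r \le u$ hold trivially, and the calculations above show these inequalities are preserved at each product step (since $a_+$ grows much more slowly than $u_+$, and $r_+ \le u_+$ can be tracked in the same way). With these side conditions explicitly noted, the entire lemma reduces to unwinding the construction and one short algebraic manipulation.
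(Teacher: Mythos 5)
The verifications of $a_+,b_+,f_+,k_+,m_+$, the closing invariants $m_+ \ge 2f_+$, $a_+ \ge 2$, and the bound on $r_+$ all match the paper. Your derivation of $u_+$, however, has a genuine gap. After writing $m_+/f_+$ as a function of the single ratio $\gamma = n_2 f_2/(n_1 f_1)$ and replacing $m_2/f_2$ with $u_2$, you substitute the upper bound $\gamma \le 2u_1/r_2$ into \emph{both} the numerator and the denominator of $\frac{a_2+u_2\gamma}{1+\gamma}$. As you notice, this is only sound when that expression is nondecreasing in $\gamma$, i.e.\ when $u_2 \ge a_2$. But $u_2 \ge a_2$ is \emph{not} among the lemma's hypotheses, and it does not follow from them (e.g., a long path with half its edges declared cut satisfies $m/f\ge 2$ and $a\ge 2$ but has $a \gg m/f$, and $u$ may be taken tight). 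So your argument proves a weaker statement than the lemma, conditional on an extra invariant that you only assert holds for the recursive family without establishing it.

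The paper's proof sidesteps monotonicity entirely. Writing $\delta = n_1 f_1/n_2$ (your $1/\gamma$, up to the factor $f_2$), it factors
\[
\frac{m_+}{f_+} = \frac{m_2}{f_2}\cdot\frac{1+a_2\,\delta/m_2}{1+\delta/f_2},
\]
and then bounds the three pieces \emph{separately and in compatible directions}: $m_2/f_2 \le u_2$; the numerator term $\delta/m_2 = \frac{k_2}{2m_2(m_1/f_1-1)} \le \tfrac12$ using $k_2 \le m_2$ and $m_1/f_1 \ge 2$; and the denominator term $\delta/f_2 = \frac{k_2 f_1}{2(m_1-f_1)f_2} \ge \frac{1}{2u_1}\cdot\frac{k_2}{f_2} \ge \frac{r_2}{2u_1}$ using $m_1/f_1 \le u_1$ and $k_2/f_2\ge r_2$. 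Upper-bounding the numerator and lower-bounding the denominator is valid regardless of monotonicity in $\delta$, which is why the paper needs no side conditions. (Incidentally, your second worry, $r_2 \le u_1 u_2$, is not a real constraint: it follows from $r_2 \le k_2/f_2 \le m_2/f_2 \le u_2$ and $u_1 \ge m_1/f_1 \ge 2$, so only the $a_2 \le u_2$ condition is a genuine shortcoming of your route.) To fully match the lemma as stated you should switch to the numerator/denominator split rather than plugging one bound on $\gamma$ into both.
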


\subsection{Iterating the Graph Product}

Having bounded the parameters obtained by combining two gap instances, we are now ready to define a recursive family of gap instances from which we obtain our polylogarithmic makespan network coding gap.

\begin{Def}
  We recursively define a collection of gap instances $( \inst(i, r) )_{i \ge 0, r \ge 5}$, and denote its parameters by $(a_{i, r}, b_{i, r}, f_{i, r}, k_{i, r}, m_{i, r}, r_{i, r}, u_{i, r})$. For the base case, we let $\inst(0, r)$ be the gap instance of \Cref{base-instance} with parameters $(3, 5, 1, r, \theta(r^2), r, \theta(r^2))$. For $i+1 > 0$ we define $\inst(i+1, r) \defeq T(\inst_1, \inst_2)$. Here, $\inst_1 \defeq \inst(i, 3r)$ and $\inst_2 \defeq \inst(i, u_{i, 3r})$, with parameters $(a_{1}, \ldots, u_{1})$ and $(a_2, \ldots, u_2)$, respectively.
\end{Def}

In other words, $\inst_1$ is defined such that $r_1 = 3 r_+$ and $\inst_2$ such that $r_2 = u_1$. In \Cref{sec:lower-bounds-appendix} we study the growth of the parameters of our gap instance families. Two parameters that are easy to bound for this construction are the following.

\begin{obs}\label{tensoring-routing-coding}
	For any $i\geq 0$ and $r\geq 5$, we have $a_{i,r}=3^{2^{i}}$ and $b_{i,r}=5^{2^{i}}$.
\end{obs}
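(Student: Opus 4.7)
The plan is to proceed by straightforward induction on $i$, using the recursive definition of $\inst(i,r)$ together with the composition bounds on $a_+$ and $b_+$ from \Cref{lem:one-tensor-params}. The key observation making the induction clean is that the claimed formulas $3^{2^i}$ and $5^{2^i}$ do not depend on $r$, so the inductive hypothesis applies uniformly to both sub-instances $\inst_1 = \inst(i, 3r)$ and $\inst_2 = \inst(i, u_{i,3r})$, even though they are instantiated with different second parameters.

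For the base case $i=0$, \Cref{base-instance} gives $a_{0,r} = 3 = 3^{2^0}$ and $b_{0,r} = 5 = 5^{2^0}$ for every admissible $r \ge 5$, matching the claim. For the inductive step, assume that for some $i \ge 0$ and every $r \ge 5$ we have $a_{i,r} = 3^{2^i}$ and $b_{i,r} = 5^{2^i}$. By definition, $\inst(i+1, r) = T(\inst_1, \inst_2)$ where $\inst_1 = \inst(i, 3r)$ has parameters $(a_1, b_1, \ldots)$ and $\inst_2 = \inst(i, u_{i, 3r})$ has parameters $(a_2, b_2, \ldots)$. Applying \Cref{lem:one-tensor-params} yields
\begin{align*}
a_{i+1, r} &= a_1 \cdot a_2 = a_{i, 3r} \cdot a_{i, u_{i, 3r}} = 3^{2^i} \cdot 3^{2^i} = 3^{2^{i+1}}, \\
b_{i+1, r} &= b_1 \cdot b_2 = b_{i, 3r} \cdot b_{i, u_{i, 3r}} = 5^{2^i} \cdot 5^{2^i} = 5^{2^{i+1}},
\end{align*}
completing the induction.

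The one subtlety worth checking is that the preconditions of \Cref{lem:one-tensor-params}, namely $m_j/f_j \ge 2$ and $a_j \ge 2$ for $j \in \{1, 2\}$, hold at every level of the recursion. For the base case, \Cref{base-instance} gives $m_{0,r}/f_{0,r} = \Theta(r^2) \ge 2$ (for $r \ge 5$) and $a_{0,r} = 3 \ge 2$; for the inductive step, the ``Moreover'' clause of \Cref{lem:one-tensor-params} guarantees that the constructed product instance again satisfies $m_+/f_+ \ge 2$ and $a_+ \ge 2$, so these hypotheses propagate throughout the recursion and each application of the lemma is valid. There is no real obstacle here: the entire statement reduces to the multiplicativity of $a_+$ and $b_+$ under the product, combined with the $r$-independence of the closed-form expressions.
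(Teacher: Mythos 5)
Your proof is correct and takes exactly the approach the paper has in mind; the paper simply states this as an unproved observation because, as you note, it reduces immediately to the multiplicativity $a_+ = a_1 a_2$ and $b_+ = b_1 b_2$ from \Cref{lem:one-tensor-params}, the $r$-independence of the closed form, and the base case from \Cref{base-instance}. Your verification that the preconditions ($m_j/f_j \ge 2$, $a_j \ge 2$, and the admissibility $u_{i,3r} \ge 5$ of the second parameter) propagate through the recursion is a welcome bit of rigor that the paper leaves implicit.
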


A less immediate bound, whose proof is also deferred to \Cref{sec:lower-bounds-appendix}, is the following bound on the number of edges of the gap instances..

\begin{restatable}{lem}{mbound}\label{lem:iterated-m-bound}
  We have that $\log m_{i, r} \le 2^{O(2^i)} \log r$ for all $i \ge 0, r \ge 5$.
\end{restatable}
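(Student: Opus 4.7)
The plan is to prove simultaneously by induction on $i$ that both $\log m_{i,r} \le C_i \log r$ and $\log u_{i,r} \le D_i \log r$ for all $r \ge 5$, with $C_i, D_i = 2^{O(2^i)}$. Tracking $u$ alongside $m$ is necessary because the recursive definition sets the inner instance's parameter to $r_2 = u_1 = m_1/f_1$, so $\log m_2 = \log m_{i,\, u_{i,3r}}$ enters the $m$-recursion through $u$. The base case follows from \Cref{base-instance}: since $m_{0,r}, u_{0,r} = \Theta(r^2)$, we can take $C_0, D_0 = O(1)$.

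For the inductive step on $u$, the choice $r_2 = u_1$ causes the denominator in the formula of \Cref{lem:one-tensor-params} to collapse: $1 + r_2/(2u_1) = 3/2$, so $u_+ = u_2 (2 + a_2)/3$. Using $\log a_2 = O(2^i)$ from \Cref{tensoring-routing-coding} together with the inductive hypothesis, one obtains
\[
\log u_{i+1,r} \le \log u_2 + O(2^i) \le D_i \log u_{i,3r} + O(2^i) \le 2 D_i^2 \log r + O(2^i),
\]
using $\log(3r) \le 2 \log r$ for $r \ge 5$. Hence $D_{i+1} \le 2 D_i^2 + O(2^i)$, which readily solves to $\log D_i = O(2^i)$.

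For the inductive step on $m$, the plan is to apply \Cref{colored-bip} with girth $g = 2 b_1 b_2 = 2 \cdot 5^{2^{i+1}} = 2^{O(2^{i+1})}$ (using $b_{i,r} = 5^{2^i}$ from \Cref{tensoring-routing-coding}). This bounds
\[
\log n_1, \log n_2 \le O(b_1 b_2)(\log m_1 + \log k_2) \le 2^{O(2^{i+1})} (\log m_1 + \log m_2),
\]
where the second inequality uses the elementary fact $k \le m$ valid for every gap instance (since terminals are disjoint and $G$ is connected, there are $2k$ distinct vertices and thus $m \ge 2k - 1$). Substituting the inductive bounds $\log m_1 \le 2 C_i \log r$ and $\log m_2 \le C_i \log u_{i,3r} \le 2 C_i D_i \log r$ into the formula $m_+ = a_2 n_1 f_1 + n_2 m_2$ from \Cref{lem:one-tensor-params}, the dominant term gives $\log m_{i+1,r} \le 2^{O(2^{i+1})} \cdot C_i D_i \log r$, so $\log C_{i+1} \le \log C_i + \log D_i + O(2^{i+1}) = \log C_i + O(2^{i+1})$. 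Telescoping yields $\log C_i = O(2^i)$, closing the induction.

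The main subtlety is keeping both recursions at a doubly-exponential rate rather than a triply-exponential one. As the remark preceding the lemma explains, the naive alternative $r_2 = m_1$ would prevent the cancellation that collapses $1 + r_2/(2u_1)$ to a constant, instead injecting an extra factor of $u_1$ (and hence of $C_i$) into each step of the $u$-recursion. This would drive $D_i$ (and through it, $C_i$) up to triply-exponential growth, yielding only $m_{i,r} \le r^{2^{O(i \cdot 2^i)}}$ and a sub-polylogarithmic coding gap. The carefully chosen denominator cancellation is thus the crux of the argument.
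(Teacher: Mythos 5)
Your proof is correct and takes essentially the same approach as the paper: the paper factors the argument into a recursive-parameter lemma (\Cref{lem:iterated-tensoring-invariants}), then proves the $u$-bound (\Cref{lem:iterated-u-bound}), then the $m$-bound, whereas you interleave the $u$- and $m$-inductions simultaneously, but the recursions, the key cancellation from $r_2 = u_1$, the use of $k_2 \le m_2$, and the growth rates all match.
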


\subsection{Lower Bounding the Coding Gap}

We are now ready to prove this section's main result -- a polylog$(k)$ makespan coding gap.
\latencylb*
\begin{proof}
  For each $i \ge 0$ and $r \defeq 5^{2^i}$, consider $\inst_{i,r}$ as defined above. By \Cref{tensoring-routing-coding} this gap instance has coding makespan at most $a_{i, r} = 3^{2^i}$. Moreover, also by \Cref{tensoring-routing-coding}, this instance has $b_{i, r} = 5^{2^i}$, and so by  \Cref{lem:parameters-to-routing-lb} its routing makespan is at least $5^{2^i}$. Hence the makespan coding gap of $\inst_{i,r}$ is at least $(5/3)^{2^i}$. It remains to bound this gap in terms of $k\defeq k_{i.r}$.
  
  As the terminals of $\inst_{i,r}$ are disjoint, we have that $k$ is upper bounded by the number of nodes of $\inst_{i,r}$, which is in turn upper bounded by $m_{i,r}$, as $\inst_{i,r}$ is connected and not acyclic.
  That is, $k\leq m_{i,r}$.
  But by \Cref{lem:iterated-m-bound}, we have that $\log m_{i, r} \le 2^{O(2^i)} \cdot \log r = 2^{O(2^i)} \cdot O(2^i) \leq 2^{O(2^i)} \le 2^{c' \cdot 2^i}$,
  for some universal constant $c' > 0$.
  Therefore, stated in terms of $k$, the makespan coding gap is at least
  \begin{align*}
  (5/3)^{2^i} & = 2^{2^i \log 5/3} = ( 2^{c' 2^i} )^{\frac{\log 5/3}{c'}} \ge (\log m_{i, r})^{c} \ge \log^{c} k,
  \end{align*}
  where $c \defeq \frac{\log 5/3}{c'}>0$ is a universal constant, as claimed.
\end{proof}


\section{Coding Gaps for Other Functions of Completion Times}\label{sec:lp-norm-gaps}

In this section we extend our coding gap results to other time complexity measures besides the makespan. For our upper bounds, we show that our coding gaps for $\ell_\infty$ minimization of the completion times (makespan) implies similar bounds for a wide variety of functions, including all weighted $\ell_p$ norms; proving in a sense that $\ell_{\infty}$ is the ``hardest'' norm to bound. The following lemma underlies this connection.

\begin{lem}\label{ell-inf-domination}
	Let $\alpha$ be an upper bound on the coding gap for completion times' $\ell_\infty$ norm (makespan). 
	Then, if multiple-unicast instance $\calM$ admits a coding protocol with completion times $(T_1,T_2,\dots,T_k)$, there exists a routing protocol for $\calM$ with completion times placewise at most $(4\alpha\cdot T_1,4\alpha\cdot T_2,\dots,4\alpha\cdot T_k)$.
\end{lem}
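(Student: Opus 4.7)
The plan is a geometric bucketing argument on the completion times. I would partition the sessions into the classes $C_j \defeq \{i \in [k] : 2^{j-1} < T_i \le 2^j\}$ for $j \in \integers_{\ge 0}$; only finitely many $C_j$ are nonempty. For each such class, let $\calM_j$ be the sub-instance induced by the sessions in $C_j$ on the same graph $G$ with the same capacities. I claim that $\calM_j$ admits a coding protocol of makespan at most $2^j$: just run the given coding protocol for $\calM$, where the sources outside $C_j$ transmit fixed (say zero) packets. Since, in the original protocol, each sink $i \in C_j$ has a decoding function that recovers its source's packet from its view for every input, that same decoding function still works in this restricted input distribution, and by time $T_i \le 2^j$ the sink has enough information.

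Next, I would invoke the makespan coding gap hypothesis on each $\calM_j$, which yields a routing protocol $\mathcal{R}_j$ for $\calM_j$ of makespan at most $\alpha \cdot 2^j$. I then define the final routing protocol for $\calM$ by concatenating $\mathcal{R}_0, \mathcal{R}_1, \mathcal{R}_2, \dots$ sequentially in time, in order of increasing $j$. Since at any moment only one $\mathcal{R}_j$ is active, the per-round edge capacity constraints of $G$ are respected; in particular the concatenation really is a valid routing protocol. Protocol $\mathcal{R}_j$ starts at time at most $\sum_{j' < j} \alpha \cdot 2^{j'} = \alpha(2^j - 1) < \alpha \cdot 2^j$ and finishes by time $2\alpha \cdot 2^j$. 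For any session $i \in C_j$ this completion time satisfies $2 \alpha \cdot 2^j < 4\alpha \cdot T_i$, using $T_i > 2^{j-1}$, giving the claimed pointwise bound.

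The main, and essentially only, subtlety is the reduction from the coding protocol for $\calM$ to a coding protocol for the sub-instance $\calM_j$; this has to be articulated in the information-theoretic network-coding model of \Cref{sec:network-coding-model} so that ``zeroing out'' the unused sources is formally a valid transformation. Everything else is a clean geometric-scheduling computation. Note that the factor $4$ is not tight and can be sharpened by interleaving the $\mathcal{R}_j$'s more aggressively, but the bound as stated already suffices for the later applications to weighted $\ell_p$ norms of completion times.
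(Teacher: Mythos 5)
Your proof is correct and takes essentially the same approach as the paper's: geometrically bucket the sessions by completion time, invoke the makespan coding-gap bound on each bucketed sub-instance, and schedule the resulting routing protocols sequentially, incurring a geometric-series overhead of a factor $4$. The only differences are cosmetic (you use the half-open buckets $(2^{j-1},2^j]$ while the paper uses $[2^j,2^{j+1})$, and you spell out the ``zeroing unused sources'' step which the paper leaves implicit), so no substantive comparison is needed.
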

\begin{proof}
	Let $\calM$ be a multiple-unicast instance. 
	Let $(T_1,T_2,\dots,T_k)$ be the vector of completion times of some coding protocol. Without loss of generality, assume $T_1\leq T_2\leq \dots \leq T_k$. Next, for any $j\in \mathbb{Z}$, denote by $\calM_j$ the sub-instance of $\calM$ induced by the unicasts with completion time $T_i \in [2^j, 2^{j+1})$. Then, there exists a network coding protocol for each $\calM_j$ with makespan at most $2^{j+1}$. Consequently, there exists a routing protocol for $\calM_j$ with makespan at most $\alpha\cdot 2^{j+1}$. Scheduling these protocols in parallel, in order of increasing $j=0,1,2,\dots,$ we find that a unicast with completion time $T_i\in [2^j,2^{j+1})$ in the optimal coding protocol has completion time in the obtained routing protocol which is at most
	 \begin{align*}\sum_{j'\leq j} \alpha\cdot 2^{j'+1} & \leq 2\alpha \cdot 2^{j+1} \leq 4\alpha\cdot T_i.\qedhere
	\end{align*}
\end{proof}
Note that unlike our routing protocols for makespan minimization of \Cref{wc-gap-ub}, the proof here is non-constructive, as it assumes (approximate) knowledge of the completion times of each unicast in the optimal coding protocol. Nonetheless, this proof guarantees the existence of a protocol, which suffices for our needs. In particular, 
applying \Cref{ell-inf-domination} to the coding protocol minimizing a given weighted $\ell_p$ norm, we immediately obtain the following.
\begin{cor}
	Let $\alpha$ be an upper bound on the coding gap for completion times' $\ell_\infty$ norm (makespan). Then the coding gap for any weighted $\ell_p$ norm of the completion times is at most $4\alpha$.
\end{cor}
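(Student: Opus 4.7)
The plan is to deduce the corollary directly from \Cref{ell-inf-domination} by invoking two elementary properties of weighted $\ell_p$ norms: coordinatewise monotonicity and positive homogeneity of degree one. Since the lemma produces a routing protocol whose completion time vector is dominated coordinatewise by $4\alpha$ times the coding completion time vector, both properties transfer the domination from the vectors to any such norm.

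Concretely, I would first fix an arbitrary multiple-unicast instance $\calM$, weights $\vec{w}\in \reals_{\ge 0}^k$, and $p \in \reals_{\ge 0}$, writing $\mc{C}(\vec{T}) \defeq (\sum_{i\in [k]} w_i T_i^p)^{1/p}$. Let $\vec{T}^{\mathrm{cod}}$ denote the completion-time vector of the coding protocol minimizing $\mc{C}$ over all coding protocols for $\calM$. By \Cref{ell-inf-domination} applied to this coding protocol, there exists a routing protocol for $\calM$ with some completion time vector $\vec{T}^{\mathrm{rt}}$ satisfying $T_i^{\mathrm{rt}} \le 4\alpha \cdot T_i^{\mathrm{cod}}$ for every $i \in [k]$.

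Next, I would combine monotonicity and homogeneity of $\mc{C}$: since $w_i \ge 0$ and $p \ge 0$, the coordinatewise bound yields $\mc{C}(\vec{T}^{\mathrm{rt}}) \le \mc{C}(4\alpha \cdot \vec{T}^{\mathrm{cod}}) = 4\alpha \cdot \mc{C}(\vec{T}^{\mathrm{cod}})$. Because the coding gap for $\mc{C}$ is, by \Cref{def:coding-gap}, the ratio between the best routing $\mc{C}$-value and the best coding $\mc{C}$-value, and the right-hand side is exactly the latter, this chain of inequalities immediately gives a coding gap of at most $4\alpha$.

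There is no real obstacle here: the entire content of the corollary is packed into \Cref{ell-inf-domination}, which provides a \emph{pointwise} domination of completion time vectors rather than a norm-specific statement. Once that pointwise statement is in hand, extending to any weighted $\ell_p$ norm (and in fact to any monotone positively-homogeneous cost function of completion times) is a one-line consequence, so the proof should be no more than a couple of sentences in the final manuscript.
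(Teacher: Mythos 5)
Your proposal is correct and takes essentially the same route as the paper: apply \Cref{ell-inf-domination} to the coding protocol optimizing the given weighted $\ell_p$ norm to obtain a routing protocol whose completion-time vector is coordinatewise at most $4\alpha$ times larger, and then use coordinatewise monotonicity together with degree-one homogeneity of $\mathcal{C}(\vec{T}) = (\sum_i w_i T_i^p)^{1/p}$ to transfer the factor $4\alpha$ to the norm. The paper states this in one line ("applying \Cref{ell-inf-domination} to the coding protocol minimizing a given weighted $\ell_p$ norm"), and your writeup is simply a fuller rendering of the same argument.
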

Plugging in our coding gap upper bound of \Cref{wc-gap-ub}, we therefore obtain a generalization of \Cref{wc-gap-ub} to any weighted $\ell_p$ norm, as well as average completion time (which corresponds to $\ell_1$).
\begin{thm}\label{thm:ell_p_ub}
	The network coding gap for any weighted $\ell_p$ norms of completion times is at most
	$$O\left(\log(k) \cdot \log \left(\sum_i d_i / \min_i d_i\right)\right).$$
\end{thm}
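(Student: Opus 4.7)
The plan is to derive \Cref{thm:ell_p_ub} as a one-line consequence of the corollary immediately preceding it, instantiated with the makespan bound of \Cref{wc-gap-ub}. Concretely, I would let $\alpha = O(\log k \cdot \log(\sum_i d_i / \min_i d_i))$ be the makespan coding gap from \Cref{wc-gap-ub}, and then invoke the corollary, which states that the coding gap for any weighted $\ell_p$ norm of completion times is at most $4\alpha$. Since $4\alpha = O(\alpha)$, this immediately yields the claimed bound.

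To spell out why the corollary applies, I would unpack \Cref{ell-inf-domination}: for the multiple-unicast instance $\calM$, pick any coding protocol achieving the optimal weighted $\ell_p$ objective and let its completion vector be $(T_1^*, \ldots, T_k^*)$. The lemma (whose proof geometrically buckets unicasts by completion time and schedules bucket-wise routing protocols sequentially) produces a routing protocol for $\calM$ with completion vector $(T_1', \ldots, T_k')$ satisfying $T_i' \leq 4\alpha \cdot T_i^*$ for every $i \in [k]$. Because every weighted $\ell_p$ norm is monotone coordinate-wise and positively homogeneous, this componentwise domination yields
\begin{equation*}
\Bigl(\sum_{i\in [k]} w_i \cdot (T_i')^p\Bigr)^{1/p} \;\leq\; 4\alpha \cdot \Bigl(\sum_{i\in [k]} w_i \cdot (T_i^*)^p\Bigr)^{1/p},
\end{equation*}
and similarly for $p = \infty$ and $p = 1$ (the latter capturing average completion time). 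Taking the ratio against the optimal coding value gives the gap bound of $4\alpha$.

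I do not anticipate any technical obstacle, as both \Cref{ell-inf-domination} and its corollary are already established earlier in the section; the theorem is a direct substitution. The one conceptual remark worth making in the write-up is that, unlike \Cref{wc-gap-ub}, the resulting routing protocol is produced non-constructively, since \Cref{ell-inf-domination} relies on knowing (at least approximately) the coding-protocol completion times in order to form the geometric buckets — but existence is all that is needed for a coding-gap statement.
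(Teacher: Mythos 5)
Your proposal matches the paper's proof exactly: the theorem is obtained by plugging the makespan coding-gap bound $\alpha = O(\log k \cdot \log(\sum_i d_i/\min_i d_i))$ from \Cref{wc-gap-ub} into the corollary of \Cref{ell-inf-domination} (which gives a $4\alpha$ bound for any weighted $\ell_p$ norm). Your unpacking of why coordinate-wise domination and monotonicity/homogeneity of $\ell_p$ norms yield the corollary, and the remark on non-constructiveness, are both consistent with the paper's own remarks.
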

Note that similar bounds hold even more generally. In particular, for any sub-homogeneous function of degree $d$ (i.e., $f(c\cdot \vec{x})\leq c^d\cdot f(\vec{x})$, \Cref{ell-inf-domination} implies a coding gap of at most $(4\alpha)^d$, where $\alpha$ is the best upper bound on the coding gap for makespan minimization. 

\paragraph{Lower bounds.} 
As with makespan minimization, a polylogarithmic dependence in the problem parameters as in \Cref{thm:ell_p_ub}, as we prove below. Crucially, we rely on our makespan coding gap's examples displaying the property that under coding nearly all unicast sessions' completion time is at least polylogarithmically larger than under the best coding protocol.

\begin{thm}
  There exists an absolute constant $c > 0$ and an infinite family of $k$-unicast instances whose $\ell_p$-coding gap is at least $\Omega(\log^{c} k)$.
\end{thm}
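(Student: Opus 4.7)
The plan is to reuse the recursive gap instances $\inst(i,r)$ from \Cref{sec:lower-bounds}, but this time to choose the parameter $r$ aggressively so that a \emph{constant fraction} of the sessions must be slow under any routing protocol, not merely one session (which is all that the makespan lower bound required). The key observation is that \Cref{lem:parameters-to-routing-lb} actually yields more than a makespan bound: it asserts that at least $k\cdot(1-(b-1)/r)$ sessions have routing completion time at least $b$. Once $r$ is chosen sufficiently large relative to $b$, this becomes an $\Omega(k)$-sized set of slow sessions, which is precisely what an $\ell_p$ lower bound requires.

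Concretely, I would set $r\defeq 2\cdot b_{i,r}=2\cdot 5^{2^i}$. A short induction on the recursion (base $r_{0,r}=r$; inductive step $r_+=r_1/(1+2u_1/r_2)=r_1/3$ using $r_1=r_{i,3r}=3r$ and $r_2=u_1$) yields $r_{i,r}=r$, so $b_{i,r}/r_{i,r}\leq 1/2$. Applying \Cref{lem:parameters-to-routing-lb} to $\inst(i,r)$, any routing protocol has at least $k_{i,r}/2$ sessions completing no earlier than $5^{2^i}$, so its $\ell_p$-norm of completion times is at least $(k_{i,r}/2)^{1/p}\cdot 5^{2^i}$. Iterating \Cref{lem:tensoring-coding} yields a coding protocol of makespan $a_{i,r}=3^{2^i}$, and in particular its $\ell_p$-norm is at most $k_{i,r}^{1/p}\cdot 3^{2^i}$. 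Dividing, the $\ell_p$-coding gap is at least $2^{-1/p}\cdot(5/3)^{2^i}\geq \tfrac12\,(5/3)^{2^i}$, uniformly in $p\geq 1$.

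To translate $(5/3)^{2^i}$ into $\log^c k$, I would invoke \Cref{lem:iterated-m-bound} with $\log r=O(2^i)$ to get $\log k_{i,r}\leq \log m_{i,r}\leq 2^{O(2^i)}$, and then conclude $(5/3)^{2^i}\geq \log^c k_{i,r}$ for some absolute $c>0$ by the same computation that closes the proof of \Cref{wc-gap-lb}. The one mildly delicate point is the strengthened form of \Cref{lem:parameters-to-routing-lb}, i.e.\ converting a makespan bound into a bound on the \emph{number} of slow sessions. This is essentially implicit in the argument sketched after that observation: during the first $b-1$ rounds at most $f\cdot(b-1)$ unit packets can cross the cut $F$, so at most $f(b-1)$ unit-demand sessions can have completed by then, leaving at least $k-f(b-1)=k(1-(b-1)/r)$ sessions with completion time $\geq b$. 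I expect this strengthening to be the only step requiring care; the rest of the argument is a direct $\ell_p$-reweighting of the makespan lower bound.
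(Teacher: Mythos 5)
Your proposal is correct and follows essentially the same route as the paper's own proof: take the recursive gap instances $\inst(i,r)$, choose $r$ large enough relative to $b_{i,r}$ that the strengthened form of \Cref{lem:parameters-to-routing-lb} forces a constant fraction of sessions to have routing completion time at least $b_{i,r}$, and compare $\ell_p$ norms. The only (inessential) difference is the choice of $r$: you take $r = 2\cdot 5^{2^i}$, yielding a $\tfrac12$ fraction of slow sessions and an extra $2^{-1/p}$ in the gap, whereas the paper takes $r = 5^{2^{i+1}}$, yielding a $(1-o(1))$ fraction; both give $\Omega(\log^c k)$.
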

\begin{proof}
  We follow the proof of \Cref{wc-gap-lb} and consider $I_{i, r}$, this time setting $r \defeq (5^{2^i})^2 = 5^{2^{i+1}}$. This does not change the parameters $a_{i, r} = 3^{2^i}$ and $b_{i, r} = 5^{2^i}$, nor the bound $\alpha \defeq (5/3)^{2^i} \ge \log^c k$ for some absolute constant $c > 0$.
  This instance has a coding protocol with completion times $(a_{i, r}, \ldots, a_{i, r})$, and so this coding protocol's completion times' $\ell_p$ value is $a_{i,r}$. On the other hand, 
  by \Cref{lem:parameters-to-routing-lb}, at least $k \cdot (1 - \frac{b_{i, r}}{r}) \ge (1 - o(1)) \cdot k$ pairs have routing completion time at least $b_{i, r}$, where $o(1)$ tends to $0$ as $i \to \infty$. Consequently,
  the $\ell_p$-value of any routing protocol's completion times is at least $(1 - o(1)) \cdot b_{i, r}$. Since $b_{i,r} / a_{i, r} = \alpha \ge \log^c k$, we obtain the required coding gap.
\end{proof}


\section{Conclusions and Open Questions}\label{sec:conclusion}

In this paper we study completion-time coding gaps; i.e., the ratio for a given multiple-unicast instance, of the fastest routing protocol's completion time to the fastest coding protocol's completion time. We provide a strong characterization of these gaps in the worst case, showing they can be polylogarithmic in the problem parameters, but no greater. The paper raises a few exciting questions and research directions.

\smallskip

Probably the most natural question is to close our upper and lower bounds. We show that the network coding gap is polylogarithmic, but what polylog?
Another question, motivated by the super-constant speedups we prove coding can achieve over routing for this basic communication problem, is whether there exist efficient algorithms to compute the fastest coding protocol, mirroring results known for the fastest routing protocols for multiple unicasts \cite{leighton1994packet,leighton1999fast}, and for the highest-throughput coding protocols for multicast \cite{jaggi2003low}.
Another natural question is extending this study of network coding gaps for completion times to other widely-studied communication problems, such as multiple multicasts. 



\paragraph{Implications to other fields.} 
As discussed in \Cref{sec:intro} and \Cref{sec:related}, the conjectured non-existence of throughput coding gaps for multiple unicast has been used to prove (conditional) lower bounds in many seemingly-unrelated problems. 
It would be interesting to see whether our upper and lower bounds on the coding gap for multiple unicasts' completion times can be used to prove \emph{unconditional} lower bounds for other models of computation. 
We already have reason to believe as much;
using techniques developed in this paper, combined with many other ideas, the authors have obtained the first non-trivial universal lower bounds in distributed computation. It would be interesting to see what other implications this work might have to other areas of Theory.
Perhaps most exciting would be to investigate whether completion-time coding gaps imply new results in circuit complexity for depth-bounded circuits.

\paragraph{Acknowledgements}

The authors would like to thank Mohsen Ghaffari for suggesting an improvement to \Cref{wc-gap-ub} which resulted in a coding gap independent of $n$, Anupam Gupta for pointing out a simplification of \Cref{lemMetricAllToAll} and the Lemma's similarity to \cite[Theorem 1]{arora2009expander}, and Paritosh Garg for bringing \cite{braverman2017coding} to our attention.

\appendix
\section*{Appendix}
\section{Completion Time vs. Throughput}\label{sec:completion-vs-throughput}

In this section we argue why network coding upper bounds for makespan imply coding gaps for throughput maximization. We first introduce the standard definitions of the throughput maximization model~\cite{adler2006capacity, braverman2017coding}. The differences between the throughput and completion-time model (see~\Cref{sec:prelims}) are {\color{blue}highlighted in blue}.

\textbf{Throughput maximization model.} A \emph{multiple-unicast instance} $\calM=(G,\calS)$ is defined over a communication network, represented by an undirected graph $G = (V,E,c)$ with capacity $c_e\in \integers_{\ge 1}$ for each edge $e$. The $k\defeq|\calS|$ \emph{sessions} of $\calM$ are denoted by $\calS = \{ (s_i, t_i, d_i) \}_{i=1}^k$. {\color{blue}The (maximum) \emph{throughput} of $\calM$ is the supremum $r > 0$ such that there exists a sufficiently large $b > 0$ where the following problem has a correct protocol.}
Each session consists of source node $s_i$, which wants to transmit a packet to its sink $t_i$, consisting of ${\color{blue}\lceil r \cdot b \cdot d_i \rceil}$ sub-packets (e.g., an element of an underlying field). A \emph{protocol} for a multiple-unicast instance is conducted over finitely-many \emph{synchronous time steps}. Initially, each source $s_i$ knows its packet, consisting of $d_i$ sub-packets.
At any time step, the protocol instructs each node $v$ to send a different packet along each of its edges $e$. The packet contents are computed with some predetermined function of packets received in prior rounds by $v$ or originating at $v$.
{\color{blue}The total number of sub-packets sent through an edge $e$ over the duration of the entire protocol is at most $b\cdot c_e$. We differentiate the maximum throughput achievable by coding and routing protocols as $r^{C}$ and $r^{R}$, respectively. The throughput coding gap is the largest ratio $r^{C} / r^{R}$ that can be achieved for any instance.}

\textbf{Relating completion times and throughput.} The throughput maximization intuitively corresponds to the makespan minimization of an instance with asymptotically-large packet sizes. More formally, we modify a multiple unicast instance $\calM$ by increasing its demands by a factor of $w$ while keeping the capacities the same. This causes the makespan to increase. We argue that the slope of the increase with respect to $w$ is exactly the throughput of $\calM$.

\begin{Def}\label{def:Cw}
  Given a multiple-unicast instance $\calM$ we define $C^{C}(w)$ and $C^{R}(w)$ to be the makespan of the fastest coding and routing protocols when the all demands are multiplied by a common factor $w$.
\end{Def}

\begin{obs}\label{obs:makespan-to-throughput}
  Let $\calM$ be a multiple-unicast instance. The maximum throughput $r$ corresponding to $\calM$ is equal to $\sup_{w \to \infty} w / C(w)$ for both coding and routing. Formally, $r^C = \sup_{w \to \infty} w / C^C(w)$ and $r^R = \sup_{w \to \infty} w / C^R(w)$.
\end{obs}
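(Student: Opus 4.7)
The plan is to prove both inequalities, $r^* \ge \sup_{w\to\infty} w/C(w)$ and $r^* \le \sup_{w\to\infty} w/C(w)$, by direct mutual simulation between throughput-style protocols and makespan-style protocols. Crucially, both simulations preserve whether the protocol uses network coding or is restricted to routing, so the same argument gives both $r^C = \sup_{w\to\infty} w/C^C(w)$ and $r^R = \sup_{w\to\infty} w/C^R(w)$. Throughout, I would let $r^*$ stand for $r^C$ or $r^R$ and $C$ for $C^C$ or $C^R$.

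The first step I would carry out is a small preliminary observation: any throughput protocol with parameter $b$ has makespan at most $b$. Indeed, each round uses every edge at most $c_e$ times (the per-round edge capacity), while the total usage over the whole protocol is at most $b c_e$, so the protocol can run for at most $b$ synchronous rounds before running out of edge budget.

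Next I would prove the direction $r^* \ge \sup_{w\to\infty} w/C(w)$. Fix any $w$ and take an optimal makespan protocol of length $T=C(w)$ for the instance where all demands are multiplied by $w$. Read this protocol as a throughput-style protocol by setting $b \defeq T$: it delivers $w d_i$ sub-packets per session and uses each edge at most $c_e\cdot T = b c_e$ times in total. For any $r$ with $\lceil r b d_i\rceil \le w d_i$ for all $i$, this is a valid throughput-$r$ protocol; taking $r = (w-1)/b$ works. Hence $r^* \ge (w-1)/C(w)$ for every $w$, and letting $w\to\infty$ yields $r^* \ge \sup_{w\to\infty} w/C(w)$.

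Finally I would prove the converse, $r^* \le \sup_{w\to\infty} w/C(w)$. Fix $r < r^*$ and let $b$ be such that some throughput-$r$ protocol exists with parameter $b$. By the preliminary observation, this protocol has makespan at most $b$, and it delivers at least $\lceil r b d_i\rceil \ge \lfloor r b\rfloor d_i$ sub-packets per session. Setting $w \defeq \lfloor r b\rfloor$, the protocol is a valid makespan-$b$ protocol for the $w$-scaled instance, hence $C(w) \le b$ and
\[
\frac{w}{C(w)} \;\ge\; \frac{\lfloor r b\rfloor}{b} \;\longrightarrow\; r \qquad \text{as } b\to\infty.
\]
Therefore $\sup_{w\to\infty} w/C(w) \ge r$, and taking $r\to r^*$ completes the proof. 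The only subtlety is bookkeeping around the ceiling $\lceil r b d_i\rceil$, which contributes vanishing lower-order terms once $b$ (and hence $w$) is large, so it is not a real obstacle.
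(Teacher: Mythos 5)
Your first direction (converting a makespan protocol into a throughput protocol, giving $r^* \ge \sup_{w\to\infty} w/C(w)$) is correct and matches the paper's proof of $r \ge L$: the makespan protocol for the $w$-scaled instance, read with $b \defeq C(w)$, automatically respects the total edge budget since a $T$-round protocol uses at most $c_e \cdot T$ sub-packets per edge.

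The converse direction, however, relies on a preliminary observation that is false, and this is a genuine gap. You claim that any throughput protocol with parameter $b$ has makespan at most $b$, on the grounds that ``each round uses every edge at most $c_e$ times.'' But the throughput model imposes \emph{no per-round constraint} on edges; the only constraint is the \emph{aggregate} one, that the total number of sub-packets sent over edge $e$ across the entire protocol is at most $b\cdot c_e$. The paper even notes explicitly that ``in the throughput setting $T$ has no impact on the quality of $\calT$,'' i.e.\ the number of rounds of a throughput protocol is unconstrained and unrelated to $b$. A throughput protocol could send all $b\cdot c_e$ sub-packets across an edge in a single round, which is not a legal makespan-model transmission, or could dribble them out over far more than $b$ rounds; in neither case can you simply reinterpret it as a makespan-$b$ protocol. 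Consequently the bound $C(w) \le b$ in your converse direction does not follow.

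The paper closes this gap with a pipelining argument: given a throughput protocol $\calT$ (normalized to $b=1$) running in $T$ rounds, schedule $w'$ staggered copies of $\calT$, one starting at each of the first $w'$ rounds. At any given round the aggregate load on edge $e$ is $\sum_t x_t$, where $x_t$ is $\calT$'s load on $e$ in its round $t$; since $\sum_t x_t \le c_e$ by the total-budget constraint, the pipelined protocol respects the per-round capacity $c_e$ of the makespan model. This delivers roughly $w'\cdot r\cdot d_i$ sub-packets in $w' + T$ rounds, and letting $w\to\infty$ makes the additive $T$ vanish relative to $w'$, giving $\sup_w w/C(w) \ge r - o(1)$. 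Your proof would need this (or an equivalent smoothing argument that accounts for the number of rounds of the throughput protocol); without it, the converse inequality is not established.
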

\begin{proof}
  We drop the R/C superscripts since the proof holds for both without modification. Let $L \defeq \sup_{w \to \infty} w / C(w)$. We first argue that $L \ge r$, i.e., we can convert a throughput protocol to a makespan-bounded one. For simplicity, we will assume that $b = 1$ ($b$ from the throughput definition); when this is not the case one needs to appropriately re-scale the sub-packets for the completion-time protocol.

  Let $\calT$ be a protocol of throughput at least $r - o(1)$ and let $T$ be the total number of rounds $\calT$ uses (note that in the throughput setting $T$ has no impact on the quality of $\calT$). Let $w \in \mathbb{Z}$ be a sufficiently large number. We use pipelining by scheduling $w' \defeq w / (r - o(1))$ independent copies of $\calT$: the first one starting at time $1$, second at time $2$, ..., last one at time $w'$. Each copy operates on a separate set of sub-packets, with the pipelined protocol being able to transmit $(r - o(1)) \cdot d_i \cdot w' = d_i \cdot w$ sub-packets across the network (in line with \Cref{def:Cw}) in at most $w' + T$ rounds. Note that $\calT$ sends at most $c_e$ sub-packets over an edge $e$ over its entire execution, hence the pipelined version of $\calT$ never sends more than $c_e$ sub-packets during any one round. In other words, we have that $C(w) \le w' + T$. Letting $w \to \infty$ (which implies $w' \to \infty$), we have that
  \begin{align*}
    L \ge \frac{w}{C(w)} \ge \frac{w}{w' + T} = (r - o(1))\frac{w'}{w' + T} = r - o(1).
  \end{align*}

  We now argue the converse $r \ge L$, i.e., we can convert a completion-time protocol into a throughput protocol with the appropriate rate. The result essentially follows by definition. By assumption, for some sufficiently large $w > 0$ there exists a protocol with makespan at most $C(w) \le w / (L - o(1))$. The protocol sends a total of at most $C(w) c_e$ sub-packets over an edge $e$. Furthermore, by construction of $C(w)$, each source-sink pair successfully transmits $w \cdot d_i$ sub-packets. By noting that $w \cdot d_i = (L - o(1)) C(w) d_i$, we conclude that by using $b \defeq C(w)$ we get a protocol with rate $L - o(1)$.
\end{proof}

\begin{cor}
  Suppose that the makespan coding gap is at most $\alpha$ (over all instances). Then the throughput coding gap is at most $\alpha$.
\end{cor}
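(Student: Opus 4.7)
The plan is to combine the assumed makespan coding gap bound with the characterization of throughput as a limit of (inverse) makespans given by Observation \ref{obs:makespan-to-throughput}. Fix an arbitrary multiple-unicast instance $\calM$. For any positive scaling factor $w$, consider the ``scaled'' instance $\calM_w$ obtained from $\calM$ by multiplying every demand by $w$. Note that $\calM_w$ is itself a valid multiple-unicast instance, so the assumed bound on the makespan coding gap applies to it. By definition of $C^R(w)$ and $C^C(w)$, the makespan coding gap bound for $\calM_w$ yields
\[
C^R(w) \le \alpha \cdot C^C(w).
\]

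Dividing both sides into $w$ and reversing the inequality, we obtain $w/C^R(w) \ge (1/\alpha) \cdot w/C^C(w)$ for every $w$. Taking the supremum as $w \to \infty$ on both sides, and invoking Observation \ref{obs:makespan-to-throughput} (which identifies these suprema with $r^R$ and $r^C$, respectively), gives
\[
r^R \;=\; \sup_{w \to \infty} \frac{w}{C^R(w)} \;\ge\; \frac{1}{\alpha}\cdot \sup_{w \to \infty} \frac{w}{C^C(w)} \;=\; \frac{r^C}{\alpha}.
\]
Rearranging yields $r^C / r^R \le \alpha$ for the instance $\calM$. Since $\calM$ was arbitrary, this bound holds over all instances, establishing the claimed upper bound on the throughput coding gap.

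There is no real obstacle here: the entire content of the claim is packaged into Observation \ref{obs:makespan-to-throughput}, which lets us translate an instance-wise makespan comparison into an instance-wise throughput comparison by applying it to the family $\{\calM_w\}_{w \ge 1}$. The only minor subtlety one should sanity-check is that the makespan coding gap hypothesis is genuinely instance-wise (so it applies separately to each $\calM_w$), which is indeed how it is defined in \Cref{def:coding-gap}; in particular, it is not a bound that only holds in the limit or only for the original instance, so taking $\sup_{w\to\infty}$ after applying the hypothesis poses no issue.
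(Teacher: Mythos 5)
Your proof is correct and takes essentially the same route as the paper: apply the makespan coding-gap hypothesis to each scaled instance $\calM_w$ to get $C^R(w) \le \alpha\, C^C(w)$, then invoke Observation~\ref{obs:makespan-to-throughput} to pass to the limit. The paper phrases this in terms of explicit protocols $\calP_1, \calP_2$ and an $o(1)$ slack, but the underlying argument is identical; your version, working directly with the optimal values $C^R(w)$ and $C^C(w)$ and the pointwise inequality before taking suprema, is if anything a bit cleaner.
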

\begin{proof}
  Consider some multiple unicast instance $\calM$, with coding throughput $r$.
%
  By \Cref{obs:makespan-to-throughput}, for sufficiently large $w$ there is a coding protocol $\calP_1$ satisfying $w / C^C(w) \ge r - o(1)$, i.e., $C^C(w) \le w / (r - o(1))$. By the makespan coding gap assumption, there exists a routing protocol $\calP_2$ implying that $C^R(w) \le \alpha \cdot w / (r - o(1))$. Furthermore, following the proof of \Cref{obs:makespan-to-throughput}, protocol $\calP_2$ implies a routing throughput of $r'$ for the original instance $\calM$, satisfying
  \begin{align*}
    r' \ge w / C^R(w) \ge (r - o(1)) / \alpha = r / \alpha - o(1).
  \end{align*}
  In other words, $r / r' \ge \alpha + o(1)$ and we are done.
\end{proof}


\section{Network Coding Model for Completion Time}\label{sec:network-coding-model}

In this section we formalize the $k$-session unicast communication problem and the notion of completion time for it. We note that our model is not new---e.g., it is equivalent to the models of \citet{chekuri2015delay, wang2016sending} that study delay in communication networks.

The input to a \emph{$k$-session unicast problem} $(G, \calS)$ consists of a graph $G = (V, E, c)$, where edges have capacities $c : E \to \mathbb{R}_{\ge 0}$, and a set of $k$ sessions $\calS = \{(s_i, t_i, d_i)\}_{i=1}^k$. Each triplet $(s_i, t_i, d_i)$ corresponds to the source $s_i \in V$, sink $t_i \in V$ and the demand $d_i \in \mathbb{R}_{\ge 0}$ of session $i$. The graph $G$ can be either directed or undirected, where in the latter case we model an undirected edge $e$ as two directed edges $\vec{e}, \cev{e}$ where both of them have equal capacity $c(\vec{e}) = c(\cev{e}) = c_e$.\footnote{Papers such as Adler et al.~\cite{adler2006capacity} often impose an alternative condition $c(\vec{e}) + c(\cev{e}) = c(e)$, which would make our proofs slightly heavier on notation. However, their convention can only impact the results up to a factor of 2, which we typically ignore in this paper.}

Each source $s_i$ is privy to an input message $m_i \in M_i$ generated by an arbitrary stochastic source with entropy at least $d_i$, hence the entropy of the random variable $m_i$ is $d_i$. The sources corresponding to different sessions are independent.

A $T$-round \emph{network coding computation} consists of a set of $|E| \times T$ coding functions $\{f_{\vec{e}, r} : M \to \Gamma\}_{\vec{e} \in E, 1 \le r \le T}$, where $\Gamma$ is some arbitrary alphabet and $M \defeq \prod_i M_i$. These functions satisfy the following properties:
\begin{itemize}
\item The entropy of any coding function $f_{\vec{e}, r}$ never exceeds the edge capacity $c(\vec{e})$, i.e., $H(f_{\vec{e}, r}) \le c(\vec{e})$ for all $\vec{e} \in E, 1 \le r \le T$.
\item For each directed edge $\vec{e} = (u, v) \in E$ and round $1 \le r \le T$ the function $f_{\vec{e}, r}$ is computable from communication history received strictly before round $r$ at node $u$. In other words, let the communication history $Y_{u, r}$ be defined as $\{ m_i \mid i \in [k], s_i = u \} \cup \{ f_{(x, y), r'} \mid y = u \text{ and } r' < r \}$, then $H(f_{(u, v), r} | Y_{u, r}) = 0$.
\item The completion times of a network coding computation are $(T_1, T_2, \ldots, T_k) \in \mathbb{Z}_{\ge 0}^k$ when the following holds. For every session $i$, the message $m_i$ of the session $(s_i, t_i, d_i)$ must be computable from the sink $t_i$'s history after $T_i$ rounds are executed, i.e., $H(m_i | Y_{t_i, T_i+1}) = 0$.
\end{itemize}

\textbf{Remark:} The above ``bare-bones'' formalization is sufficient for all of our results to hold. However, such a formalization can be unwieldy since a complete instance description would also need to specify a stochastic distribution corresponding to each source $s_i$. A standard way of avoiding this issue is to simply assume the sources generate a uniformly random binary string of length $d_i$ (forcing $d_i$ to be an integer). Without going into too much detail, we mention this assumption can be made without loss of generality if we allow for (1) an arbitrarily small decoding error $\eps > 0$, (2) slightly perturbing the edge capacities $c_e$ and source entropies $d_i$ by $\eps$, and (3) scaling-up both $c_e$'s and $d_i$'s by a common constant $b > 0$; this approach is standard in the literature (e.g., see~\cite{adler2006capacity, braverman2017coding}).


\section{Deferred Proofs of \Cref{sec:upper-bounds}}\label{sec:upper-bounds-appendix}

In this section we provide proofs deferred from \Cref{sec:upper-bounds}, starting with the proof of \Cref{lem:primal-to-routing-protocol}, restated here for ease of reference.

\primalroutingprotocol*
To prove the above, we rely on the celebrated $O(\textrm{congestion + dilation})$ packet scheduling theorem of \citet{leighton1994packet}. In particular, 
we use the solution to \primal$(T)$ to obtain a collection of short paths with bounded congestion (i.e., bounded maximum number of paths any given edge belongs to). We then route along these paths in time proportional to these paths' maximum length and congestion.
The issue is that the feasible LP solution provides fractional paths, hence requiring us to round the LP solution. Independent rounding would result in paths of length $T$ and congestion $T/z + O(\log n)$ (with high probability). To avoid this additive dependence on $n$, we rely on the following theorem of \citet{srinivasan2001constant}.

\begin{lem}[\cite{srinivasan2001constant}, Theorem 2.4, paraphrased]\label{rounding-routing}
	Let $\calM$ be a multiple-unicast instance and for each $i \in [k]$ let $\calD_i$ be a distribution over $s_i \pathto t_i$ paths of hop-length at most $L$. Suppose that the product distribution $\prod \calD_i$ has expected congestion for each edge at most $L$. Then there exists a sample $\omega \in \prod \calD_i$ (i.e., a choice of a from $\calD_i$ between each $s_i \pathto t_i$) with (maximum) congestion $O(L)$.
\end{lem}

Using the above lemma to round the LP and  using \citet{leighton1994packet} path routing to route along the obtained paths yields \Cref{lem:primal-to-routing-protocol}.
\begin{proof}[Proof of \Cref{lem:primal-to-routing-protocol}]
	Consider an optimal solution to this \primal($T$). 
	Clearly, picking for each pair $(s_i,t_i)$ some $d_i$ paths in $P_i(T)$ with each $p\in \calP_i(T)$ picked with probability $f_i(p)\cdot d_i/\sum_{p\in \calP_i(T)} f_i(p) \leq f_i(p)/z$ yields an expected congestion at most $T\cdot c_e / z$ for each edge $e$. That is, thinking of $G$ as a multigraph with $c_e$ copies per edge, each such parallel edge has congestion $T/z$. On the other hand, each such path has length at most $T\leq T/z$ (since $z\leq 1$). Therefore, by \Cref{rounding-routing}, there exist choices of paths for each pair of (maximum) congestion and hop-bound (i.e., dilation) at most $O(T/z)$. But then, using $O(\mbox{congestion + dilation)}$ routing \cite{leighton1994packet} this implies an integral routing protocol with makespan $O(T/z)$, as claimed.
\end{proof}

Here we prove \Cref{lem:bucketing-lemma}, restated here for ease of reference.

\bucketing*
\begin{proof}
	Suppose (without loss of generality) that $h_1 \ge h_2 \ge \ldots h_k$ and assume for the sake of contradiction that none of the sets $[1], [2], \ldots, [k]$ satisfy the condition. In other words, if we let $d([j]) \defeq \sum_{i = 1}^j d_j$, then $h_i < \frac{1}{\alpha} \cdot \frac{1}{d([i])}$ for all $i \in [k]$. Multiplying both sides by $d_i$ and summing them up, we get that $1 \le \sum_{i=1}^k d_i h_i < \frac{1}{\alpha} \sum_{i=1}^k \frac{d_i}{d([i])}$. Reordering terms, this implies $\sum_{i=1}^k \frac{d_i}{d([i])} > \alpha$.
	
	Define $f(x)$ as $1/d_1$ on $[0, d_1)$; $1/(d_1+d_2)$ on $[d_1, d_1+d_2)$; ...; $1/d([i])$ on $[d([i-1]), d([i]))$ for $i \in [k]$. Now we have
	\begin{align*}
	\int_0^{d([k])} f(x) = \frac{d_1}{d_1} + \frac{d_2}{d_1 + d_2} + \frac{d_3}{d_1 + d_2 + d_3} + \ldots + \frac{d_k}{d([k])} .
	\end{align*}
	However, since $f(x) \le 1/x$
	\begin{align*}
	\int_0^{d([k])} f(x) & = \int_0^{d_1} f(x)\, dx + \int_{d_1}^{d([k])} f(x)\, dx \\
	& \le 1 + \int_{d_1}^{d([k])} \frac{1}{x}\, dx = 1 + \ln \frac{d([k])}{d_1} .
	\end{align*}
	Hence by setting $\alpha \defeq 1 + \ln \frac{d([k])}{d_1}$ we reach a contradiction and finish the proof.
\end{proof}


\section{Deferred Proofs of \Cref{sec:lower-bounds}}\label{sec:lower-bounds-appendix}

Here we provide the deferred proofs of lemmas of \Cref{sec:lower-bounds}, restated below for ease of reference.

\tensoredcodingtime*
\begin{proof}
	Suppose there exists a network coding protocol with makespan $t_i \le a_i$ that solves $(G_i, \calS_i)$ for $i \in \{1, 2\}$. Functionally, each round in the outer instance $(G_1, \calS_1)$ consists of transmitting $c_e$ bits of data from $u$ to $v$ for all arcs $(u, v)$ where $\{u, v\} \in E(G_1)$. This is achieved by running the full $t_2$ rounds of the inner instance protocol over all copies of the instances which pushes $d_i$ bits from $s_i$ to $t_i$ for all $(s_i, t_i, d_i)$ and all copies of the inner instance. The reason why such inner protocol pushes the information across each arc $(u, v)$ is because $u$ is merged with some $s_i$, $v$ is merged with $t_i$, and with $d_i = c_e$ for some $(s_i, t_i, d_i) \in \calS_2$ and some copy of the inner instance. In conclusion, by running $t_1$ outer rounds, each consisting of $t_2$ inner rounds, we get a $t_1 t_2 \le a_1 a_2$ round protocol for the product instance.
\end{proof}

\tensorparams*
\begin{proof}[Proof of \Cref{lem:one-tensor-params}]
	First, the set of terminals in the product instance $\inst_+$ is disjoint, as distinct terminals of copies of the outer instance $\inst_1$ have their edges associated with distinct terminals source-sink pairs of the inner instance $\inst_2$. Consequently, no two terminals of the outer instance are associated with the same node of the same copy of an inner instance. The capacities and demands of $\inst_+$ are one by definition. We now turn to bounding the gap instance's parameters.
	
	Parameters $a_+$ and $b_+$ are directly argued by \Cref{lem:tensoring-coding} and \Cref{lem:tensoring-routing}. Furthermore, $f_+, k_+, m_+$ are obtained by direct counting, as follows. 
	
	Recall that the cut edges of the outer instance get replaced with a path of length $a_2$. Since there are $n_1$ copies of outer instances, each having $f_1$ cut edges, this contributes $a_2 n_1 f_1$ edges to $m_+$. The non-cut edges of the outer instance get deleted and serve as a merging directive, hence they do not contribute to $m_+$. Finally, each edge of the inner instance gets copied into $\inst_+$, contributing $n_2 m_2$ as there are $n_2$ copies of the inner instance.
	
	For $r_+$ we need to show it is a lower bound on $k_+/f_+$. We note that $|E(B)| = n_1 \cdot 2(m_1 - f_1) = n_2 k_2$ and proceed by direct calculation:
	\begin{align*}
	\frac{k_+}{f_+} & = \frac{n_1 k_1}{n_1 f_1 + n_2 f_2} = \frac{k_1}{f_1} \cdot \frac{1}{1 + \frac{n_2}{n_1} \frac{f_2}{f_1}} \ge \frac{k_1}{f_1} \cdot \frac{1}{1 + \frac{2(m_1 - f_1)}{k_2} \frac{f_2}{f_1}} \\
	& \ge \frac{k_1}{f_1} \cdot \frac{1}{1 + \frac{2m_1}{f_1} \frac{f_2}{k_2}} \ge \frac{k_1}{f_1} \cdot \frac{1}{1 + \frac{2 u_1}{r_2}} = r_1 \cdot \frac{1}{1 + \frac{2 u_1}{r_2}} = r_+.
	\end{align*}
	
	For $u_+$ we need to show it is an upper bound on $m_+/f_+$. Note that $k_2 \le m_2$ since the set of terminals is disjoint and the graph is connected.
	\begin{align*}
	\frac{m_+}{f_+} & = \frac{n_2 m_2 + a_2 n_1 f_1}{n_2 f_2 + n_1 f_1} = \frac{m_2}{f_2} \cdot \frac{1 + a_2 \frac{n_1}{n_2} \frac{f_1}{m_2}}{1 + \frac{n_1}{n_2} \frac{f_1}{f_2}} \le u_2 \cdot \frac{1 + a_2 \frac{k_2}{2(m_1-f_1)} \frac{f_1}{m_2}}{1 + \frac{k_2}{2(m_1-f_1)} \frac{f_1}{f_2}}\\
	& \le u_2 \cdot \frac{1 + a_2 \frac{k_2}{2(m_1/f_1 - 1)} \frac{1}{m_2}}{1 + \frac{f_1}{2 m_1} \frac{k_2}{f_2}} \le u_2 \cdot \frac{1 + a_2 \cdot \frac{1}{2} \cdot 1}{1 + \frac{1}{2} \frac{r_2}{u_1}}
	= u_+.
	\end{align*}
	Here the last inequality relies on $m_1/f_1\geq 2$ and on $k_2\leq m_2$, which follows from the set of terminals being disjoint and the graph $G_2$ being connected.
	
	For the final technical conditions, note that $a_+ \ge 2$ is clear from $a_+ = a_1 a_2 \ge 4 \ge 2$. Finally, $\frac{m_+}{f_+} \ge 2$ follows from the following.
	\begin{align*}
	\frac{m_+}{f_+} & = \frac{a_2 n_1 f_1 + n_2 m_2}{n_1 f_1 + n_2 f_2} = a_2 \frac{n_1 f_1}{n_1 f_1 + n_2 f_2} + \frac{m_2}{f_2} \frac{n_2 f_2}{n_1 f_1 + n_2 f_2} \ge 2  \left(\frac{n_1 f_1}{n_1 f_1 + n_2 f_2} + \frac{n_2 f_2}{n_1 f_1 + n_2 f_2}\right) = 2.\qedhere
	\end{align*}
\end{proof}

\subsection{Upper Bounding $m_{i,r}$}

For readability, we sometimes write $u(i, r)$ instead of $u_{i, r}$ and similarly for $m(i, r)$. Also, we note that the technical conditions $a_{i, r} \ge 2$ and $\frac{m_{i, r}}{f_{i, r}} \ge 2$, which clearly hold for $i=0$, hold for all $i>0$, due to \Cref{lem:one-tensor-params}. Finally, we note that by \Cref{lem:one-tensor-params}, since $r_2=u_1$ and $a_2\geq 1$, we have that $u_+ \geq u_2$ and so for all gap instances in the family we have $u_{i,r}\geq u_{i-1,u(i-1,3r)} \geq 5$.

\begin{lem}\label{lem:iterated-tensoring-invariants}
	The parameter of $\inst(i, r)$ for any $i \ge 0, r \ge 5$ satisfy the following. 
	\begin{enumerate}
		[(i)]
		\item \label{bound:r} $\frac{k_{i, r}}{f_{i, r}} \ge r$,
		\item \label{bound:u} $u_{i+1, r} \le 3^{2^i} \cdot u_{i, u(i, 3r)}$ and
		\item \label{bound:m} $\log m(i+1, r) \le O(5^{2^{i+1}}) \cdot \log (m_{i, 3r} \cdot m_{i, u(i, 3r)})$.
	\end{enumerate}
\end{lem}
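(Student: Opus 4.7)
All three claims will be proved simultaneously by induction on $i$, with the computations driven directly by the parameter-update formulas of \Cref{lem:one-tensor-params} and the instantiation $\inst(i+1,r)=T(\inst_1,\inst_2)$ with $\inst_1 \defeq \inst(i,3r)$ and $\inst_2\defeq \inst(i,u_{i,3r})$. The key numerical facts I will repeatedly invoke are $a_{i,r}=3^{2^i}$ and $b_{i,r}=5^{2^i}$ from \Cref{tensoring-routing-coding}, the observation that by construction $r_2 = u_{i,3r} = u_1$ (this is precisely how the inner-instance $r$-parameter was chosen), and the inductive hypothesis $k_1/f_1 \geq r_1 = 3r$, $k_2/f_2 \geq r_2 = u_1$.

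\textbf{Claim (i).} For $i=0$ we have $k_{0,r}/f_{0,r}=r/1 = r$ by \Cref{base-instance}. For the inductive step, plug into $r_+ = r_1/(1+2u_1/r_2)$ from \Cref{lem:one-tensor-params}. Since $r_2 = u_1$ the denominator is exactly $3$, giving $r_+ = r_1/3 = 3r/3 = r$. Hence $k_{i+1,r}/f_{i+1,r} \geq r_+ = r$.

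\textbf{Claim (ii).} Plug into $u_+ = u_2 \cdot (1+a_2/2)/(1+r_2/(2u_1))$. Using $r_2 = u_1$ the denominator equals $3/2$, and $a_2 = a_{i,u(i,3r)} = 3^{2^i}$, so
\[
u_{i+1,r}\;\leq\; u_{i,u(i,3r)}\cdot \frac{2+3^{2^i}}{3}\;\leq\; 3^{2^i}\cdot u_{i,u(i,3r)},
\]
where the last step uses $2+3^{2^i}\leq 3\cdot 3^{2^i}$, equivalent to $3^{2^i}\geq 1$.

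\textbf{Claim (iii).} By \Cref{lem:one-tensor-params}, $m_+ = a_2 n_1 f_1 + n_2 m_2 \leq 2\cdot 3^{2^i}\cdot \max(n_1 m_1,\,n_2 m_2)$. The colored bipartite graph has required girth $2b_1 b_2 = 2\cdot 5^{2^{i+1}}$, so in the notation of \Cref{colored-bip} we take $g = 5^{2^{i+1}}$, and obtain
\[
n_1,n_2\;\leq\;\bigl(9\cdot 2(m_1-f_1)\cdot k_2\bigr)^{g+3}\;\leq\;(18\,m_1 k_2)^{5^{2^{i+1}}+3}.
\]
Taking logarithms and using $k_2\leq m_2$ (disjoint terminals in a connected graph),
\[
\log m_+ \;\leq\; 2^i\log 3 + O(1) + \log m_1 + \log m_2 + \bigl(5^{2^{i+1}}+3\bigr)\cdot \log(18\,m_1 m_2),
\]
which is $O(5^{2^{i+1}})\cdot \log(m_{i,3r}\cdot m_{i,u(i,3r)})$ as required, absorbing the $2^i\log 3$ and $\log m_j$ terms into the dominant $5^{2^{i+1}}$-factor.

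\textbf{Main obstacle.} The computations themselves are not the difficult part; the main care is needed to verify that the recursive choice $r_2 \defeq u_1$ is exactly what forces the denominator $1+r_2/(2u_1)$ in the $u_+$ formula (and the $1+2u_1/r_2$ in the $r_+$ formula) to collapse to a constant, and that the induction can be set up to propagate all three invariants together — (i) ensures that in the next level $\inst_1=\inst(i,3r)$ actually has $r$-parameter at least $3r$ so \Cref{lem:one-tensor-params} is applicable with the claimed $r_1$; (ii) feeds into the $a_2$-dependence of $u_+$ and thereby into the size bound; and (iii) propagates the doubly-exponential growth rate of the size. No single step is deep, but the simultaneous tracking of the three invariants and the careful matching of the $r_2=u_1$ construction to the formulas of \Cref{lem:one-tensor-params} is what makes the argument go through.
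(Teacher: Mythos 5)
Your proof is correct and follows essentially the same route as the paper: all three claims drop out of the parameter-update formulas in \Cref{lem:one-tensor-params} together with the bound of \Cref{colored-bip}, using the key structural fact that the recursion is set up so that $r_2 = u_1$, which collapses the denominators to constants. The only minor presentational note is that both you and the paper implicitly rely on the easy invariant $r_{j,s}=s$ (needed to justify $r_2 = r_{i,u_{i,3r}} = u_{i,3r}$), which is worth stating explicitly; otherwise the bookkeeping matches the paper's proof step for step, including the $m_+ \le O(a_2)\cdot\max(n_1,n_2)\cdot m_1 m_2$ reduction and taking logs.
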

\begin{proof}
	Claim \ref{bound:r} follows from \Cref{lem:one-tensor-params}, as follows.
	\begin{align*}
	\frac{k_{i+1,r}}{f_{i+1,r}} \ge r_{i+1, r} = r_{i, 3r} \cdot \frac{1}{1 + 2u_{i, 3r} / r_{i, u(i, 3r)}} \ge 3r \cdot \frac{1}{1 + 2u_{i, 3r} / u_{i, 3r}} = 3r \cdot \frac{1}{3} = r.
	\end{align*}
	
	We now prove claims \ref{bound:u} and \ref{bound:m}. Fix $i, r$ and define $\inst_1 \defeq \inst(i, 3r)$ (with parameters $(a_{1}, \ldots, u_{1})$) and $\inst_2 \defeq \inst(i, u_{i, r})$ (with parameters $(a_{2}, \ldots, u_{2})$). We have $u(i+1, r) = u_2 \frac{1 + a_2/2}{1 + r_2 / (2 u_1)} \le u_2 \frac{1 + a_2/2}{1 + 1/2} \le u_2 \cdot a_2$ (\Cref{lem:one-tensor-params}), with $a_2 = a_{i, u(i,3r)} = 3^{2^i}$ and $u_2 = u(i, u(i, 3r))$ from the iterated tensoring process. Therefore, we conclude that $u(i+1, r) \leq 3^{2^i} \cdot u_{i, u(i, 3r)}$, as claimed.
	
	We now prove Claim \ref{bound:m}. The corresponding colored bipartite graph $B \in \calB(n_1, n_2, 2(m_{1} - f_{1}), k_{2}, 2 b_{1} b_{2} )$ used to produce the product $\inst_{i, r}$ has $\max(n_1, n_2) \le ( 2(m_{1} - f_{1}) k_{2})^{O(b_{1} b_{2})}$, by \Cref{colored-bip}. Therefore, as $k_2\leq m_2$, we have that $\max(n_1, n_2) \le (m_{1} \cdot m_{2})^{O(b_{1} b_{2})}$. This implies the following recurrence for $m_{i,r}$.
	\begin{align*}
	m_{i+1, r} & = a_2 n_1 f_1 + n_2 m_2 \le a_2 \max(n_1, n_2) m_1 m_2.
	\end{align*}
	Taking out logs, we obtain the desired bound.
	\begin{align*}
	\log m_{i+1, r} & \le \log a_2 + \log \max(n_1, n_2) + \log m_1 m_2 \\
	& = O(2^i) + O(b_1 b_2) \log (m_1 m_2) + \log (m_1 m_2) \\
	& = O(2^i) + O(5^{2^i}) \log (m_1 m_2) \\
	& = O(5^{2^i}) \cdot \log (m_{i, 3r} \cdot m_{i, u(i, 3r)}).\qedhere
	\end{align*}
\end{proof}

Given \Cref{lem:iterated-tensoring-invariants} we obtain the bound on $u_{i,r}$ in terms of $i$ and $r$. 
\begin{restatable}{lem}{ubound}\label{lem:iterated-u-bound}
	We have that $\log u_{i, r} \le 2^{O(2^i)} \log r$ for all $i \ge 0, r \ge 5$.
\end{restatable}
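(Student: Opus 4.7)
The plan is to prove the bound by induction on $i$, using the recursive inequality $u_{i+1,r} \le 3^{2^i} \cdot u_{i, u(i,3r)}$ from Claim \ref{bound:u} of Lemma \ref{lem:iterated-tensoring-invariants}. I will track the statement in the form $\log u_{i,r} \le A_i \cdot \log r$ with $A_i = 2^{c \cdot 2^i}$ for a suitable universal constant $c>0$, and show that $A_{i+1} = O(A_i^2)$, which clearly satisfies the same doubly-exponential form.

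For the base case $i=0$, the instance $\inst(0,r)$ is the one from Fact \ref{base-instance}, for which $u_{0,r} = \Theta(r^2)$, and hence $\log u_{0,r} \le A_0 \log r$ for any sufficiently large constant $A_0$. This gives the base of the induction.

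For the inductive step, apply Claim \ref{bound:u} and take logarithms to obtain
\begin{align*}
\log u_{i+1, r} \;\le\; 2^i \log 3 \;+\; \log u_{i,\, u(i,3r)}.
\end{align*}
By the inductive hypothesis applied once at argument $u(i,3r)$, $\log u_{i,\,u(i,3r)} \le A_i \cdot \log u_{i,3r}$. Applying the inductive hypothesis a second time at $3r$, $\log u_{i,3r} \le A_i \log(3r) \le 2 A_i \log r$ (using $r \ge 5$ so $\log(3r) \le 2 \log r$). Combining,
\begin{align*}
\log u_{i+1,r} \;\le\; 2^i \log 3 \;+\; 2 A_i^2 \log r \;=\; O(A_i^2)\cdot \log r,
\end{align*}
since $2^i \log 3 \le A_i^2 \log r$ for $r \ge 5$ and $A_i \ge 2^i$. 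Thus $A_{i+1} \le K A_i^2$ for some absolute constant $K$, and unrolling this recurrence yields $A_i \le K^{2^i - 1} A_0^{2^i} = 2^{O(2^i)}$, as required.

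The only mild subtlety, and the one place to tread carefully, is the double application of the inductive hypothesis: the key idea is that $u(i,3r)$ may be astronomically larger than $r$, so a naive single use of the hypothesis would only yield a bound in terms of $u(i,3r)$ rather than in $r$. The doubly-exponential dependence $2^{O(2^i)}$ comes precisely from the squaring $A_{i+1} = O(A_i^2)$ produced by chaining the hypothesis twice; no additional estimates are needed beyond elementary manipulations with $\log(3r) = O(\log r)$ for $r \ge 5$ and the observation that the additive $2^i \log 3$ term is absorbed by the dominant $A_i^2 \log r$ term.
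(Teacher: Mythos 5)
Your proof is correct and follows essentially the same route as the paper: both arguments apply the inductive hypothesis twice (once at $u(i,3r)$ and once at $3r$) to obtain the squaring recurrence $A_{i+1} = O(A_i^2)$, then absorb the additive $O(2^i)$ term and unroll to $2^{O(2^i)}$; the paper simply instantiates $A_i$ explicitly as $\frac{1}{c}(c^2)^{2^i}$ rather than leaving it abstract.
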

	\begin{proof}
		By \Cref{lem:iterated-tensoring-invariants}, we have the recursion $u(i+1, r) \le 3^{2^i} \cdot u(i, u(i, 3r))$ with initial condition $u(0, r) = O(r^2)$, by \Cref{base-instance}. Taking out logs, we obtain $\log u(i+1, r) \le O(2^i) + \log u(i, u(i, 3r))$ and $\log u(0,r) = O(\log r)$.
		We prove via induction that $\log u(i, r) \le \frac{1}{c} (c^2)^{2^i} \cdot \log r$ for some sufficiently large $c > 0$. In the base case $\log u(0, r) = O(\log r) \le \frac{1}{c} (c^2) \log r = c \log r$. For the inductive step we have:
		\begin{align*}
		\log u(i+1, r) & \le O(2^i) + \log u(i, u(i, 3r)) \\
		& \le O(2^i) + \frac{1}{c} (c^2)^{2^{i}} \log u(i, 3r) \\
		& \le O(2^i) + \frac{1}{c} (c^2)^{2^{i}} \frac{1}{c} c^{2^{i}} \log 3r \\
		& \le O(2^i) + \frac{1}{c^2} (c^2)^{2^{i+1}} \log 3r \\
		& \le \frac{1}{c} (c^2)^{2^{i+1}} \log r,
		\end{align*}
		where the last inequality holds for $i\geq 1$ and $r\geq 5$ and a sufficiently large $c > 0$.
	\end{proof}

	Plugging in the bound of \Cref{lem:iterated-u-bound} and \Cref{lem:iterated-tensoring-invariants} we can prove inductively the upper bound on the number of edges of $\inst_{i,r}$ in terms of $i$ and $r$ given by \Cref{lem:iterated-m-bound}, restated here.
	\mbound*
	\begin{proof}
		By \Cref{lem:iterated-tensoring-invariants}, we have the recursion $\log m(i+1, r) \le O(5^{2^{i+1}}) \cdot \log (m_{i, 3r} \cdot m_{i, u(i, 3r)})$ with initial condition $\log m(0, r) = O(\log r)$, by \Cref{base-instance}. We prove via induction that $\log m_{i, r} \le c^{2^i} \log r$ for a sufficiently large universal constant $c > 0$. In the base case $\log m_{0, r} \le O(\log r) \le c \log r = c^{2^0} \log r$. For the inductive step, using \Cref{lem:iterated-u-bound} to bound $\log u(i,3r)$, we have:
		\begin{align*}
		\log m_{i+1, r} & \le O(5^{2^{i+1}}) \cdot (\log m_{i, 3r} + \log  m_{i, u(i, 3r)}) \\
		& \le O(5^{2^{i+1}}) \cdot \left(c^{2^i} \log 3r + c^{2^i} \log u(i, 3r) \right) \\
		& \le O(5^{2^{i+1}}) \cdot \left(c^{2^i} \log 3r + c^{2^i} 2^{O(2^i)} \log 3r \right) \\
		& = c^{2^i}\cdot O(5^{2^{i+1}})\cdot 2^{O(2^i)} \log 3r \\
		& \le c^{2^i+1} \log r,
		\end{align*}
		where the last inequality holds for $i\geq 1$ and $r\geq 5$ and a sufficiently large $c > 0$.
	\end{proof}


\bibliographystyle{acmsmall}
\bibliography{abb,network-coding,ultimate}

\end{document}